\setlist{  
  listparindent=\parindent,
  parsep=0pt,
}
\theoremstyle{plain}
\newtheorem{thm}{Theorem}[section]
\newtheorem{prop}[thm]{Proposition}
\newtheorem{lemma}[thm]{Lemma}
\theoremstyle{definition}
\newtheorem{mydef}[thm]{Definition}
\newtheorem{remark}[thm]{Remark}
\numberwithin{equation}{section} 
\DeclarePairedDelimiter\ipp{\langle}{\rangle}
\DeclarePairedDelimiter{\paren}{\lparen}{\rparen}
\DeclarePairedDelimiter{\jp}{\langle}{\rangle}
\DeclareMathOperator{\sgn}{sgn}
\newcommand{\p}{{\partial}}
\renewcommand{\d}{\delta}
\newcommand{\R}{{\mathbb{R}}}
\newcommand{\C}{{\mathbb{C}}}
\newcommand{\N}{{\mathbb{N}}}
\newcommand{\Z}{{\mathbb{Z}}}
\newcommand{\Sc}{{\mathcal{S}}}
\newcommand{\tl}{\tilde}
\newcommand{\D}{\Delta}
\newcommand{\ph}{\phantom{=}}
\newcommand{\nn}{\nonumber}
\newcommand{\ol}{\overline}
\newcommand{\ul}{\underline}
\newcommand{\ux}{\underline{x}}
\newcommand{\ua}{\underline{\alpha}}
\newcommand{\ep}{\epsilon}
\newcommand{\vep}{\varepsilon}
\newcommand{\al}{\alpha}
\newcommand{\id}{\mathbf{1}}
\newcommand{\wh}{\widehat}
\newcommand{\uD}{\underline{\Delta}}
\newcommand{\uxi}{\underline{\xi}}
\newcommand{\F}{{\mathcal{F}}}
\let\oldtocsection=\tocsection
\let\oldtocsubsection=\tocsubsection
\let\oldtocsubsubsection=\tocsubsubsection
\renewcommand{\tocsection}[2]{\hspace{0em}\oldtocsection{#1}{#2}}
\renewcommand{\tocsubsection}[2]{\hspace{1em}\oldtocsubsection{#1}{#2}}
\renewcommand{\tocsubsubsection}[2]{\hspace{2em}\oldtocsubsubsection{#1}{#2}}
\title[The Mean-Field Limit of the Lieb-Liniger Model]{The Mean-Field Limit of the Lieb-Liniger Model}
\author[M. Rosenzweig]{Matthew Rosenzweig}
\address{  
Massachusetts Institute of Technology\\ 
Department of Mathematics\\ 
Headquarters Office\\
Simons Building (Building 2), Room 106\\
77 Massachusetts Avenue\\
Cambridge, MA 02139-4307}
\email{mrosenzw@mit.edu}
\begin{document}
\begin{abstract}
We consider the well-known Lieb-Liniger (LL) model for $N$ bosons interacting pairwise on the line via the $\delta$-potential in the mean-field scaling regime. Assuming suitable asymptotic factorization of the initial wave functions and convergence of the microscopic energy per particle, we show that the time-dependent reduced density matrices of the system converge in trace norm to the pure states given by the solution to the one-dimensional cubic nonlinear Schr\"odinger equation (NLS) with an  explict rate of convergence. In contrast to previous work \cite{AmBre2012} relying on quantum field theory and without an explicit rate, our proof is inspired by the counting method of Pickl \cite{Pickl2010, Pickl2011, Pickl2015} and Knowles and Pickl \cite{KP2010}. To overcome difficulties stemming from the singularity of the $\delta$-potential, we introduce a new short-range approximation argument that exploits the H\"older continuity of the $N$-body wave function in a single particle variable. By further exploiting the $L^2$-subcritical well-posedness theory for the 1D cubic NLS, we can prove mean-field convergence assuming only that the limiting solution to the NLS has finite mass.
\end{abstract}

\maketitle{}

\section{Introduction}
\label{sec:intro}
\subsection{Background}
The \emph{Lieb-Liniger (LL) model}  describes a finite number of bosons in one dimension with two-body contact interactions. Formally, the Hamiltonian for $N$ bosons is given by
\begin{equation}
\label{eq:LL_ham_orig}
\sum_{i=1}^N  -\Delta_{i} + c\sum_{1\leq i<j\leq N}\delta(X_i-X_j),
\end{equation}
where $-\Delta_i$ denotes the Laplacian in the $i$-th particle variable $x_i\in\R$, $\delta(X_i-X_j)$ denotes multiplication by the distribution $\delta(x_i-x_j)$, and $c\in\R$ is the coupling constant determining the strength of the interaction and whether it is repulsive ($c>0$) or attractive ($c<0$). The LL model is named for Lieb and Liniger, who showed in the seminal works \cite{LL1963_I, LL1963_II} that when considered on a finite interval with periodic boundary conditions, the model is exactly solvable by Bethe ansatz.\footnote{Bethe ansatz refers to a method in the study of exactly solvable models originally introduced by Hans Bethe to find exact eigenvalues and eigenvectors of the antiferromagnetic Heisenberg spin chain \cite{Bethe1931}. For more on this technique and its applications, we refer the reader to the monograph \cite{Gaudin2014}.} While it was originally introduced as a toy quantum many-body system, the LL model has since attracted interest from both the physics community \cite{Olshanii1998, PSW2000, DLO2001,JK2002,LSY2003, OD2003, PGS2004} and the mathematics community \cite{LSY2004, SY2008} in modeling quasi-one-dimensional dilute Bose gases which have been realized in laboratory settings \cite{DHR2001, RGTHBA2003, TOHPRP2004, ETSAWB2006}.

In applications, the number of particles $N$ is large, ranging upwards from $N\approx 10^3$ in the case of very dilute Bose-Einstein condensates. For large $N$, it is computationally expensive to extract useful information about the time evolution of the system directly from its wave function. Thus, one seeks to find an evolution equation, for which one can more efficiently extract information, that provides an effective description of the $N$-body system for large values of $N$. To obtain nontrivial dynamics in the limit as $N\rightarrow\infty$, we consider the \emph{mean-field} scaling regime, where the coupling constant $c$ in \eqref{eq:LL_ham_orig} is taken to be equal to $\kappa/N$ for $\kappa\in\{\pm 1\}$, so that the Hamiltonian becomes
\begin{equation}
\label{eq:LL_ham}
H_N \coloneqq \sum_{i=1}^N -\D_i + \frac{\kappa}{N}\sum_{1\leq i<j\leq N}\d(X_i-X_j), \qquad \kappa\in\{\pm 1\}.
\end{equation}
Note that the mean-field scaling is such that the free and interacting components of the Hamiltonian $H_{N}$ are of the same order in $N$. By means of quadratic forms (e.g. \cite[Chapter X]{RSII}, \cite[Section 3]{AmBre2012}), the expression \eqref{eq:LL_ham} can be realized as a self-adjoint operator on the Hilbert space $L_{sym}^2(\R^N)$ consisting of wave functions symmetric under permutation of particle labels. By Stone's theorem, the corresponding Schr\"odinger problem
\begin{equation}
\label{eq:N_Schr}
\begin{cases}
i\p_t\Phi_N = H_{N}\Phi_N\\
\Phi_N|_{t=0} = \Phi_{N,0}
\end{cases}
\end{equation}
has a unique global solution $\Phi_N(t) = e^{-it H_{N}}\Phi_{N,0}$. Of particular interest are \emph{factorized} initial data $\Phi_{N,0} =\phi_0^{\otimes N}$, for $\phi_0\in L^2(\R)$ satisfying $\|\phi_0\|_{L^2(\R)}=1$, which correspond to a system where the $N$ particles are all in the same initial state $\phi_0$.

In general, factorization of the wave function $\Phi_N$ is not preserved by the time evolution due to the interaction between particles. However, it is reasonable to expect from the scaling in \eqref{eq:LL_ham} that the total potential experienced by each particle is approximately described by an effective \emph{mean-field} potential in the limit as $N\rightarrow\infty$. Formally, we may expect that
\begin{equation}
\label{eq:wf_approx}
\Phi_N \approx \phi^{\otimes N} \quad \text{as $N\rightarrow\infty$},
\end{equation}
for some $\phi:\R\times\R\rightarrow\C$, in some sense to be made precise momentarily. To find an equation satisfied by $\phi$ and to give rigorous meaning to the approximation \eqref{eq:wf_approx}, we argue as follows. Let $\Phi_N$ be the solution to the Schr\"odinger equation \eqref{eq:N_Schr}, and consider the \emph{density matrix}
\begin{equation}
\Psi_N \coloneqq \ket*{\Phi_N}\bra*{\Phi_N}
\end{equation}
associated to $\Phi_N$.\footnote{Here and in the sequel, we use Dirac's bra-ket notation: for $f,g,h\in L^2(\R^d)$, the operator $\ket*{f}\bra*{g}:L^2(\R^d)\rightarrow L^2(\R^d)$ is defined by $(\ket*{f}\bra*{g})h = \ip{g}{h}_{L^2}f$. The integral kernel of $\ket*{f}\bra*{g}$ is $f(x)\ol{g(x')}$.\label{fn:Dirac}} We define the \emph{$k$-particle reduced density matrix} $\gamma_N^{(k)}$ associated to $\Phi_N$ by
\begin{equation}
\label{eq:RDM}
\gamma_N^{(k)} \coloneqq \Tr_{k+1,\ldots,N}\Psi_N \qquad k\in\{1,\ldots,N\},
\end{equation}
where $\Tr_{k+1,\ldots,N}$ denotes the partial trace over the coordinates $(x_{k+1},\ldots,x_N)$. Using equation \eqref{eq:N_Schr}, one can show that $\{\gamma_N^{(k)}\}_{k=1}^N$ formally converges, as $N\rightarrow\infty$, to a solution $\{\gamma_k\}_{k=1}^\infty$ of the \emph{Gross-Pitaevskii (GP) hierarchy}:
\begin{equation}
\label{eq:GP}
i\p_t\gamma^{(k)} = \comm{-\uD_k}{\gamma^{(k)}} + \kappa\sum_{j=1}^k \Tr_{k+1}\paren*{\comm{\delta(X_j-X_{k+1})}{\gamma^{(k+1)}}},
\end{equation}
where $\uD_k \coloneqq \sum_{i=1}^k \D_i$ and $\comm{\cdot}{\cdot}$ denotes the usual commutator bracket. It is a short computation that if the GP solution takes the form $\gamma^{(k)}=\ket*{\phi^{\otimes k}}\bra*{\phi^{\otimes k}}$ for every $k\in\N$, then the one-particle wave function $\phi$ solves the \emph{one-dimensional (1D) cubic nonlinear Schr\"odinger (NLS) equation}
\begin{equation}
\label{eq:NLS}
\begin{cases}
(i\p_t+\Delta)\phi = \kappa|\phi|^2\phi \\
\phi|_{t=0} =\phi_0
\end{cases}, \qquad \kappa\in\{\pm 1\}.
\end{equation}
Thus, we formally refer to the 1D cubic NLS as the \emph{mean-field limit} of the LL model.\footnote{Just as the LL model is exactly solvable by Bethe ansatz, the 1D cubic NLS is exactly solvable by the inverse scattering transform \cite{ZS72, FT07}; a relationship we discuss in our joint work \cite{MNPRS2_2019}.} To rigorously establish the validity of the mean-field approximation, one needs to show convergence of the $k$-particle reduced density matrices $\gamma_N^{(k)}$ to $\ket*{\phi^{\otimes k}}\bra*{\phi^{\otimes k}}$, as $N\rightarrow\infty$, in trace norm:
\begin{equation}
\label{eq:chaos}
\forall k\in\N, \qquad \lim_{N\rightarrow\infty} \Tr_{1,\ldots,k}\left|\gamma_N^{(k)}-\ket*{\phi^{\otimes k}}\bra*{\phi^{\otimes k}}\right|=0.
\end{equation}

\subsection{Prior results}
The subject of approximating the dynamics of Bose gases in the sense of \eqref{eq:chaos} dates to the 1970s and 1980s through work of Hepp \cite{Hepp1974}, Ginibre and Velo \cite{GV1979_I, GV1979_II}, and Spohn \cite{Spohn80}. After a number of years of inactivity, this subject exerienced a revival in the early 2000s with work of Bardos, Golse, and Mauser \cite{BGM2000}; Fr\"ohlich, Tsai, and Yau \cite{FTY2000}; and Erd\"os and Yau \cite{EY2001}. After a landmark series of works by Erd\"os, Schlein, and Yau \cite{ESY2006, ESY2007, ESY2009, ESY2010}, an explosion of research occurred for the subject of \emph{effective equations} (e.g. mean-field) for quantum many-body systems, with contributions by many authors. As it is not our intention to review this body of literature and we are exclusively interested here in results pertaining to the LL model, we refer the reader to the surveys of Schlein \cite{schlein_clay} and Rougerie \cite{Rougerie2020}, and references therein, for more discussion on this general subject. We intend no offense by any omissions.

The first result on the mean-field approximation for the LL model is due to Adami, Bardos, Golse, and Teta \cite{ABGT2004}. Proceeding by the so-called BBGKY method, which was pioneered by Spohn \cite{Spohn80}, Adami et al. show that for each $k\in\N$ fixed, the sequence $\{\gamma_N^{(k)}\}_{N=1}^\infty$ has a limit point $\gamma^{(k)}$ with respect to a topology weaker than trace norm. They then show that the sequence $\{\gamma^{(k)}\}_{k=1}^\infty$ is a solution to the GP hierarchy \eqref{eq:GP} with initial datum $(\ket*{\phi_0^{\otimes k}}\bra*{\phi_0^{\otimes k}})_{k=1}^\infty$ in a certain class akin to the Sobolev space $H^1$. In order to conclude their proof, they need to show that there can only be one such solution (i.e. prove uniqueness for the GP hierarchy in the class under consideration), from which the convergence \eqref{eq:chaos} follows. However, they could not prove this uniqueness, and to our knowledge, their argument has yet to be completed. We remark that the BBGKY approach does not yield a rate of convergence in \eqref{eq:chaos} as $N\rightarrow\infty$ and $|t| \rightarrow\infty$.

Several years later, Ammari and Breteaux \cite{AmBre2012} revisited the mean-field approximation to the LL model from the perspective of quantum field theory. Inspired by the approach of Rodnianski and Schlein \cite{RS2009}, which in turn builds on earlier ideas of Hepp \cite{Hepp1974} and Ginibre and Velo \cite{GV1979_I, GV1979_II}, the authors use the framework of second quantization and reformulate the problem of mean-field limit for the Hamiltonian \eqref{eq:LL_ham} in terms of the semiclassical limit for a related Hamiltonian on the Fock space. Through a very technical argument involving abstract non-autonomous Schr\"odinger equations, they construct a time-dependent quadratic Hamiltonian which provides a semiclasical approximation for the evolution of coherent states. Borrowing an argument from \cite{RS2009}, they are able to show the convergence \eqref{eq:chaos} from their approximation result for coherent states. We note that the authors do not provide a rate for the convergence \eqref{eq:chaos} in terms of $N$ and $t$.

Lastly, we mention the works \cite{AGT2007, CH2016_1D, NN2019}, which treat the derivation of the 1D cubic NLS from a many-body problem similar to \eqref{eq:LL_ham}, but with the $\d$ potential replaced by a less singular potential of the form $V_N(x)\coloneqq N^{\beta}V(N^\beta x)$, for varying $0<\beta<\infty$.

\subsection{Overview of main results}
\label{ssec:intro_proof}
Having introduced the LL model in the mean-field regime and reviewed prior work, we now state our main results. To the state the theorems, we introduce the notation $\beta_N(\Phi_{N,0},\phi_0)$ for a functional defined in \cref{eq:intro_beta_def} below that measures the initial purity of the condensate. We also introduce the \emph{microscopic energy per particle} and the \emph{NLS energy} respectively given by
\begin{align}
E_N^\Phi &\coloneqq \frac{1}{N}\ip{\Phi_N}{H_N\Phi_N}_{L^2(\R^N)}, \label{eq:intro_epp} \\
E^{\phi} &\coloneqq \|\nabla\phi\|_{L^2(\R)}^2 + \frac{\kappa}{2}\|\phi\|_{L^4(\R)}^4. \label{eq:nls_e}
\end{align}
Our first theorem provides a quantitative rate of convergence to mean-field dynamics in both the repulsive and attractive settings, assuming the limiting state $\phi_0$ is in $H^2(\R)$.

\begin{restatable}[Main result $H^2$]{thm}{mainH}
\label{thm:mainH}
Fix $\kappa\in\{\pm 1\}$, and let $\phi$ be the solution to \eqref{eq:NLS} with initial datum $\phi_0\in H^2(\R)$, such that $\|\phi_0\|_{L^2(\R)}=1$. Let $\Phi_N$ be the solution to \eqref{eq:N_Schr} with initial datum $\Phi_{N,0}\in H^1(\R^N)$ and $\|\Phi_{N,0}\|_{L^2(\R^N)}=1$. There exists an absolute constant $C>0$ such that for every $N\in\N$ and $k\in\{1,\ldots,N\}$,
\begin{equation}
\label{eq:rate}
\begin{split}
&\Tr_{1,\ldots,k}\left|\gamma_N^{(k)}(t) - \ket*{\phi^{\otimes k}}\bra*{\phi^{\otimes k}}(t)\right|\\
&\leq C\sqrt{k}\paren*{\beta_N(\Phi_{N,0},\phi_0)^{1/2}+ |t|^{1/2}\paren*{\frac{\|\phi_0\|_{H^1(\R)}}{N^{1/6}} + \frac{\|\phi_0\|_{H^2(\R)}}{N^{1/4}} + \|\phi_0\|_{H^1(\R)}|E_N^{\Phi}-E^{\phi}|^{1/2}}} e^{C\|\phi_0\|_{H^2(\R)}^2 |t|}
\end{split}
\end{equation}
for every $t\in\R$. In particular, if $\beta_N(\Phi_{N,0},\phi_0)\rightarrow 0$ and $E_N^{\Phi}\rightarrow E^\phi$ as $N\rightarrow\infty$, then mean-field convergence holds.
\end{restatable}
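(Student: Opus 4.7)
The plan is to adapt the Pickl counting method to this singular one-dimensional setting. I would introduce the ``number of bad particles'' functional
\[
\beta_N(\Phi_N,\phi) \coloneqq \ip{\Phi_N}{q_1^\phi \Phi_N}_{L^2(\R^N)} = 1 - \Tr(p^\phi \gamma_N^{(1)}),
\]
where $p^\phi \coloneqq \ket{\phi}\bra{\phi}$, $q^\phi \coloneqq \id - p^\phi$, and the subscript records which particle variable the projection acts on. A standard argument shows that $\Tr|\gamma_N^{(1)}-p^\phi|\lesssim \sqrt{\beta_N}$, and a combinatorial bootstrap from $k=1$ to general $k$ produces the $\sqrt{k}$ factor in \eqref{eq:rate}. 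Thus the whole problem reduces to producing a Gronwall-type estimate for $\beta_N(\Phi_N(t),\phi(t))$ in terms of its initial value plus explicit $N$-error terms.

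Differentiating $\beta_N$ in time, the kinetic parts of $H_N$ and the NLS cancel, leaving a commutator of $q_1^\phi$ with the interaction minus its mean-field counterpart. By bosonic symmetry this reduces to a single pair, and inserting $\id = p^\phi + q^\phi$ on each side one obtains the familiar Pickl decomposition into ``type (a),(b),(c)'' pieces. The (a) and (c) pieces can be bounded by $C\|\phi\|_\infty^2 \beta_N \lesssim \|\phi\|_{H^1}^2 \beta_N$ using $H^1\hookrightarrow L^\infty$ in one dimension, giving the desired Gronwall coefficient. The genuinely hard piece is (b), which contains the true $\delta$-interaction paired with only one $q$-factor and hence cannot be absorbed purely into $\beta_N$. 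To control it one also needs a bound on $\ip{\Phi_N}{q_1^\phi(-\D_1)q_1^\phi\Phi_N}$; I would produce this by decomposing $NE_N^\Phi - NE^\phi\ip{\Phi_N}{q_1^\phi \Phi_N}$ via $p^\phi+q^\phi$ insertions and show, up to a controlled remainder, that the kinetic energy of the bad particles is $\lesssim |E_N^\Phi - E^\phi| + \|\phi\|_{H^1}^2\beta_N + o_N(1)$, which explains the appearance of $|E_N^\Phi - E^\phi|^{1/2}$ in \eqref{eq:rate}.

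The main obstacle, which is the novel contribution, is treating the $\delta$-potential directly, since $\delta(X_1-X_2)$ is unbounded relative to the bad-particle projectors. The plan is to introduce a short-range cutoff $\delta_\ep(x) \coloneqq \ep^{-1}\chi(x/\ep)$ for a smooth bump $\chi$ of unit mass and to split
\[
\frac{\kappa}{N}\sum_{i<j}\d(X_i-X_j) = \frac{\kappa}{N}\sum_{i<j}\d_\ep(X_i-X_j) + \frac{\kappa}{N}\sum_{i<j}\paren*{\d(X_i-X_j) - \d_\ep(X_i-X_j)}.
\]
The regularized sum is handled by standard Pickl-type estimates adapted to the bounded-potential mean-field problem. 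For the error, I would fix all variables but one and exploit the one-dimensional Sobolev embedding $H^1(\R)\hookrightarrow C^{1/2}(\R)$: the difference $\d - \d_\ep$ pairs against the value of the wave function at coincidence versus its local average, with the gap controlled by $\ep^{1/2}$ times a $1/2$-H\"older seminorm that is in turn bounded by the single-particle kinetic energy. Averaging over pairs and using the a priori control of $\ip{\Phi_N}{(-\D_1)\Phi_N}$ afforded by conservation of $\ip{\Phi_N}{H_N\Phi_N}$, this error becomes $O(\ep^{1/2})$, while the ``type (b)'' mean-field error for the smoothed potential produces a term of order $\ep^{-a}N^{-b}$ for appropriate exponents. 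Optimizing $\ep$ in the resulting differential inequality produces the distinct rates $N^{-1/6}$ and $N^{-1/4}$ appearing in \eqref{eq:rate}, with the $\|\phi_0\|_{H^2(\R)}$ norm entering when one needs to control $\p_t \phi$ and higher regularity in the comparison with the NLS. Applying Gronwall, together with persistence of $H^2$ regularity for the $1$D cubic NLS (which gives the exponential factor $e^{C\|\phi_0\|_{H^2}^2|t|}$), finishes the argument. The crux of the proof is squeezing both a good power of $\ep$ from the H\"older trick and a tractable Pickl (b)-type estimate out of the smoothed interaction simultaneously.
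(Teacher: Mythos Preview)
Your proposal correctly identifies the two central ingredients of the paper's argument: the short-range mollification of $\delta$ combined with the one-variable embedding $H^1(\R)\hookrightarrow C^{1/2}(\R)$, and the control of the bad-particle kinetic energy $\|\nabla_1 q_1\Phi_N\|_{L^2}^2$ through the energy gap $E_N^\Phi-E^\phi$. However, there is a structural difference and a genuine gap.

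The structural difference: your $\beta_N=\ip{\Phi_N}{q_1\Phi_N}$ is what the paper calls $\alpha_N$. The paper works instead with the Knowles--Pickl weighted functional $\beta_N=\ip{\Phi_N}{\wh{n_N}\Phi_N}$, $n_N(k)=\sqrt{k/N}$, and the theorem statement refers to that quantity. Differentiating it produces operators $\wh{\mu},\wh{\nu}\lesssim \wh{n_N}^{-1}$ in place of your bare commutator with $q_1$.

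The gap: you assert that the ``(a)'' and ``(c)'' pieces are bounded by $\|\phi\|_\infty^2\beta_N$ and that only ``(b)'' is delicate. In fact the paper's hardest term is $\mathrm{Term}_3$, the piece $p_1p_2 V_{12}\,\wh{\nu}\, q_1q_2$, and it is \emph{not} controlled by $\|\phi\|_\infty^2\beta_N$ even after mollification. With $V_\sigma(x)=N^\sigma\tilde V(N^\sigma x)$ one has $\|V_\sigma p_1p_2\|_{L^2\to L^2}\sim N^{\sigma/2}$, and pairing this against $\|q_1q_2\Phi\|$ yields a bound that, balanced against your $O(\ep^{1/2})$ H\"older error, optimizes only to $O(1)$. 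What is actually required is a symmetrization over particles, $p_2V_{\sigma,12}q_2=(N-1)^{-1}\sum_{j\geq 2}p_jV_{\sigma,1j}q_j$, followed by a diagonal/off-diagonal splitting of the resulting square to recover a factor $N^{-1}$; because the weight $\wh{\nu}$ can be as large as $\sqrt{N}$, the paper further inserts a partition $\chi^{(1)}=1_{\{k\leq N^{1-\delta}\}}$, $\chi^{(2)}=1-\chi^{(1)}$ in the bad-particle count. The two parameters $(\sigma,\delta)$ are then jointly optimized to $(1/3,2/3)$, which is what produces the separate rates $N^{-1/3}$ and $N^{-1/2}$ for $\beta_N$ (hence $N^{-1/6}$ and $N^{-1/4}$ in \eqref{eq:rate}). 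Your sketch subsumes all of this under ``standard Pickl-type estimates,'' which understates the work involved. A secondary point: the paper inserts the mollification \emph{inside} $\mathrm{Term}_3$ after the $p/q$ decomposition, so that the H\"older seminorm lands on $\wh{\nu}q_1q_2\Phi$; this is where $\|\nabla_1 q_1\Phi\|$ rather than the full kinetic energy enters, and why the auxiliary energy-comparison proposition is needed in the form it takes.
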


\begin{remark}
\label{rem:H2_ID}
The obvious example of $N$-body initial data to consider in \cref{thm:mainH} is $\Phi_{N,0}=\phi_0^{\otimes N}$, for which it is an easy calculation to show that $\beta_N(\Phi_{N,0},\phi_0) = 0$ and $E_N^{\Phi}-E^{\phi}= O(1/N)$. With a bit more work, one can allow for initial data $\Phi_{N,0}$ with two-body correlations on a length scale vanishing as $N\rightarrow\infty$.
\end{remark}

By further exploiting the $L^2$-subcritical well-posedness of the 1D cubic NLS (see \cref{prop:nls_gwp}), we can show that mean-field convergence still holds assuming only that the limiting one-particle initial state $\phi_0\in L^2(\R)$, provided that we restrict to a smaller class of $N$-body initial data. To our knowledge, this is the first time such a result has been shown. We have not stated the following theorem under the optimal class of $N$-body initial data that can be considered, so as not to overly complicate the statement. But an examination of the argument in \cref{sec:proof_main} will show the interested reader how to allow for more general $N$-body initial data. 

\begin{restatable}[Main result $L^2$]{thm}{mainL}
\label{thm:mainL}
Fix $\kappa\in\{\pm 1\}$, and let $\phi$ be the solution to the NLS \eqref{eq:NLS} with initial datum $\phi_0$, such that $\|\phi_0\|_{L^2(\R)}=1$. Let $\Phi_N$ be the solution to \eqref{eq:N_Schr} with initial datum
\begin{equation}
\Phi_{N,0}\coloneqq  \frac{(P_{\leq (\log N)^{\eta}}\phi_0)^{\otimes N}}{\|P_{\leq (\log N)^{\eta}}\phi_0\|_{L^2(\R)}^N},\footnotemark
\end{equation}
\footnotetext{Here, $P_{\leq M}$ denotes the Littlewood-Paley projector onto frequencies $\lesssim M$.}for fixed $0<\eta<1/4$. There exists an absolute constant $C>0$ such that for every $N\in\N$ and $k\in\{1,\ldots,N\}$,
\begin{equation}
\label{eq:rateL2}
\Tr_{1,\ldots,k}\left|\gamma_N^{(k)}(t) - \ket*{\phi^{\otimes k}}\bra*{\phi^{\otimes k}}(t)\right| \leq C\sqrt{k}\paren*{\frac{t^{1/2}e^{Ct(\log N)^{4\eta}}(\log N)^\eta}{N^{1/6}} + \|P_{>(\log N)^\eta}\phi_0\|_{L^2(\R)}e^{Ct^{5/2}}}
\end{equation}
for every $t\in\R$. In particular, the right-hand side tends to zero as $N\rightarrow\infty$, locally uniformly in time.
\end{restatable}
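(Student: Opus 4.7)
The plan is to reduce \cref{thm:mainL} to \cref{thm:mainH} by a frequency cutoff argument and NLS stability. Set $\phi_0^N \coloneqq P_{\leq (\log N)^\eta}\phi_0 / \|P_{\leq (\log N)^\eta}\phi_0\|_{L^2(\R)}$, so that $\Phi_{N,0}=(\phi_0^N)^{\otimes N}$ exactly. Let $\phi^N$ denote the NLS solution with initial datum $\phi_0^N$. By the Littlewood–Paley bounds $\|\phi_0^N\|_{H^s(\R)} \lesssim (\log N)^{s\eta}$ for $s\in\{1,2\}$, and $\phi_0^N\in H^2(\R)$. The triangle inequality gives
\begin{equation*}
\Tr_{1,\ldots,k}\left|\gamma_N^{(k)}(t) - \ket*{\phi^{\otimes k}}\bra*{\phi^{\otimes k}}(t)\right|
\leq \Tr_{1,\ldots,k}\left|\gamma_N^{(k)}(t) - \ket*{(\phi^N)^{\otimes k}}\bra*{(\phi^N)^{\otimes k}}(t)\right|
+ \Tr_{1,\ldots,k}\left|\ket*{(\phi^N)^{\otimes k}}\bra*{(\phi^N)^{\otimes k}}(t) - \ket*{\phi^{\otimes k}}\bra*{\phi^{\otimes k}}(t)\right|.
\end{equation*}

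For the first term, I apply \cref{thm:mainH} with one-particle state $\phi_0^N$ and $N$-body data $(\phi_0^N)^{\otimes N}$. Since the data is exactly factorized, $\beta_N(\Phi_{N,0},\phi_0^N)=0$ (as noted in \cref{rem:H2_ID}) and $|E_N^{\Phi}-E^{\phi^N}| = O(1/N)$. Inserting the $H^1$ and $H^2$ bounds for $\phi_0^N$ above, the three contributions inside the parentheses of \cref{eq:rate} become $O((\log N)^\eta N^{-1/6})$, $O((\log N)^{2\eta}N^{-1/4})$, and $O((\log N)^\eta N^{-1/2})$, all dominated by the first, while the Grönwall exponent becomes $e^{C(\log N)^{4\eta}|t|}$. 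This yields precisely the first summand on the right-hand side of \cref{eq:rateL2}. The choice $\eta<1/4$ ensures $e^{C|t|(\log N)^{4\eta}}N^{-1/6}\to 0$ locally uniformly in $t$, which is exactly what forces the restriction on $\eta$.

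For the second term, I use the standard bound $\Tr|\ket{u^{\otimes k}}\bra{u^{\otimes k}} - \ket{v^{\otimes k}}\bra{v^{\otimes k}}| \leq 2\sqrt{k}\,\|u-v\|_{L^2}$ for unit vectors $u,v$, reducing matters to estimating $\|\phi^N(t) - \phi(t)\|_{L^2(\R)}$. This is where the $L^2$-subcritical well-posedness theory of the 1D cubic NLS (to be established earlier in the paper, and the reason this theorem does not need $H^2$ data) enters: conservation of mass combined with Strichartz estimates and the standard continuous dependence argument give a stability bound of the form $\|\phi^N(t)-\phi(t)\|_{L^2(\R)}\leq C\|\phi_0^N-\phi_0\|_{L^2(\R)} e^{Ct^{5/2}}$. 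Finally, the triangle inequality and an easy Littlewood–Paley computation yield $\|\phi_0^N - \phi_0\|_{L^2(\R)} \lesssim \|P_{>(\log N)^\eta}\phi_0\|_{L^2(\R)}$ (absorbing the normalization correction), producing the second summand of \cref{eq:rateL2}. The main obstacle is to verify that the NLS stability constant has the claimed exponential-in-$t^{5/2}$ form uniformly for $L^2$ data of unit mass; this is the reason the paper needs the $L^2$ well-posedness theory for 1D cubic NLS (\cref{prop:nls_gwp}) as input rather than just $H^1$-theory.
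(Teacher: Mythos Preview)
Your proposal is correct and follows essentially the same route as the paper: split by the triangle inequality, apply \cref{thm:mainH} with the mollified one-particle state $\phi_0^N$ to handle the first piece, and use the $L^2$-subcritical dependence-on-data estimate from \cref{prop:nls_gwp} for the second. One small imprecision: the energy difference for factorized data is $|E_N^{\Phi}-E^{\phi^N}| = \frac{1}{2N}\|\phi_0^N\|_{L^4}^4 \lesssim (\log N)^{\eta}/N$ (by Bernstein), not $O(1/N)$ uniformly, so the energy contribution is actually $O((\log N)^{3\eta/2}N^{-1/2})$ rather than $O((\log N)^{\eta}N^{-1/2})$; this is still subdominant and does not affect the final bound.
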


\begin{remark}
\label{rem:ID_reg}
The $H^2(\R)$ assumption on the initial datum $\phi_0$ in \cref{thm:mainH} is consistent with the regularity assumption of Ammari and Breteaux \cite{AmBre2012}. However, the $L^2(\R)$ assumption in \cref{thm:mainL} substantially improves upon their result by only requiring $\phi_0$ to have finite mass. In particular, \cref{thm:mainL} affirmatively answers the question first investigated by Adami et al. \cite{ABGT2004} of whether one can derive the 1D cubic NLS assuming the limiting state only has finite mass and energy.
\end{remark}

\begin{remark}
\label{rem:reg_Ham}
An examination of the arguments in \cref{sec:MR} and \cref{sec:proof_main} shows that if we replace the Hamiltonian $H_{N}$ in \eqref{eq:LL_ham} with the \emph{regularized Hamiltonian}
\begin{equation}
\label{eq:reg_Ham}
H_{N,\vep} \coloneqq \sum_{i=1}^N -\Delta_i + \frac{\kappa}{N}\sum_{1\leq i<j\leq N}V_\vep(X_i-X_j), \qquad \kappa\in\{\pm 1\},
\end{equation}
where $V$ is a short-range potential satisfying certain regularity conditions and $V_\vep\coloneqq \vep^{-1} V(\vep^{-1}\cdot)$, then choosing $\ep=N^{-\beta}$ for some $\beta\in (0,\infty)$, mean-field convergence to the 1D cubic NLS also holds with an explicit rate of convergence.
\end{remark}

We believe that the ideas behind the proofs of \Cref{{thm:mainH}} and \Cref{{thm:mainL}} will have applications to further studying norm approximations (cf. \cite{NN2019}) and higher-order corrections to mean-field theory (cf. \cite{BPPS2020}) for the Lieb-Liniger and other models involving singular potentials. Furthermore, our results suggest that the strong regularity assumptions imposed on the initial data $\phi_0$ in the derivation of the quantum effective equations, in particular in the use of Pickl's counting method \cite{KP2010, Pickl2010, Pickl2011, Pickl2015, Mitrouskas2017, JP2018, JLP2019, Bossmann2019, BT2019, BPPS2020}, may be lowered substantially by further exploiting the dispersive properties of the mean-field PDE. We intend to investigate the validity of this hypothesis in future work, as well as instances outside of quantum many-body theory where similar analysis may be employed.

\subsection{Road map of the proof}
\label{ssec:intro_RM}
We now comment on the proofs of \cref{thm:mainH} and \cref{thm:mainL} and highlight the major difficulties and differences from existing work. Inspired by the method of Pickl \cite{Pickl2010, Pickl2011, Pickl2015} and the refinement of this method developed by Knowles and Pickl \cite{KP2010} for derivation of the Hartree equation in the mean-field limit, our argument is based on an energy-type estimate for a functional $\beta_N$, which gives a weighted count of the number of ``bad particles'' in the system at time $t$ which are not in the state $\phi(t)$, where $\phi$ solves the cubic NLS \eqref{eq:NLS}. $\beta_N$ takes the form
\begin{equation}
\label{eq:intro_beta_def}
\beta_N(\Phi_N(t),\phi(t)) \coloneqq \ip{\Phi_N(t)}{\wh{n_N(t)}\Phi_N(t)}_{L^2(\R^N)} = \sum_{k=0}^N \sqrt{\frac{k}{N}}\ip{\Phi_N(t)}{P_k(t)\Phi_N(t)}_{L^2(\R^N)},
\end{equation}
where $\Phi_N$ is the solution to \eqref{eq:N_Schr} and $P_k(t)$ is the projector mapping a wave function onto the subspace of $L_{sym}^2(\R^N)$ of functions corresponding to $k$ of the particles not being in the state $\phi(t)$. See \cref{eq:Pk_def} and more generally \cref{ssec:MR_proj} for the precise definition and properties of these projectors. The main estimate for $\beta_N$ is given by \cref{prop:beta_evol}. We defer the precise statement of the proposition to \cref{ssec:MR_beta}, but the estimate controls the evolution of $\beta_N$ in terms of its initial value up to an error vanishing as $N\rightarrow\infty$.

To prove \cref{prop:beta_evol}, we proceed by a Gronwall-type argument. Differentiating $\beta_N$ with respect to time and performing some simplifications, we find that there are three terms we need to estimate, the most difficult of which is
\begin{align}
\mathrm{Term}_3 &\coloneqq \ip{\Phi_{N}}{p_1p_2\comm{(N-1)V_{12}}{\wh{n_N}}q_1q_2\Phi_{N}}_{L_{\ux_N}^2(\R^N)},
\end{align}
where we have used the notation $V_{12}\coloneqq \delta(X_1-X_2)$ and $V_j^\phi \coloneqq |\phi(X_j)|^2$. Here, $p_j$ is the rank-one projector $\ket*{\phi}\bra*{\phi}$ acting in the $x_j$-variable, and $q_j=\id_N-p_j$, where $\id_N$ is the identity operator on $L^2(\R^N)$ (see \cref{ssec:MR_proj} for more details). $V_{12}(q_1q_2\Phi_N)$ and $V_{12}(\wh{n_N}q_1q_2\Phi_N)$, similarly for the other terms, should be interpreted as elements of $H^{-1}(\R^N)$ and the inner product as a duality pairing. By expanding the commutator in the definition of $\mathrm{Term}_3$ and using \cref{lem:shift} to shift the projectors $P_k$ in the definition of $\wh{n_N}$ (see \cref{def:mn_N}), we reduce to bounding the expression
\begin{equation}
\label{eq:intro_T3_red}
\left|\ip{\Phi_N}{p_1p_2V_{12}q_1q_2\wh{\nu_N}\Phi_N}_{L_{\ux_N}^2(\R^N)}\right|,
\end{equation}
where $\wh{\nu_N}=\sum_{k=0}^N \nu_N(k)P_k$ is a time-dependent operator on $L_{sym}^2(\R^N)$ such that the coefficients satisfy $\nu_N(k)\lesssim n_N^{-1}(k)$. See \eqref{eq:nu_def} for the precise definition of $\nu_N$ and $\wh{\nu_N}$.

In \cite{KP2010}, Knowles and Pickl had to contend with an expression similar to $\mathrm{Term}_3$ but with a much more regular potential $V$, which satisfies certain integrability assumptions of the form $V\in L^{p_0}+L^{\infty}$. To deal with their analogue of \eqref{eq:intro_T3_red}, they split the potential into its ``regular'' and ``singular'' parts by making an $N$-dependent decomposition of the form
\begin{equation}
V_{reg} \coloneqq V1_{\{|V|\leq N^{\sigma}\}}, \qquad V_{sing} \coloneqq V1_{\{V|>N^{\sigma}\}},
\end{equation}
where $1_{\{\cdot\}}$ denotes the indicator function for the set $\{\cdot\}$ and $\sigma\in (0,1)$ is a parameter to be optimized at the end. For the singular part, they express the potential as the divergence of a vector field (i.e. $V=\nabla\cdot\xi$) and integrate by parts. Crucially, their integrability assumption implies that $\|\xi\|_{L^2(\R^{3N})}=O(N^{-\delta})$, for some $\delta>0$, which is necessary to close their estimate. For the regular part, the important idea is to exploit the permutation symmetry of the wave function, since the operator norm of $p_1p_2V_{12}q_1q_2$ is much smaller on the bosonic subspace $L_{sym}^2(\R^{3N})$ than on the full space $L^2(\R^{3N})$. As the argument is a bit involved, we only comment that it requires $V_{reg}^2$ to be integrable.

For $V=\delta(x)$, the Knowles-Pickl argument described above breaks down. While we have the identity
\begin{equation}
\label{eq:intro_sgn}
\delta(x) = \frac{1}{2}\nabla\sgn(x),
\end{equation}
the signum function is only in $L^\infty$, not in $L^2$ as their singular-part argument requires. Additionally, since $\delta$ is only a distribution, we cannot assign meaning to $\delta^2$ in the regular part of their argument. In fact, the regular part of their argument is formally vacuous for the $\delta$ potential.

To overcome the difficulties stemming from the lack of integrability of the $\delta$ potential, we introduce a new short-range approximation argument as follows. We make an $N$-dependent mollification of the potential by setting
\begin{equation}
V_\sigma(x)\coloneqq N^{\sigma}\tl{V}(N^{\sigma}x), \qquad \forall x\in\R,
\end{equation}
where $\sigma\in (0,1)$, $0\leq \tl{V}\leq 1$, $\tl{V}\in C_c^\infty(\R)$ is even, and $\int_{\R}dx \tl{V}(x)=1$. By the triangle inequality, we have
\begin{equation}
\label{eq:intro_T3_rem}
\begin{split}
\left|\ip{\Phi_N}{p_1p_2V_{12}q_1q_2\wh{\nu_N}\Phi_N}_{L_{\ux_N}^2(\R^N)}\right| &\leq \left|\ip{\Phi_N}{p_1p_2(V_{12}-V_{\sigma,12})q_1q_2\wh{\nu_N}\Phi_N}_{L_{\ux_N}^2(\R^N)}\right| \\
&\ph+ \left|\ip{\Phi_N}{p_1p_2V_{\sigma,12}q_1q_2\wh{\nu_N}\Phi_N}_{L_{\ux_N}^2(\R^N)}\right|.
\end{split}
\end{equation}
Combining the scaling relation
\begin{equation}
\int_{\R}dx |x|^{1/2}V_\sigma(x) \sim N^{-\sigma/2}
\end{equation}
with fact that the wave function $\Phi_N$ is $\frac{1}{2}$-H\"older-continuous in a single particle variable by conservation of mass and energy together with Sobolev embedding (see \cref{lem:H1_hold}), we can  show that the first term in the right-hand side of \eqref{eq:intro_T3_rem} is controlled by $\beta_N$ up to a small error vanishing as $N\rightarrow\infty$.  We can now estimate the second term in the right-hand side of \eqref{eq:intro_T3_rem} by proceeding similarly as to the aforementioned Knowles-Pickl argument for the regular part $V_{reg}$ of the potential. While $\|V_\sigma\|_{L^2(\R)}\sim N^{\sigma/2}$, we are able to extract sufficient decay in $N$ from other factors to absorb this growth, provided we appropriately choose $\sigma$.

This mollification idea seems quite powerful, and we expect it to have further application to problems of mean-field convergence for both quantum and classical interacting particle systems. Indeed, in the recent work \cite{Rosenzweig2020_MFPV, Rosenzweig2020_MFCou} by the author, a similar, but more complicated time-dependent version of our mollification argument is used to prove mean-field convergence of the Helmholtz-Kirchoff point vortex system to the 2D incompressible Euler equation with vorticity in the scaling-critical $L^\infty$ space. 

To extend the above analysis to $\phi_0\in L^2(\R)$, as in \cref{thm:mainL}, we introduce another new idea, which is to exploit the quantitative dependence on the initial data for the mean-field equation itself (i.e. the cubic NLS).\footnote{The present idea is to our knowledge novel, but it is worth mentioning that the use of methods inspired by the area of dispersive nonlinear PDE dates to the important work of Klainerman and Machedon \cite{KM2008}. A non-exhaustive sample of applications of such methods may be found in the subsequent works \cite{GMM2010, KSS2011, GMM2011, GM2013, CP2014, CHPS2015, Sohinger2015, Chong2016, GM2017, CH2019} and references therein.} The crucial observation for this step is that in the statement of \cref{prop:beta_evol}, $\phi_0\in H^2(\R)$ can be arbitrary and similarly for $\Phi_{N,0}\in H^1(\R^N)$. Therefore, we have the freedom to \emph{mollify} the solution $\phi$ to \eqref{eq:NLS} by mollifying the initial datum $\phi_0$, so that it is now in $H^2(\R)$. This mollification is most easily accomplished by a high-frequency cut-off $\phi_{N,0}$ of the initial datum, leading to
\begin{equation}
\label{eq:nls_moll}
\begin{cases}
(i\p_t+\D)\phi_N = \kappa|\phi_N|^2\phi_N \\
\phi_{N}|_{t=0} = \phi_{N,0} \coloneqq \frac{P_{\leq \rho(N)}\phi_0}{\|P_{\leq\rho(N)}\phi_0\|_{L^2(\R)}}
\end{cases}, \qquad \kappa\in\{\pm1\}.
\end{equation}
Here, $P_{\leq \rho(N)}$ is the Littlewood-Paley projector to frequencies $\lesssim \rho(N)$ and $\rho$ is a suitable rate function tending to $\infty$ as $N\rightarrow\infty$. Restricting to the $1$-particle marginal, we have by the triangle inequality that
\begin{equation}
\label{eq:TI_intro}
\begin{split}
\Tr\left|\gamma_N^{(1)}-\ket*{\phi}\bra*{\phi}\right| \leq \Tr\left|\gamma_N^{(1)}-\ket*{\phi_N}\bra*{\phi_N}\right| + \Tr\left|\ket*{\phi_N}\bra*{\phi_N}-\ket*{\phi}\bra*{\phi}\right|.
\end{split}
\end{equation}
Since $\phi_{N,0}\in H^2(\R)$ with unit $L^2$ norm, we can apply \cref{prop:beta_evol} to the first term. While
\begin{equation}
\|\phi_N\|_{L_t^\infty H_x^2(\R\times\R)} \lesssim\|\phi_{N,0}\|_{H^2(\R)} \lesssim \rho(N)^2,
\end{equation}
we can absorb this growth in $N$ by appropriately choosing $\rho(N)$ so that it does not approach infinity too quickly. We can estimate the second term in \eqref{eq:TI_intro} by using the dependence on initial data estimate of \cref{prop:nls_gwp}, thereby completing the argument.

\subsection{Organization of the paper}
We now comment on the organization of the paper. \cref{sec:pre} is devoted to basic notation, preliminary facts from functional analysis, and the well-posedness theory of the 1D cubic NLS. We begin the section with an index of the frequently used notation in the article (see \cref{tab:notation}). In \cref{sec:MR}, we prove \cref{prop:beta_evol}, which is the main estimate for the functional $\beta_N$ and the main ingredient for the proofs of \Cref{thm:mainH} and \Cref{thm:mainL}. As this section constitutes the bulk of the paper, we have divided it into several subsections beginning with background material for projection operators in \cref{ssec:MR_proj} and continuing with the steps in the proof of \cref{prop:beta_evol}, the statement of which is given in \cref{ssec:MR_beta}. Lastly, in \cref{sec:proof_main}, we show how to obtain \Cref{thm:mainH} and \Cref{thm:mainL} from \cref{prop:beta_evol}. 

\subsection{Acknowledgments}
The author thanks Lea Bo{\ss}mann and Nata\v{s}a Pavlovi\'c for discussion which inspired him to revisit the subject of this article and Peter Pickl for helpful correspondence regarding his method. The author also thanks Dana Mendelson, Andrea R. Nahmod, and Gigliola Staffilani for numerous discussions on the exact solvability of the LL model and its connection to the integrability of the cubic NLS, which have informed the presentation of this article. Lastly, the author thanks Avy Soffer for his encouraging and engaging conversation related to this project. The author gratefully acknowledges financial support from the University of Texas at Austin and the Simons Collaboration on Wave Turbulence.

\section{Preliminaries}
\label{sec:pre}
We include here a table of the notation frequently used in the article with an explanation for the notation and/or a reference to where the definition is given.

\begin{table}[h]
\label{tab:not}
   \caption{Notation} 
   \label{tab:notation}
   \small 
   \centering 
   \begin{tabular}{|p{3cm}|p{14cm}|} 
      \hline
   \textbf{Symbol} & \textbf{Definition} \\ 
      \hline
	$A\lesssim B,\ A\sim B$ & There are absolute constants $C_1,C_2>0$ such that $A\leq C_1 B$ or $C_2B\leq A\leq C_1 B$ \\
   $\ul{x}_{k}, \ \ul{x}_{i;i+k}$ & $(x_1, \ldots, x_k), \ (x_i, \ldots, x_{i+k})$, where $x_j\in\R$ for $j\in\{1,\ldots,k\}$ or $j\in\{i,\ldots,i+k\}$\\
      $d\ul{x}_{k}, \ d\ul{x}_{i;i+k}$ & $ dx_1 \cdots dx_k, \ dx_i \cdots dx_{i+k}$ \\
         $\N, \ \N_0$ & natural numbers, \ natural numbers inclusive of zero\\
      $L^p(\R^N), \ \|\cdot\|_{L^p}$ & standard $p$-integrable function space: see \eqref{eq:L^p_def}\\
      $H^s(\R^N), \ \|\cdot\|_{H^s}$ & standard $L^2$-based Sobolev function space: see \eqref{eq:H^s_def}\\
      $C^\gamma(\R^N), \ \|\cdot\|_{C^\gamma}$ & standard H\"older-continuous function space: see \eqref{eq:C^gam_def}\\
      $\ip{\cdot}{\cdot}$ & $L^2(\R^N)$ inner product with physicist's convention: $\ip{f}{g} \coloneqq \int_{\R^N}d\ux_N \ol{f(\ux_N)}g(\ux_N)$\\
      $\ipp{\cdot,\cdot}$ & duality pairing\\
      $\bra*{\cdot} \ \ket*{\cdot}$ & Dirac's bra-ket notation: see footnote \ref{fn:Dirac}\\
       $A_{i_1\cdots i_k}^{(k)}$ & subscript denotes that the operator on $L^2(\R^N)$ acts in the variables $(x_{i_1},\ldots,x_{i_k})$ \\
       $\phi^{\otimes k}$ & $k$-fold tensor product of $\phi$ with itself realized as $\phi^{\otimes k}(\ux_k) = \prod_{i=1}^k\phi(x_i), \ \ux_k\in\R^k$\\
        $\Tr_{1,\ldots,N}$ & trace on $L^2(\R^N)$ \\
        $\Tr_{k+1,\ldots,N}$ & partial trace on $L^2(\R^N)$ over $x_{k+1},\ldots,x_N$ coordinates\\
        $\id, \ \id_N$ & identity operator on $L^2(\R)$ and on $L^2(\R^N)$\\
        $H_{N}, \ H_{N,\vep}$ & LL Hamiltonian and regularized LL Hamiltonian: see \eqref{eq:LL_ham} and \eqref{eq:reg_Ham}\\
        $p_j, \ q_j$ & projectors $\id^{\otimes j-1}\otimes p \otimes \id^{N-j}, \ \id^{\otimes j-1} \otimes q \otimes \id^{N-j}$: see \eqref{eq:pj_qj_def}\\
        $P_k$ & projector onto subspace of $k$ particles not in the state $\phi(t)$: see \eqref{eq:Pk_def} \\
        $\wh{f}, \ \wh{f}^{-1}$ & operator $L^2(\R^N)\rightarrow L^2(\R^N)$ defined by $\wh{f}\coloneqq \sum_{k=0}^N f(k)P_k$, for $f:\Z\rightarrow \C$: see \eqref{eq:hat_def}\\
        $n_N, m_N$ \ $\wh{n_N}, \wh{m_N}$ & functions $\Z\rightarrow\C$ and operators $L^2(\R^N)\rightarrow L^2(\R^N)$: see \cref{def:mn_N} \\
        $\mu, \nu$ \ $\wh{\mu}, \wh{\nu}$ & functions $\Z\rightarrow\C$ and operators $L^2(\R^N)\rightarrow L^2(\R^N)$: see \eqref{eq:mu_def} and \eqref{eq:nu_def}\\
        $\alpha_N, \ \beta_N$ & time-dependent functionals of solution $\phi$ to \eqref{eq:NLS} and $\Phi_N$ to \eqref{eq:N_Schr}: see \cref{def:mn_N}\\
        $\tau_n$ & shift operator on $\C^{\Z}$: see \eqref{eq:shift_def}\\
        $\uD_k$ & Laplacian on $\R^k$: $\uD_k \coloneqq \sum_{i=1}^k \Delta_i$\\
        $\comm{\cdot}{\cdot}$ & commutator bracket: $\comm{A}{B} \coloneqq AB-BA$\\
        \hline
   \end{tabular}
\end{table}

\subsection{Function spaces}
\label{ssec:pre_fs}
Fix $N\in\N$. We denote the Schwartz space on $\R^N$ by $\Sc(\R^N)$ and the dual space of tempered distributions on $\R^N$ by $\Sc'(\R^N)$. The subspace of $\Sc(\R^N)$ consisting of functions with compact support is denoted by $C_c^\infty(\R^N)$. Given a Schwartz function $\Phi\in\Sc(\R^N)$ and a tempered distribution $\Upsilon\in\Sc'(\R^N)$, we denote their duality pairing by
\begin{equation}
\ipp{\Phi,\Upsilon}_{\Sc(\R^N)-\Sc'(\R^N)} \coloneqq \Upsilon(\Phi).
\end{equation}
For $1\leq p\leq\infty$, we define $L^p(\R^N)$ to be the usual Banach space of equivalence classes of measurable functions $\Phi:\R^N\rightarrow\C$ with respect to the norm
\begin{equation}
\label{eq:L^p_def}
\|\Phi\|_{L^p(\R^N)} \coloneqq \paren*{\int_{\R^N}d\ux_N |\Phi(\ux_N)|^p}^{1/p}
\end{equation}
with obvious modification when $p=\infty$. We denote the inner product on $L^2(\R^N)$ by
\begin{equation}
\ip{\Phi}{\Psi}_{L^2(\R^N)} \coloneqq \int_{\R^N}d\ux_N \ol{\Phi(\ux_N)}\Psi(\ux_N).
\end{equation}
Note that we use the physicist's convention that the inner product is complex linear in the second entry. For $s\in\R$, we define the Sobolev space $H^s(\R^N)$ to be the completion of the space $\Sc(\R^N)$ with respect to the norm
\begin{equation}
\label{eq:H^s_def}
\|\Phi\|_{H^s(\R^N)} \coloneqq \paren*{\int_{\R^N}d\uxi_N \jp{\uxi_N}^{2s}|\F(\Phi)(\uxi_N)|^2}^{1/2},
\end{equation}
where $\mathcal{F}$ denotes the Fourier transform and $\jp{x}\coloneqq (1+|x|^2)^{1/2}$ is the Japanese bracket. Evidently, we can anti-isomorphically identify $H^{-s}(\R^N)$ with the dual space $(H^s(\R^N))^*$. For $\gamma\in (0,1)$, we denote the H\"older norm on $\R^N$ of exponent $\gamma$ by
\begin{equation}
\label{eq:C^gam_def}
\|\Phi\|_{\dot{C}^\gamma(\R^N)} \coloneqq  \sup_{{x,y\in\R^N} \atop {x\neq y}} \frac{|\Phi(x)-\Phi(y)|}{|x-y|^{\gamma}}, \qquad \|\Phi\|_{C^\gamma(\R^N)} \coloneqq \|\Phi\|_{L^\infty(\R^N)} + \|\Phi\|_{\dot{C}^\gamma(\R^N)}.
\end{equation}

\begin{remark}
In the sequel, we generally omit the underlying domain for norms (e.g. we write $\|\cdot\|_{L^p}$ instead of $\|\cdot\|_{L^p(\R^N)}$), as the domain will be clear from context. Similarly, we omit the underlying domain for the inner product $\ip{\cdot}{\cdot}$ and for the duality pairing $\ipp{\cdot,\cdot}$. To avoid any confusion, we generally reserve upper-case Greek letters (e.g. $\Phi,\Psi$) for functions or distributions $\R^N\rightarrow\C$ and lower-case Greek letters (e.g. $\varphi,\psi$) for functions or distributions $\R\rightarrow\C$. To emphasize the variable with respect to which a norm is taken, we use a subscript (e.g. $C_t^0$, $L_x^2$, or $L_{\ux_N}^2$).
\end{remark}

We record here a partial H\"older continuity result for functions in $H^1(\R^N)$ used in \cref{sec:MR} for the $N$-body wave function $\Phi_N$.
\begin{lemma}[Partial H\"older continuity]
\label{lem:H1_hold}
For $N\in\N$ and any $i\in\{1,\ldots,N\}$, we have the estimate
\begin{equation}
\|\Phi\|_{L_{(\ux_{1;i-1},\ux_{i+1;N})}^2(\R^{N-1};\dot{C}_{x_i}^{1/2}(\R))} \leq \|\nabla_i\Phi\|_{L^2(\R^N)}, \qquad \Phi\in\Sc(\R^N).
\end{equation}
Consequently, every element of $H^1(\R^N)$ has a modification belonging to $L_{(\ux_{1;i-1},\ux_{i+1;N})}^2(\R^{N-1};C_{x_i}^{1/2}(\R))$.
\end{lemma}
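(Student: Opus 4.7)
The plan is to reduce the statement to the classical one-dimensional embedding $\dot{H}^1(\R)\hookrightarrow \dot{C}^{1/2}(\R)$, and then to apply it slice-by-slice in the $x_i$ variable, integrating the square of the resulting pointwise inequality over the remaining $N-1$ coordinates.

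First I would recall the elementary one-dimensional estimate: for any $\varphi\in \Sc(\R)$ and $s<t$ in $\R$, the fundamental theorem of calculus and Cauchy--Schwarz give
\begin{equation*}
|\varphi(t)-\varphi(s)| = \left|\int_s^t \varphi'(r)\,dr\right| \leq |t-s|^{1/2}\|\varphi'\|_{L^2(\R)},
\end{equation*}
hence $\|\varphi\|_{\dot{C}^{1/2}(\R)}\leq \|\varphi'\|_{L^2(\R)}$. This is the only nontrivial analytic input.

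Next, fix $\Phi\in\Sc(\R^N)$ and $i\in\{1,\ldots,N\}$. For each $(\ux_{1;i-1},\ux_{i+1;N})\in \R^{N-1}$, the slice $x_i\mapsto \Phi(\ux_{1;i-1},x_i,\ux_{i+1;N})$ lies in $\Sc(\R)$, so by the preceding display,
\begin{equation*}
\|\Phi(\ux_{1;i-1},\cdot,\ux_{i+1;N})\|_{\dot{C}_{x_i}^{1/2}(\R)}^2 \leq \|\p_i\Phi(\ux_{1;i-1},\cdot,\ux_{i+1;N})\|_{L_{x_i}^2(\R)}^2.
\end{equation*}
Integrating this pointwise inequality in $(\ux_{1;i-1},\ux_{i+1;N})$ over $\R^{N-1}$ and invoking Fubini on the right-hand side yields the claimed bound for Schwartz $\Phi$.

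For the final assertion, I would proceed by a standard density and Cauchy-sequence argument. Given $\Phi\in H^1(\R^N)$, choose $\Phi_n\in\Sc(\R^N)$ with $\Phi_n\to\Phi$ in $H^1$. Applied to the differences $\Phi_n-\Phi_m$, the inequality above shows that $\{\Phi_n\}$ is Cauchy in $L^2_{(\ux_{1;i-1},\ux_{i+1;N})}(\R^{N-1};\dot{C}_{x_i}^{1/2}(\R))$. Combining this with the sliced one-dimensional Sobolev embedding $\|\varphi\|_{L^\infty(\R)}^2\leq 2\|\varphi\|_{L^2(\R)}\|\varphi'\|_{L^2(\R)}$ (proved by writing $|\varphi(x)|^2=2\,\mathrm{Re}\int_{-\infty}^x \overline{\varphi(r)}\varphi'(r)\,dr$) upgrades the convergence to $L^2_{(\ux_{1;i-1},\ux_{i+1;N})}(\R^{N-1};C_{x_i}^{1/2}(\R))$. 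The limit is a.e.\ equal to $\Phi$ (since convergence in this space implies convergence in $L^2(\R^N)$), and hence provides the desired modification.

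There is no real obstacle here beyond bookkeeping; the only mild subtlety is ensuring that after the slice-wise Sobolev embedding, the limiting object provided by completeness of $L^2_{\ux_{\hat i}}(\R^{N-1};C_{x_i}^{1/2}(\R))$ genuinely represents the original $H^1$-equivalence class, which follows from almost-everywhere uniqueness of limits along a suitable subsequence.
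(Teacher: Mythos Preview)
Your proposal is correct and follows essentially the same approach as the paper: apply the fundamental theorem of calculus plus Cauchy--Schwarz to each one-dimensional slice, then integrate in the remaining variables via Fubini, and conclude by density of $\Sc(\R^N)$ in $H^1(\R^N)$. Your treatment of the density step is in fact more detailed than the paper's, which simply invokes density without spelling out the Cauchy-sequence argument or the $L^\infty$ control needed for the full $C^{1/2}$ norm.
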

\begin{proof}
By considerations of symmetry, it suffices to consider $i=1$. Let $\Phi\in\Sc(\R^N)$, and fix $\ux_{2;N}\in\R^{N-1}$. Define the function
\begin{equation}
\phi_{\ux_{2;N}}: \R\rightarrow\C, \qquad \phi_{\ux_{2;N}}(x) \coloneqq \Phi(x,\ux_{2;N}), \enspace \forall x\in\R.
\end{equation}
Applying the fundamental theorem of calculus to $\phi_{\ux_{2;N}}$ followed by Cauchy-Schwarz, we obtain that
\begin{equation}
|\phi_{\ux_{2;N}}(x)-\phi_{\ux_{2;N}}(y)| \leq |x-y|^{1/2}\|\nabla\phi_{\ux_{2;N}}\|_{L^2(\R)}, \qquad \forall x,y\in\R,
\end{equation}
which implies that $\|\phi_{\ux_{2;N}}\|_{\dot{C}^{1/2}(\R)} \leq \|\nabla\phi_{\ux_{2;N}}\|_{L^2(\R)}$. 
Therefore, we see from the Fubini-Tonelli theorem that
\begin{align}
\int_{\R^{N-1}}d\ux_{2;N} \|\phi_{\ux_{2;N}}\|_{\dot{C}^{1/2}(\R)}^2 &\leq \int_{\R^{N-1}}d\ux_{2;N} \|\nabla\phi_{\ux_{2;N}}\|_{L^2(\R)}^2  =\|\nabla_1\Phi\|_{L^2(\R^N)}^2.
\end{align}
The conclusion of the proof then follows from the density of $\Sc(\R^N)\subset H^1(\R^N)$.
\end{proof}

\subsection{The 1D cubic NLS}
\label{ssec:pre_nls}
We recall some basic facts from the well-posedness theory--in particular the $L^2$-subcritical nature of it--for the 1D cubic NLS \eqref{eq:NLS} that we shall use to prove \cref{thm:mainH} and \cref{thm:mainL} in \cref{sec:proof_main}. The material presented here may be found with more details in \cite[Chapters 2 and 3]{Tao2006}.

\begin{mydef}[Strichartz norm]
For $2\leq p,q\leq \infty$, we say that the pair $(p,q)$ is \emph{(Schr\"odinger) Strichartz admissible} if
\begin{equation}
\frac{2}{p} = \frac{1}{2}-\frac{1}{q}.
\end{equation}
For a interval $I\subset\R$, we define the \emph{Strichartz space} $S^0(I\times\R)$ to be the closure $\Sc(\R\times\R)$ under the norm
\begin{equation}
\label{eq:Strich_norm}
\|\phi\|_{S^0(I\times\R)} \coloneqq \sup_{\text{$(p,q)$ admissible}} \|\phi\|_{L_t^p L_x^q(I\times\R)}.
\end{equation}
We define $N^0(I\times\R)$ to be the dual norm.
\end{mydef}

\begin{prop}
\label{prop:nls_gwp}
For any $\phi_0\in L^2(\R)$, there exists a unique, global solution $\phi \in C(\R; L^2(\R))$
in the sense that for any finite $T>0$, $\|\phi\|_{S^0([-T,T]\times\R)}<\infty$ and $\phi$ satisfies the Duhamel formula
\begin{equation}
\phi(t) = e^{it\D}\phi_0 - i\kappa\int_0^t e^{i(t-\tau)\D}(|\phi(\tau)|^2\phi(\tau))d\tau, \qquad  t\in [-T,T].
\end{equation}
and the Strichartz norm growth bound
\begin{equation}
\label{eq:Strich_bbd}
\|\phi\|_{S^0([-T,T]\times\R)} \lesssim T\|\phi_0\|_{L^2(\R)}^5.
\end{equation}
Moreover, the solution depends Lipschitz continuously on the initial data: if $\phi$ and $\psi$ are two solutions, then
\begin{equation}
\label{eq:dep_bnd}
\|\phi-\psi\|_{L_t^\infty L_x^2([0,T]\times\R)} \lesssim \|\phi(0)-\psi(0)\|_{L^2(\R)}e^{CT^{1/2}(\|\phi\|_{L_t^4 L_x^\infty([0,T]\times\R)}^2 + \|\psi\|_{L_t^4 L_x^\infty([0,T]\times\R)}^2)},
\end{equation}
where $C>0$ is an absolute constant.
\end{prop}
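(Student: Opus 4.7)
My plan is the standard Strichartz-based contraction/iteration argument that exploits the mass-subcritical nature of the 1D cubic NLS, capped by a Grönwall-type stability estimate. For local well-posedness I would apply the homogeneous and inhomogeneous Schrödinger Strichartz estimates $\|e^{it\D}\phi_0\|_{S^0(I\times\R)} \lesssim \|\phi_0\|_{L^2}$ and $\|\int_0^t e^{i(t-\tau)\D}F(\tau)d\tau\|_{S^0(I\times\R)} \lesssim \|F\|_{N^0(I\times\R)}$ to the Duhamel map $\Lambda(\phi) = e^{it\D}\phi_0 - i\kappa\int_0^t e^{i(t-\tau)\D}(|\phi|^2\phi)d\tau$. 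The key nonlinear bound uses the admissible pair $(8,4)$, Hölder in space, and Hölder in time on an interval of length $T_0$:
\[
\||\phi|^2\phi\|_{N^0} \le \||\phi|^2\phi\|_{L_t^{8/7}L_x^{4/3}} \le \|\phi\|_{L_t^{24/7}L_x^4}^3 \le T_0^{1/2}\|\phi\|_{L_t^8 L_x^4}^3 \lesssim T_0^{1/2}\|\phi\|_{S^0}^3.
\]
A contraction on the closed ball of radius $2C\|\phi_0\|_{L^2}$ in $S^0([-T_0,T_0]\times\R)$, with $T_0$ chosen so that $CT_0^{1/2}\|\phi_0\|_{L^2}^2$ is small, then produces a unique Strichartz solution on an interval of length $T_0 \sim \|\phi_0\|_{L^2}^{-4}$ depending only on the mass of the data.

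For global existence and the explicit growth bound \eqref{eq:Strich_bbd}, I would invoke conservation of mass $\|\phi(t)\|_{L^2} = \|\phi_0\|_{L^2}$, justified at $L^2$ regularity by approximating $\phi_0$ by smooth data, using the integration-by-parts identity valid for smooth solutions, and passing to the limit via the stability estimate proved in the next step. Since the local lifespan depends only on the conserved mass, I may iterate the local theory on $\sim T\|\phi_0\|_{L^2}^4$ consecutive subintervals covering $[-T,T]$; summing per-subinterval Strichartz bounds of size $\lesssim \|\phi_0\|_{L^2}$ delivers exactly $\|\phi\|_{S^0([-T,T]\times\R)} \lesssim T\|\phi_0\|_{L^2}^5$.

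For the Lipschitz dependence \eqref{eq:dep_bnd}, I would write the difference equation for $w \coloneqq \phi-\psi$, use the pointwise algebra bound $||\phi|^2\phi - |\psi|^2\psi| \lesssim (|\phi|^2+|\psi|^2)|w|$, and apply the $L^2$-isometry of $e^{it\D}$ inside the Duhamel formula for $w$ to obtain
\[
\|w(t)\|_{L^2} \le \|w(0)\|_{L^2} + C\int_0^t (\|\phi(\tau)\|_{L^\infty}^2 + \|\psi(\tau)\|_{L^\infty}^2)\|w(\tau)\|_{L^2}\,d\tau.
\]
Grönwall's inequality, followed by Cauchy--Schwarz in time $\int_0^T \|\phi(\tau)\|_{L^\infty}^2 d\tau \le T^{1/2}\|\phi\|_{L_t^4 L_x^\infty([0,T]\times\R)}^2$ (the latter norm finite for Strichartz solutions since $(4,\infty)$ is admissible), yields precisely \eqref{eq:dep_bnd}. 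I anticipate no genuine obstacles; the argument is classical for mass-subcritical NLS, and the only care needed is the choice of Hölder exponents to ensure (i) a strictly positive gain of $T$ in the contraction (so the local time depends only on mass) and (ii) the exponent in \eqref{eq:dep_bnd} features the $L_t^4 L_x^\infty$ norm exactly as stated.
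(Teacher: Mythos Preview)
Your proposal is correct and follows essentially the same approach as the paper's proof. The only cosmetic differences are that the paper first rescales the solution to unit mass via $\phi_\lambda(t,x)=\lambda\phi(\lambda^2 t,\lambda x)$ with $\lambda=\|\phi_0\|_{L^2}^{-2}$ (whereas you track $\|\phi_0\|_{L^2}$ explicitly), and the paper phrases the local step as a bootstrap/continuity argument on the Strichartz norm rather than a contraction mapping; the chopping-into-subintervals iteration and the Gr\"onwall-plus-Cauchy--Schwarz stability argument are identical.
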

\begin{proof}
We only sketch the proofs of the estimates \eqref{eq:Strich_bbd}, \eqref{eq:dep_bnd}. By rescaling the solution through
\begin{equation}
\phi_\lambda(t,x) \coloneqq \lambda\phi(\lambda^2 t,\lambda x) \qquad \lambda = \|\phi_0\|_{L^2(\R)}^{-2},
\end{equation}
we may assume without loss of generality that $\phi$ has unit mass. It follows from Duhamel's formula, H\"older's inequality, and Strichartz estimates that
\begin{equation}
\label{eq:Duh_bnd}
\|\phi\|_{S^0([-T,T]\times\R)} \leq C + T^{1/2} \|\phi\|_{S^0([-T,T]\times\R)}^3,
\end{equation}
where $C>0$ is some absolute constant. So the minimal time $T_0>0$ such that $\|\phi\|_{S^0([-T_0,T_0]\times\R)} = 2C$ must satisfy the lower bound $T_0\gtrsim 1$. Now given an interval $[0,T]$, for $T>0$, (by time reversal symmetry the case $[-T,0]$ will follow from our argument), we chop it into $\sim T/T_0$ subintervals of length $\sim T_0$. By using conservation of mass and iterating the argument using \eqref{eq:Duh_bnd} on each subinterval, we find that
\begin{equation}
\|\phi\|_{S^0([-T,T]\times\R)} \lesssim T.
\end{equation}

For the dependence estimate, it follows from subtracting the Duhamel formulae for $\phi$ and $\psi$, then applying Strichartz estimates, followed by H\"older's inequality that
\begin{align}
\|\phi(t)-\psi(t)\|_{L^2(\R)} &\leq \|\phi(0)-\phi(0)\|_{L^2(\R)} + C\int_0^t \|\phi(s)-\psi(s)\|_{L^2(\R)}\paren*{\|\phi(s)\|_{L^\infty(\R)}^2 + \|\psi(s)\|_{L^\infty(\R)}^2}ds .
\end{align}
By the Gronwall-Bellman inequality followed by Cauchy-Schwarz,
\begin{align}
\|\phi(t)-\psi(t)\|_{L^2(\R)} &\leq \|\phi(0)-\psi(0)\|_{L^2(\R)}e^{C\int_0^t (\|\phi(s)\|_{L^\infty(\R)}^2 + \|\psi(s)\|_{L^\infty(\R)}^2)ds} \nn\\
&\leq \|\phi(0)-\psi(0)\|_{L^2(\R)}e^{Ct^{1/2}(\|\phi\|_{L_s^4L_x^\infty([0,t]\times\R)}^2 + \|\psi\|_{L_s^4L_x^\infty([0,t]\times\R)}^2)},
\end{align}
which yields \eqref{eq:dep_bnd}.
\end{proof}

The global existence in \cref{prop:nls_gwp} is a consequence of the $L^2$-subcritical nature of the local theory for the equation (i.e. the time of existence depends on $\|\phi_0\|_{L^2(\R)}$) and conservation of mass. In addition to conservation of energy and momentum, the 1D cubic NLS has infinitely many conserved quantities, a consequence of its integrability by the inverse scattering transform. In fact, Koch and Tataru \cite{KT2018} and Killip, Visan, and Zhang \cite{KVZ2018} have shown that for each $s>-1/2$, there is a conserved quantity controlling the $H^s(\R)$ norm of the solution.

\begin{remark}
\label{eq:KV_gwp}
By heavily exploiting the integrability of the equation, Harrop-Griffiths, Killip, and Visan \cite{HKV2020} have recently shown that the NLS is globally well-posed in $H^{s}(\R)$, for any $s>-1/2$, in the sense that the solution map extends uniquely from the space $\Sc(\R)$. Note that $s=-1/2$ is the scaling-critical regularity for the 1D cubic NLS.  
\end{remark}

\section{The counting functional $\beta_N$}
\label{sec:MR}
\subsection{Projectors}
\label{ssec:MR_proj}
We first define the projectors underlying the definition of the functional $\beta_N$ in the statement of the proposition. For $\phi\in L^2(\R)$, we define the projectors
\begin{equation}
\label{eq:pq_def}
p^\phi \coloneqq \ket*{\phi}\bra*{\phi}, \qquad q^\phi \coloneqq \id - p^\phi,
\end{equation}
where $\id$ denotes the identity operator on $L^2(\R)$. For $N\in\N$ and $j\in\{1,\ldots,N\}$, we define
\begin{equation}
\label{eq:pj_qj_def}
p_j^\phi \coloneqq \id^{\otimes j-1} \otimes p^\phi\otimes \id^{\otimes N-j}, \qquad q_j^\phi \coloneqq \id_N-p_j^\phi = \id^{\otimes j-1}\otimes q^\phi \otimes \id^{\otimes N-j},
\end{equation}
where $\id_N=\id^{\otimes N}$ denotes the identity operator on $L^2(\R^N)$. Since $\id = p^\phi+q^\phi$, it follows that
\begin{equation}
\id_N = (p_1^\phi+q_1^\phi)\cdots (p_N^\phi+q_N^\phi),
\end{equation}
and therefore
\begin{equation}
\label{eq:Pk_def}
\id_N = \sum_{k=0}^N P_k^\phi, \qquad P_k^\phi\coloneqq \sum_{ {\ua_N \in\{0,1\}^N}\atop {|\ua_N|=k}} \prod_{j=1}^N {(p_j^{\phi})}^{1-\alpha_j}{(q_j^{\phi})}^{\alpha_j}.
\end{equation}
We define $P_k^\phi$ to be the zero operator on $L^2(\R^N)$ for $k\in\Z\setminus\{0,\ldots,N\}$. Important properties of the operators $P_k^\phi$ are the following:
\begin{enumerate}[(i)]
\item
$P_k^\phi$ is an orthogonal projector on $L^2(\R^N)$;
\item
$P_k^\phi (L_{sym}^2(\R^N)) \subset L_{sym}^2(\R^N)$;
\item
$P_k^\phi P_l^\phi=\delta_{kl}P_k^\phi$, where $\delta_{kl}$ is the Kronecker delta function;
\item
$p_j^\phi,q_j^\phi$ commute with $P_k^\phi$, for any $j\in\{1,\ldots,N\}$ and $k\in\Z$.
\end{enumerate}
To avoid cumbersome notation, we shall now drop the superscript $\phi$ in the projectors; but the reader should always keep in mind the implicit dependence on $\phi$.

\begin{remark}
In the sequel, we frequently use without comment the elementary fact that $p_j,q_j$ are self-adjoint and that we have the operator norm identities
\begin{equation}
\|p_j\|_{L_{\ux_N}^2(\R^N)\rightarrow L_{\ux_N}^2(\R^N)} = \|q_j\|_{L_{\ux_N}^2(\R^N)\rightarrow L_{\ux_N}^2(\R^N)} =1.
\end{equation}
\end{remark}

Given a function $f:\Z\rightarrow\C$, we define the operator
\begin{equation}
\label{eq:hat_def}
\wh{f} \coloneqq \sum_{k\in\Z} f(k)P_k = \sum_{k=0}^N f(k)P_k.
\end{equation}
The reader may check that for $f,g:\Z\rightarrow\C$, we have that $\wh{fg} = \wh{f}\wh{g}$. Furthermore, since $p_j, q_j, P_k$ commute, it follows that $\wh{f}$ commutes with $p_j, q_j, P_k$. Additionally, if $f,g$ are such that $f\geq g$. Then $\wh{f}\geq \wh{g}$. If $f\geq 0$, then we agree to abuse notation by writing
\begin{equation}
f^{-1}(k)\coloneqq \frac{1}{f(k)}1_{>0}(f(k)) \quad \text{and} \quad \wh{f}^{-1} \coloneqq \sum_{k\in\Z} f^{-1}(k)P_k
\end{equation}
with the convention that $0\cdot\infty=0$.

\begin{mydef}[Counting functions]
\label{def:mn_N}
Given $N\in\N$, we define the functions $m_N, n_N:\Z\rightarrow[0,\infty)$ by
\begin{align}
m_N(k)\coloneqq \frac{k}{N}1_{\geq 0}(k) \quad \text{and} \quad n_N(k)\coloneqq \sqrt{\frac{k}{N}}1_{\geq 0}(k), \qquad \forall k\in\Z.
\end{align}
With the notation introduced in \eqref{eq:hat_def}, we define the quantities
\begin{align}
\alpha_N(\Phi_N,\phi) \coloneqq \ip{\Phi_N}{\wh{m_N}\Phi_N}_{L^2(\R^N)}, \label{eq:alpha_def} \\
\beta_N(\Phi_N,\phi) \coloneqq \ip{\Phi_N}{\wh{n_N}\Phi_N}_{L^2(\R^N)}.\label{eq:beta_def}
\end{align}
If $\Phi_N(t), \phi(t)$ are time-dependent, then we agree to use the compact notation $\alpha_N(t)$ and $\beta_N(t)$.
\end{mydef}

\begin{remark}
\label{rem:alpha_N}
Since $\sum_{k=0}^N P_k = \id_N$, we have that
\begin{equation}
\frac{1}{N}\sum_{j=1}^N q_j = \frac{1}{N}\sum_{k\in\Z}\sum_{j=1}^N q_j P_k.
\end{equation}
By unpacking the definition of $P_k$ in \eqref{eq:Pk_def}, the reader can check that $\sum_{j=1}^N q_j P_k = kP_k$, which implies that
\begin{equation}
\frac{1}{N}\sum_{j=1}^N q_j = \sum_{k\in\Z} \frac{k}{N}P_k = \wh{m_N}.
\end{equation}
It then follows from the symmetry of the wave function $\Phi_N$ under exchange of particle labels that
\begin{equation}
\al_N = \frac{1}{N}\sum_{i=1}^N \ip{\Phi_N}{q_i\Phi_N} = \ip{\Phi_N}{q_1\Phi_N}.
\end{equation}
\end{remark}

We record two technical lemmas from \cite{KP2010} of frequent use in \cref{sec:MR}.
\begin{lemma}[{\cite[Lemma 3.9]{KP2010}}]
\label{lem:mN_est}
For any function $f:\Z\rightarrow [0,\infty)$, the following hold:
\begin{enumerate}[(i)]
\item
\label{item:mN_q1}
\begin{equation}
\|\wh{f}^{1/2}q_1\Phi_N\|_{L_{\ux_N}^2}^2=\ip{\Phi_N}{\wh{f}q_1 \Phi_N}_{L_{\ux_N}^2}= \ip{\Phi_N}{\wh{f}\wh{m_N}\Phi_N}_{L_{\ux_N}^2},
\end{equation}
\item
\label{item:mN_q1q2}
\begin{equation}
\|\wh{f}^{1/2}q_1q_2\Phi_N\|_{L_{\ux_N}^2}^2 = \ip{\Phi_N}{\wh{f}q_1 q_2\Phi_N}_{L_{\ux_N}^2} \leq \frac{N}{N-1} \ip{\Phi_N}{\wh{f}\wh{m_N}^2\Phi_N}_{L_{\ux_N}^2}.
\end{equation}
\end{enumerate}
\end{lemma}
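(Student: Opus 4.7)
For part (i), the plan is to exploit three elementary facts in sequence: the self-adjointness of $q_1$ and $\wh{f}^{1/2}$, the commutation relation $[q_1,\wh{f}]=0$ inherited from $[q_1,P_k]=0$, and the projector identity $q_1^2=q_1$. Expanding the squared norm as an inner product and pushing one copy of $q_1$ through $\wh{f}^{1/2}$ yields $\ip{\Phi_N}{q_1\wh{f}q_1\Phi_N}=\ip{\Phi_N}{\wh{f}q_1^2\Phi_N}=\ip{\Phi_N}{\wh{f}q_1\Phi_N}$. For the remaining equality, I would invoke the bosonic symmetry of $\Phi_N$ together with the invariance of $\wh{f}$ under relabeling of particles (since each $P_k$ is itself a symmetric sum over $\ua_N\in\{0,1\}^N$ with $|\ua_N|=k$). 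Averaging over single-particle positions then gives
\begin{equation*}
\ip{\Phi_N}{\wh{f}q_1\Phi_N}=\frac{1}{N}\sum_{j=1}^N\ip{\Phi_N}{\wh{f}q_j\Phi_N}=\ip{\Phi_N}{\wh{f}\wh{m_N}\Phi_N},
\end{equation*}
where the last equality uses the identity $\frac{1}{N}\sum_{j=1}^N q_j=\wh{m_N}$ from \cref{rem:alpha_N}.

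For part (ii), the equality is handled identically, now using that $q_1$ and $q_2$ commute with each other and both commute with $\wh{f}$, so $(q_2q_1)\wh{f}(q_1q_2)=\wh{f}q_1^2q_2^2=\wh{f}q_1q_2$. The new content is the upper bound. Averaging over pairs by permutation symmetry converts
\begin{equation*}
\ip{\Phi_N}{\wh{f}q_1q_2\Phi_N}=\frac{1}{N(N-1)}\sum_{i\neq j}\ip{\Phi_N}{\wh{f}q_iq_j\Phi_N}.
\end{equation*}
The main algebraic input is the identity $\bigl(\sum_{i=1}^N q_i\bigr)^2=N^2\wh{m_N}^2$ from \cref{rem:alpha_N}, combined with $\sum_{i=1}^N q_i^2=\sum_i q_i=N\wh{m_N}$, which gives $\sum_{i\neq j}q_iq_j=N^2\wh{m_N}^2-N\wh{m_N}$. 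Substituting and simplifying yields
\begin{equation*}
\ip{\Phi_N}{\wh{f}q_1q_2\Phi_N}=\frac{N}{N-1}\ip{\Phi_N}{\wh{f}\wh{m_N}^2\Phi_N}-\frac{1}{N-1}\ip{\Phi_N}{\wh{f}\wh{m_N}\Phi_N},
\end{equation*}
and the desired inequality follows by dropping the second term, since $\wh{f}\wh{m_N}$ is a nonnegative self-adjoint operator (the product of two commuting nonnegative operators in the spectral calculus of the family $\{P_k\}$).

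No serious obstacle is anticipated here: the entire argument is functional calculus on the commuting projector family $\{P_k\}$ together with exchange symmetry of the bosonic wave function. The one step deserving careful bookkeeping is the legitimacy of passing from $q_1$ (resp.\ $q_1q_2$) to $q_j$ (resp.\ $q_iq_j$) inside the inner product, which rests on $\wh{f}$ commuting with every transposition of particle labels — a property immediate from the symmetric definition of each $P_k$ but worth recording explicitly, as it is the only place where the bosonic subspace hypothesis on $\Phi_N$ enters.
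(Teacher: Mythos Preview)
Your argument is correct and is the standard one. Note, however, that the paper does not actually prove this lemma: it is stated with a citation to \cite[Lemma 3.9]{KP2010} and no proof is given, so there is nothing in the paper itself to compare your approach against. Your proof is essentially the same as the one in the cited reference.
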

Given $n\in\N$, we define the shift operator
\begin{equation}
\label{eq:shift_def}
\tau_n: \C^{\Z} \rightarrow \C^{\Z}, \qquad (\tau_n f)(k) \coloneqq f(k+n), \quad \forall k\in\Z, \ f\in \C^{\Z}.
\end{equation}
\begin{lemma}[{\cite[Lemma 3.10]{KP2010}}]
\label{lem:shift}
Let $r\in\N$, and let $A^{(r)}$ be a linear operator on $L_{sym}^2(\R^r)$. For $i\in\{1,2\}$, let $Q_i$ be a projector of the form
\begin{equation}
Q_i = \#_1 \cdots \#_r,
\end{equation}
where each $\#$ stands for either $p$ or $q$. Define the linear operator $A_{1\cdots r}^{(r)} \coloneqq A^{(r)}\otimes \id^{N-r}$. Then for any function $f:\Z\rightarrow\C$, we have that 
\begin{equation}
Q_1 A_{1\cdots r}^{(r)} \wh{f} Q_2 = Q_1 \wh{(\tau_n f)} A_{1\cdots r}^{(r)} Q_2,
\end{equation}
where $n\coloneqq n_2-n_1$ and $n_i$ is the number of factors $q$ in $Q_i$, for $i\in \{1,2\}$.
\end{lemma}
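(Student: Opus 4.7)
The plan is to prove the identity by decomposing each projector $P_k$ according to how many $q$-factors fall in the ``interior'' block of variables $(x_1, \ldots, x_r)$ on which $A^{(r)}_{1\cdots r}$ acts nontrivially, versus the ``exterior'' block $(x_{r+1}, \ldots, x_N)$. Introducing ad hoc notation $P_j^{(1,\ldots,r)}$ and $P_l^{(r+1,\ldots,N)}$ for the analogues of $P_k$ built from the first $r$ and last $N-r$ single-particle projectors $p_i,q_i$, one has
\[
P_k = \sum_{j+l=k} P_j^{(1,\ldots,r)}\, P_l^{(r+1,\ldots,N)},
\]
and the two factors commute since they act on disjoint coordinate blocks. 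Crucially, $A^{(r)}_{1\cdots r}$ also commutes with every exterior projector $P_l^{(r+1,\ldots,N)}$.

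The second ingredient is a simple ``selection rule'' for $Q_1$ and $Q_2$. Since $Q_i$ is one specific pattern $\#_1 \cdots \#_r$ of $p$'s and $q$'s containing exactly $n_i$ factors of $q$, and distinct such patterns are pairwise orthogonal (using $pq=qp=0$ together with $p^2=p$, $q^2=q$), multiplying $Q_i$ by the sum $P_j^{(1,\ldots,r)}$ of all patterns with $j$ $q$'s picks out only the matching pattern, which occurs precisely when $j=n_i$. Thus
\[
Q_i\, P_j^{(1,\ldots,r)} = P_j^{(1,\ldots,r)}\, Q_i = \delta_{j,n_i}\, Q_i.
\]

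With these two observations in hand, the argument reduces to bookkeeping. I would expand
\[
Q_1 A_{1\cdots r}^{(r)} \wh{f}\, Q_2 = \sum_k f(k)\, Q_1\, A_{1\cdots r}^{(r)} \sum_{j+l=k} P_j^{(1,\ldots,r)} P_l^{(r+1,\ldots,N)}\, Q_2,
\]
commute $P_l^{(r+1,\ldots,N)}$ through $Q_2$, and then invoke $P_j^{(1,\ldots,r)} Q_2 = \delta_{j,n_2} Q_2$ to collapse the $j$-sum to $j=n_2$. Reindexing $m=k-n_2$ yields $Q_1 A_{1\cdots r}^{(r)} Q_2 \sum_m f(m+n_2)\, P_m^{(r+1,\ldots,N)}$. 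Performing exactly the same maneuver on the right-hand side $Q_1 \wh{\tau_n f}\, A_{1\cdots r}^{(r)} Q_2$, but absorbing the interior projector into $Q_1$ from the left via $Q_1 P_j^{(1,\ldots,r)} = \delta_{j,n_1} Q_1$, produces $Q_1 A_{1\cdots r}^{(r)} Q_2 \sum_m f(m+n_1+n)\, P_m^{(r+1,\ldots,N)}$. The two expressions match precisely when $n=n_2-n_1$, which is the defining choice in the statement.

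There is no real obstacle once the interior/exterior splitting is adopted; the shift $n_2-n_1$ emerges automatically from the asymmetry between absorbing $P_j^{(1,\ldots,r)}$ into $Q_2$ on the left versus into $Q_1$ on the right. The symmetric-subspace hypothesis on $A^{(r)}$ plays no role in the algebraic identity itself; it merely guarantees that all operators involved preserve $L^2_{sym}(\R^N)$, which is the natural ambient space in the applications to $\beta_N$.
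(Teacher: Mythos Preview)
The paper does not give its own proof of this lemma; it is simply recorded as \cite[Lemma 3.10]{KP2010} and used as a black box. Your argument is correct and is precisely the standard proof: split $P_k$ into interior and exterior parts via $P_k=\sum_{j+l=k}P_j^{(1,\ldots,r)}P_l^{(r+1,\ldots,N)}$, use that $A_{1\cdots r}^{(r)}$ and $Q_1,Q_2$ commute with the exterior factors, and collapse the interior factor against $Q_i$ by the selection rule $Q_iP_j^{(1,\ldots,r)}=\delta_{j,n_i}Q_i$. Your closing remark is also accurate: the symmetry hypothesis on $A^{(r)}$ is irrelevant to the algebra and only ensures compatibility with the bosonic subspace in applications.
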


\subsection{Estimate for $\beta_N$}
\label{ssec:MR_beta}
The workhorse of this article is the following proposition giving an estimate for the evolution of the functional $\beta_N$. The reader will recall that $E_N^{\Phi_N}$ denotes the microscopic energy per particle \eqref{eq:intro_epp} and $E^\phi$ denotes the NLS energy \eqref{eq:nls_e}.

\begin{restatable}[Evolution of $\beta_N$]{prop}{betaevol}
\label{prop:beta_evol}
Let $\kappa\in\{\pm 1\}$. There exists an absolute constant $C>0$ such that the following holds. Let $\phi$ be a solution to \eqref{eq:NLS} with initial datum $\phi_0$, and let $\Phi_N$ be a solution to \eqref{eq:LL_ham} with initial datum $\Phi_{N,0}$. Then for every $N\in\N$,
\begin{equation}
\beta_N(\Phi_N(t),\phi(t)) \leq \paren*{\beta_N(\Phi_{N,0},\phi_0) + C |t| \paren*{\frac{\|\phi_0\|_{H^1(\R)}^2}{N^{1/3}} + \frac{\|\phi_0\|_{H^2(\R)}^2}{N^{1/2}} +(E_N^{\Phi}-E^{\phi})\|\phi_0\|_{H^1(\R)}^2}}  e^{C\|\phi_0\|_{H^2(\R)}^2 |t|}.
\end{equation}
\end{restatable}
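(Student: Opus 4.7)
The plan is to establish the bound by a Gronwall-type argument applied to $\beta_N(t)$. First I would differentiate $\beta_N(\Phi_N(t),\phi(t))$ in time. Since $\wh{n_N}(t)$ depends on $t$ only through the projectors $p_j(t)=\ket*{\phi(t)}\bra*{\phi(t)}$ acting in the $j$-th particle variable and $\phi$ solves \eqref{eq:NLS}, the chain rule yields $\p_t p_j = -i\comm{h_j}{p_j}$ with $h_j\coloneqq -\D_j+\kappa V_j^\phi$, and hence $\p_t\wh{n_N} = -i\comm{\sum_{j=1}^N h_j}{\wh{n_N}}$. Combining this with the Schr\"odinger equation \eqref{eq:N_Schr}, the kinetic contributions cancel and I obtain
\begin{equation*}
\tfrac{d}{dt}\beta_N(t) = i\ip{\Phi_N}{\comm{W_N}{\wh{n_N}}\Phi_N}, \qquad W_N\coloneqq \frac{\kappa}{N}\sum_{1\leq i<j\leq N}V_{ij} - \kappa\sum_{j=1}^N V_j^\phi,
\end{equation*}
where $V_{ij}\coloneqq\d(X_i-X_j)$. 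Bosonic symmetry of $\Phi_N$ collapses the two sums into $\binom{N}{2}$ copies of a $V_{12}$-commutator plus $N$ copies of a $V_1^\phi$-commutator, which I treat in parallel.

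Next I would insert the resolution $\id_N=(p_1+q_1)(p_2+q_2)$ on both sides of each commutator. By \cref{lem:shift}, any sandwich whose left and right projectors carry the same number of $q$-factors commutes with $\wh{n_N}$ and therefore vanishes. Pairing conjugates and again invoking bosonic symmetry reduces the surviving pieces to three nontrivial contributions $\mathrm{Term}_1,\mathrm{Term}_2,\mathrm{Term}_3$. Terms $1$ and $2$, each involving only a single extra $q$-factor on one side, can be closed by Cauchy--Schwarz together with \cref{lem:mN_est}, producing bounds of the form $C\|\phi\|_{L^\infty}^2\beta_N(t)$ plus an error proportional to $|E_N^\Phi-E^\phi|\|\phi_0\|_{H^1}^2$. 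The energy defect appears because the kinetic energy carried by $q_1\Phi_N$, produced by the Cauchy--Schwarz step, must be absorbed via conservation of energy for both $\Phi_N$ and $\phi$.

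The principal obstacle is
\begin{equation*}
\mathrm{Term}_3 \sim (N-1)\ip{\Phi_N}{p_1p_2\comm{V_{12}}{\wh{n_N}}q_1q_2\Phi_N}.
\end{equation*}
A direct application of \cref{lem:shift} collapses this commutator to $-p_1p_2V_{12}q_1q_2\wh{\nu_N}$ with $\nu_N(k)=n_N(k)-n_N(k+2)$ satisfying $|\nu_N(k)|\lesssim(N(k+1))^{-1/2}$. Because $\d\notin L^{p_0}+L^\infty$ for any $p_0\in(1,\infty)$, the Knowles--Pickl decomposition into regular and singular parts is unavailable. Instead, I would introduce the mollified potential $V_\sigma(x)\coloneqq N^\sigma\tl V(N^\sigma x)$ for a parameter $\sigma\in(0,1)$ to be optimized at the end, and split
\begin{equation*}
V_{12} = (V_{12}-V_{\sigma,12}) + V_{\sigma,12}.
\end{equation*}
For the singular residual $V_{12}-V_{\sigma,12}$, the factor $p_1p_2$ on the left contracts the interaction variables against the smooth kernel $\phi\otimes\phi$, while the remaining factor $q_1q_2\wh{\nu_N}\Phi_N\in H^1$ is $C^{1/2}$-H\"older in the second particle variable by \cref{lem:H1_hold} with a norm controlled by $\|\nabla_1\Phi_N\|_{L^2}^2\leq E_N^\Phi$ (itself bounded through conservation of energy and a Gagliardo--Nirenberg interpolation in the attractive case). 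Combined with the scaling $\int_\R|z|^{1/2}V_\sigma(z)\,dz\sim N^{-\sigma/2}$, this residual contributes an error $\lesssim N^{-\sigma/2}\|\phi\|_{L^\infty}^2\beta_N(t)^{1/2}$ plus $O(|E_N^\Phi-E^\phi|)$. For the regular part $V_{\sigma,12}$, I would run the bosonic-symmetry argument of Knowles--Pickl adapted to $V_\sigma$: after using part (ii) of \cref{lem:mN_est} to replace $q_1q_2\Phi_N$ by $\wh{m_N}$-weighted objects and exploiting $\|V_\sigma\|_{L^2}\sim N^{\sigma/2}$ together with $\|\wh{\nu_N}\|_{\mathrm{op}}\lesssim N^{-1/2}$, the factor $N-1$ out front is absorbed at the cost of $N^{\sigma/2-1/2}\|\phi\|_{L^\infty}$. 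Optimizing $\sigma$ balances the rates $N^{-\sigma/2}$ and $N^{\sigma/2-1/2}$ from the two parts and produces the $N^{-1/3}$ and $N^{-1/2}$ errors recorded in the statement.

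Assembling the three estimates yields a differential inequality of the form
\begin{equation*}
\tfrac{d}{dt}\beta_N(t) \leq C\|\phi(t)\|_{H^2}^2\,\beta_N(t) + \frac{C\|\phi_0\|_{H^1}^2}{N^{1/3}} + \frac{C\|\phi_0\|_{H^2}^2}{N^{1/2}} + C|E_N^\Phi-E^\phi|\,\|\phi_0\|_{H^1}^2,
\end{equation*}
where $\|\phi\|_{L^\infty}^2\lesssim\|\phi\|_{H^1}\|\phi\|_{H^2}$ by Sobolev embedding and $\|\phi(t)\|_{H^2}$ is propagated uniformly in time by the $L^2$-subcritical well-posedness of the cubic NLS (cf.\ \cref{prop:nls_gwp} and persistence of regularity). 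Standard Gronwall then produces the claimed exponential bound. The main obstacle, as anticipated, is the singular residual in $\mathrm{Term}_3$: the subtle point is verifying that the partial $C^{1/2}$-H\"older control on $\Phi_N$ in a single variable can be glued with $L^2$-factors in the remaining $N-1$ variables and with the shifted operator $\wh{\nu_N}$ without losing powers of $N$; once this step is in hand, optimization over $\sigma$ and the remaining estimates reduce to routine bookkeeping.
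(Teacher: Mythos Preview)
Your overall architecture is right—differentiate $\beta_N$, decompose via $(p_1+q_1)(p_2+q_2)$, reduce to three terms, and attack $\mathrm{Term}_3$ by mollifying the $\delta$-potential at scale $N^{-\sigma}$—but two substantive pieces are missing.

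\textbf{The regular part needs a second parameter.} Your treatment of the mollified contribution $p_1p_2V_{\sigma,12}\wh{\nu_N}q_1q_2$ says only ``run the bosonic-symmetry argument of Knowles--Pickl'' and invokes $\|V_\sigma\|_{L^2}\sim N^{\sigma/2}$ together with $\|\wh{\nu_N}\|_{\mathrm{op}}\lesssim N^{-1/2}$. This is not enough. After symmetrizing over particles $2,\ldots,N$ and applying Cauchy--Schwarz, the square of the sum splits into a diagonal part $B$ and an off-diagonal part $A$. The diagonal part indeed gives $N^{(\sigma-1)/2}\|\phi\|_{L^4}^2$, but the off-diagonal part, estimated only with the tools you list, produces $\|\phi\|_{L^\infty}^2\sqrt{\beta_N}$ with \emph{no} $N$-decay. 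After Young's inequality this becomes $\|\phi\|_{L^\infty}^4\beta_N + O(1)$, and the $O(1)$ error destroys the Gronwall argument. The paper handles this by introducing a \emph{second} parameter $\delta\in(0,1)$ and a partition of unity $\chi^{(1)}=1_{\{k\leq N^{1-\delta}\}}$, $\chi^{(2)}=1-\chi^{(1)}$ in the spectral variable $k$: on $\supp\chi^{(1)}$ one exploits $(\tau_2\chi^{(1)})(k)n(k)\leq N^{-\delta/2}$ to make $A$ small, while on $\supp\chi^{(2)}$ one exploits $\chi^{(2)}(k)\nu(k)\lesssim N^{\delta/2}$ and splits $\wh{\nu}=\wh{\nu}^{1/2}\wh{\nu}^{1/2}$ to extract an extra $\sqrt{\beta_N}$. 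The optimization is then over \emph{both} $\sigma$ and $\delta$, yielding $(\sigma,\delta)=(1/3,2/3)$ and the rate $N^{-1/3}$. Your own balancing of $N^{-\sigma/2}$ against $N^{\sigma/2-1/2}$ gives $\sigma=1/2$ and rate $N^{-1/4}$, which neither matches the statement nor what you claim it produces.

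\textbf{The auxiliary control of $\|\nabla_1 q_1\Phi_N\|_{L^2}$ is a separate, nontrivial estimate.} Both $\mathrm{Term}_2$ (after integrating by parts via $\delta=\tfrac12\nabla\sgn$) and the H\"older step in $\mathrm{Term}_{3,1}$ produce factors of $\|\nabla_1 q_1\Phi_N\|_{L^2}$, not $\|\nabla_1\Phi_N\|_{L^2}$. These cannot be bounded by energy conservation alone; the paper devotes an entire proposition to showing
\[
\|\nabla_1 q_1\Phi_N\|_{L^2}^2 \lesssim (E_N^{\Phi}-E^{\phi}) + \|\phi\|_{H^2}\beta_N + \|\phi\|_{H^2}N^{-1/2},
\]
and this is precisely where the $H^2$ norm of $\phi$ enters (through $\|\Delta_1 p_1\|_{L^2\to L^2}\leq\|\Delta\phi\|_{L^2}$) and where the $N^{-1/2}$ error in the statement originates. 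Your single sentence about ``conservation of energy for both $\Phi_N$ and $\phi$'' does not indicate how to get this inequality; in particular, the argument requires comparing $\|\nabla_1(\id-p_1p_2)\Phi_N\|^2$ to the energy difference term by term, and in the attractive case one must also verify a sign condition ($\mathrm{Term}_6\leq 0$) by a judicious choice of auxiliary constants.

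A minor point: the paper carries out all of the above for a regularized Hamiltonian $H_{N,\vep}$ and passes to the limit $\vep\to0^+$ at the end, so that the manipulations with $V_{12}=\delta(X_1-X_2)$ are rigorously justified; you should at least flag this.
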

Rather than prove \cref{prop:beta_evol} directly, we prove a similar estimate for the approximation $\beta_{N,\vep}$ defined in \eqref{eq:beta_ep_def} below. The motivation is largely to avoid awkward notation involving distributions and that the validity of \cref{rem:reg_Ham} will become clear from our estimate for $\beta_{N,\vep}$ and the analysis in \cref{sec:proof_main}. Similarly to \eqref{eq:alpha_def} and \eqref{eq:beta_def}, we define 
\begin{align}
\alpha_{N,\vep}(t) &\coloneqq \alpha_N(\Phi_N^{\vep}(t),\phi(t)), \label{eq:alpha_ep_def} \\
\beta_{N,\vep}(t) &\coloneqq \beta_N(\Phi_N^\vep(t),\phi(t)), \label{eq:beta_ep_def}
\end{align}
where $\Phi_N^\vep$ is the solution to the regularized Schr\"odinger equation obtained by replacing $H_N$ in \eqref{eq:N_Schr} with $H_{N,\vep}$ defined in \cref{rem:reg_Ham}. Using the norm-resolvent convergence of $H_{N,\vep}$ to $H_N$ (see \cite[Theorem VII.25]{RSII}) and the following lemma, one can show that $\alpha_{N,\vep}\rightarrow\alpha_N$ and $\beta_{N,\vep}\rightarrow\beta_N$, as $\vep\rightarrow0^+$, uniformly on compact intervals on time. We leave the proof as a simple exercise for the reader.

\begin{lemma}
\label{lem:beta_ep_conv}
Let $T>0$, and let $f:\Z\rightarrow \C$ be bounded. For $N\in\N$ and $\vep>0$, define the functions $\vartheta_N, \vartheta_{N,\vep}:\R\rightarrow \C$ by
\begin{equation}
\vartheta_N(t) \coloneqq \ip{\Phi_N(t)}{\wh{f}(t)\Phi_N(t)}_{L_{\ux_N}^2} \quad \text{and} \quad \vartheta_{N,\vep}(t)\coloneqq \ip{\Phi_N^\vep(t)}{\wh{f}(t)\Phi_N^\vep(t)}_{L_{\ux_N}^2}, \qquad \forall t\in\R.
\end{equation}
Then for $N$ fixed,
\begin{equation}
\lim_{\vep\rightarrow 0^+} \sup_{|t|\leq T} \left|\vartheta_{N,\vep}(t)-\vartheta(t)\right| =0.
\end{equation}
\end{lemma}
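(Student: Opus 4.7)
The plan is to reduce the lemma to strong convergence of the propagators $e^{-itH_{N,\vep}}\to e^{-itH_N}$ on a single fixed initial datum, a convergence that is supplied by the norm-resolvent convergence $H_{N,\vep}\to H_N$ cited immediately before the lemma. The first observation is that $\wh{f}(t)=\sum_{k=0}^N f(k)P_k(t)$ is bounded on $L^2(\R^N)$ with operator norm at most $\|f\|_{\ell^\infty(\Z)}$, since the $P_k(t)$ are mutually orthogonal projections summing to $\id_N$; this bound is uniform in $t$ and is independent of whether the projectors $P_k(t)$ are applied to $\Phi_N(t)$ or $\Phi_N^\vep(t)$.

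Using unitarity of both Schr\"odinger evolutions (so that $\|\Phi_N(t)\|_{L^2}=\|\Phi_N^\vep(t)\|_{L^2}=\|\Phi_{N,0}\|_{L^2}$) together with sesquilinearity of $\ip{\cdot}{\cdot}$, I would write
\begin{equation}
\vartheta_{N,\vep}(t)-\vartheta_N(t) = \ip{\Phi_N^\vep(t)-\Phi_N(t)}{\wh{f}(t)\Phi_N^\vep(t)} + \ip{\Phi_N(t)}{\wh{f}(t)(\Phi_N^\vep(t)-\Phi_N(t))},
\end{equation}
and apply Cauchy--Schwarz to conclude
\begin{equation}
|\vartheta_{N,\vep}(t)-\vartheta_N(t)| \leq 2\|f\|_{\ell^\infty(\Z)}\,\|\Phi_{N,0}\|_{L^2}\,\|(e^{-itH_{N,\vep}}-e^{-itH_N})\Phi_{N,0}\|_{L^2(\R^N)}.
\end{equation}
Since the right-hand side factor involving $\Phi_{N,0}$ is independent of $\wh{f}$, the lemma reduces to showing
\begin{equation}
\sup_{|t|\leq T}\|(e^{-itH_{N,\vep}}-e^{-itH_N})\Phi_{N,0}\|_{L^2(\R^N)}\longrightarrow 0 \quad \text{as } \vep\to 0^+.
\end{equation}

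This remaining statement is the classical Trotter--Kato fact that norm-resolvent convergence of self-adjoint operators implies strong convergence of the unitary groups they generate, uniformly on compact time intervals; it applies directly to the cited $H_{N,\vep}\to H_N$ on the fixed test vector $\Phi_{N,0}$. The argument is entirely soft: neither the singularity of the $\delta$-potential nor any regularity assumption on $\Phi_{N,0}$ enters the reasoning beyond what is already encoded in the cited resolvent convergence. The only step that might warrant spelling out is the upgrade from pointwise-in-$t$ strong convergence to uniform convergence on $[-T,T]$, but this is standard and follows from an $\varepsilon$-net argument in $t$ together with the (uniform in $\vep$) equicontinuity of $t\mapsto e^{-itH_{N,\vep}}\Phi_{N,0}$ obtained by first approximating $\Phi_{N,0}$ by elements of a common core for the $H_{N,\vep}$.
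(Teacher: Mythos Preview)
Your proposal is correct and matches the paper's intended approach: the paper does not actually give a proof of this lemma, stating only that it follows from the norm-resolvent convergence of $H_{N,\vep}$ to $H_N$ (citing Reed--Simon) and leaving the details ``as a simple exercise for the reader.'' Your reduction via the uniform operator bound $\|\wh{f}(t)\|\leq\|f\|_{\ell^\infty}$ and the sesquilinear splitting, followed by an appeal to Trotter--Kato for the uniform-in-$t$ strong convergence of the unitary groups, is exactly the argument the paper has in mind. One minor remark: the uniformity on compact time intervals is already part of the standard Trotter--Kato statement for contraction semigroups under strong resolvent convergence, so you do not actually need the $\varepsilon$-net/common-core argument you sketch at the end (which would otherwise require some care, since $\mathrm{Dom}(H_N)\neq H^2(\R^N)$).
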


\begin{prop}[Evolution of $\beta_{N,\vep}$]
\label{prop:beta_ep_ineq}
For $\kappa\in\{\pm 1\}$, we have the estimate
\begin{equation}
\begin{split}
\dot{\beta}_{N,\vep}(t) &\lesssim \frac{\|\phi(t)\|_{L^\infty(\R)}^2}{N} + \frac{1}{N^{\sigma}} + \frac{\|\phi(t)\|_{L^4(\R)}^2}{N^{(1-\sigma)/2}} + \frac{\|\phi(t)\|_{L^\infty(\R)}^2}{N^{\delta/2}} +  N^{\frac{2(\sigma-1)+\delta}{2}} +\vep^{1/2}\|\phi(t)\|_{C^{1/2}(\R)}^2 \\
&\ph + \|\phi(t)\|_{C^{1/2}(\R)}^2\|\phi(t)\|_{H^1(\R)}^2 \beta_{N,\vep}(t)  +\paren*{1+\|\phi(t)\|_{C^{1/2}(\R)}^2}\|\nabla_1q_1(t)\Phi_N^\vep(t)\|_{L^2(\R^N)}^2,
\end{split}
\end{equation}
for every $t\in\R$, uniformly in $(\vep,\sigma,\delta)\in (0,1)^3$ and $N\in\N$.
\end{prop}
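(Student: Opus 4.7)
The plan is to differentiate $\beta_{N,\vep}(t) = \ip{\Phi_N^\vep(t)}{\wh{n_N}(t)\Phi_N^\vep(t)}$ in time, using the regularized Schr\"odinger equation for $\Phi_N^\vep$ together with the 1D cubic NLS for $\phi$. Since $\wh{n_N}(t)$ depends on $t$ only through the one-body projectors built from $\phi(t)$, and $\p_t p^\phi = -i\bigl[-\Delta + \kappa|\phi|^2,\, p^\phi\bigr]$, Heisenberg's identity gives
\begin{equation*}
\dot\beta_{N,\vep} = i\ip{\Phi_N^\vep}{\bigl[H_{N,\vep} - \sum_{j=1}^N\bigl(-\Delta_j + \kappa V_j^\phi\bigr),\; \wh{n_N}\bigr]\Phi_N^\vep},
\end{equation*}
where $V_j^\phi \coloneqq |\phi(X_j)|^2$. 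The kinetic pieces $-\Delta_j$ commute with $\wh{n_N}$ and drop out, leaving the discrepancy between the true pairwise interaction and its mean-field approximation. Inserting $\id_N = (p_1+q_1)(p_2+q_2)$ on each side of the surviving commutator, exploiting the bosonic symmetry of $\Phi_N^\vep$, and repeatedly applying the shift identity (\cref{lem:shift}) to move $\wh{n_N}$ past combinations of $p$'s and $q$'s, reduces the estimate to three essentially distinct bilinear forms: (i) a pure mean-field residual responsible for the $\|\phi\|_{L^\infty}^2/N$ contribution; (ii) a ``mixed'' term of the form $\ip{\Phi_N^\vep}{p_1 q_2 V_{\vep,12} q_1 q_2 \wh{\nu_N}\Phi_N^\vep}$ (and variants), absorbed into $\|\phi\|_{C^{1/2}}^2\|\phi\|_{H^1}^2\beta_{N,\vep} + (1+\|\phi\|_{C^{1/2}}^2)\|\nabla_1 q_1\Phi_N^\vep\|_{L^2}^2$ by Cauchy--Schwarz, \cref{lem:mN_est}, and the partial $C^{1/2}$-H\"older continuity of $\Phi_N^\vep$ from \cref{lem:H1_hold}; and (iii) the most delicate term
\begin{equation*}
\mathrm{Term}_3 \coloneqq \bigl|\ip{\Phi_N^\vep}{p_1 p_2 V_{\vep,12} q_1 q_2 \wh{\nu_N}\Phi_N^\vep}\bigr|,
\end{equation*}
where $\wh{\nu_N}$ arises from shifting $\wh{n_N}$ across $p_1 p_2$ and $q_1 q_2$ and satisfies $\nu_N \lesssim n_N^{-1}$.

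The main obstacle is $\mathrm{Term}_3$, which I would attack via the short-range approximation argument outlined in the roadmap. Introduce $V_\sigma(x) \coloneqq N^\sigma \tl V(N^\sigma x)$ with $\int \tl V = 1$, and split by the triangle inequality
\begin{equation*}
\mathrm{Term}_3 \leq \bigl|\ip{\Phi_N^\vep}{p_1 p_2 (V_{\vep,12}-V_{\sigma,12}) q_1 q_2 \wh{\nu_N}\Phi_N^\vep}\bigr| + \bigl|\ip{\Phi_N^\vep}{p_1 p_2 V_{\sigma,12} q_1 q_2 \wh{\nu_N}\Phi_N^\vep}\bigr|.
\end{equation*}
For the first (short-range) piece, both $V_\vep$ and $V_\sigma$ are unit-mass approximate identities satisfying $\int |x|^{1/2}(V_\vep+V_\sigma)(x)\,dx \lesssim \vep^{1/2}+N^{-\sigma/2}$. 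Writing $(V_{\vep,12}-V_{\sigma,12})\Phi_N^\vep$ as a difference of convolutional averages of $\Phi_N^\vep$ along the diagonal $x_1=x_2$ and bounding the pointwise variation by $|x_1-x_2|^{1/2}\|\nabla_2\Phi_N^\vep\|_{L_{x_2}^2}$ via \cref{lem:H1_hold}, then applying Cauchy--Schwarz in $\wh{\nu_N}$ (using $\nu_N\lesssim n_N^{-1}$ and \cref{lem:mN_est} to produce a $\beta_{N,\vep}^{1/2}$ factor) and the rank-two bound on $p_1 p_2$ (which brings in $\|\phi\|_{L^\infty}$ or $\|\phi\|_{C^{1/2}}$), yields the $\vep^{1/2}\|\phi\|_{C^{1/2}}^2$, $\|\phi\|_{C^{1/2}}^2\|\phi\|_{H^1}^2\beta_{N,\vep}$, and $(1+\|\phi\|_{C^{1/2}}^2)\|\nabla_1 q_1\Phi_N^\vep\|_{L^2}^2$ contributions.

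The mollified piece, now with $V_\sigma \in L^1\cap L^\infty$ and $\|V_\sigma\|_{L^2}^2 \sim N^\sigma$, is handled by a Knowles--Pickl-style argument. Decompose $V_\sigma$ at a secondary threshold governed by the parameter $\delta$ into its large-amplitude part (expressed as $\p_x \xi$ and integrated by parts, producing the $N^{(2(\sigma-1)+\delta)/2}$ error) and its bounded part. The key observation is that on $L^2_{\mathrm{sym}}(\R^N)$ bosonic symmetry makes the operator norm of $p_1 p_2 V_{\sigma,12} q_1 q_2$ smaller than the naive estimate by a factor $\sim N^{-1/2}$, which absorbs the $N^{\sigma/2}$ growth of $\|V_\sigma\|_{L^2}$. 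Combining this with Cauchy--Schwarz and \cref{lem:mN_est} (converting $q_1 q_2$ into $\wh{m_N}^2 \lesssim \beta_{N,\vep}/N$), and using $\|\phi\|_{L^2} = 1$ by mass conservation, produces the remaining $N^{-\sigma}$, $\|\phi\|_{L^4}^2 N^{-(1-\sigma)/2}$, and $\|\phi\|_{L^\infty}^2 N^{-\delta/2}$ contributions. Assembling the three parts completes the proof of the differential inequality.
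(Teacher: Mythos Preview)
Your overall framework is right: differentiate in time, reduce to the three $pp\text{-}qp$, $qp\text{-}qq$, and $pp\text{-}qq$ blocks, and attack the last via short-range mollification followed by a Knowles--Pickl symmetrization. Two points, however, do not match the actual argument and would not close as written.

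First, for the mixed term (ii) (the $q_1p_2\cdots q_1q_2$ block), the paper does \emph{not} use H\"older continuity of $\Phi_N^\vep$. Instead it writes $V_{\vep,12}=\nabla_1 X_{\vep,12}$ with $X_{\vep,12}=\tfrac{1}{2}(\sgn\ast V_\vep)(X_1-X_2)$, which has operator norm $\leq\tfrac{1}{2}$ uniformly in $\vep$, integrates by parts, and then commutes $\nabla_1$ past $p_1,q_1$ via \cref{lem:shift}. This is what produces $\|\nabla_1 q_1\Phi_N^\vep\|_{L^2}\sqrt{\beta_{N,\vep}}$. A H\"older argument alone does not give smallness here: there is no cancellation to exploit, only one $p$ is available (so the rank-two gain you invoke for the $pp\text{-}qq$ block is absent), and a direct $L^2$ bound on $V_{\sigma,12}p_2$ leaves a growing factor $N^{\sigma/2}$ with nothing to absorb it.

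Second, and more seriously, your account of how $\delta$ enters the mollified piece of $\mathrm{Term}_3$ is wrong. The paper does \emph{not} split $V_\sigma$ by amplitude or integrate by parts; $V_\sigma$ is a single bump of height $\sim N^\sigma$ and width $\sim N^{-\sigma}$, so an amplitude cutoff at any intermediate level is essentially vacuous. The partition is instead \emph{spectral}, in the counting variable $k$: one inserts $\wh{\chi^{(1)}}+\wh{\chi^{(2)}}=\id_N$ with $\chi^{(1)}(k)=1_{\{k\leq N^{1-\delta}\}}$. On the low-$k$ range one uses the pointwise bound $(\tau_2\chi^{(1)})(k)\,n(k)\leq N^{-\delta/2}$, which after the symmetrized $A/B$ split over $\sum_{i\neq j}$ and $\sum_i$ yields precisely the $N^{-\delta/2}\|\phi\|_{L^\infty}^2$ and $N^{-(1-\sigma)/2}\|\phi\|_{L^4}^2$ errors. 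On the high-$k$ range one uses $\chi^{(2)}(k)\nu(k)\lesssim N^{\delta/2}$ together with an extra factor of $\sqrt{\beta_{N,\vep}}$ extracted beforehand (via $\wh{\nu}^{1/2}\wh{\nu}^{1/2}$), and the same symmetrization produces $\|\phi\|_{L^\infty}^2\beta_{N,\vep}$ from the off-diagonal part and $N^{(2(\sigma-1)+\delta)/2}$ from the diagonal part. No divergence-form representation of $V_\sigma$ is used anywhere in this step.
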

\begin{proof}
By time-reversal symmetry, it is enough to consider $t\geq 0$. Following the argument in \cite[Subsubsection 3.3.2, pg. 113]{KP2010}, we see that $\beta_{N,\varepsilon}$ is differentiable and its derivative $\dot{\beta}_{N,\varepsilon}$ is given by
\begin{equation}
\dot{\beta}_{N,\varepsilon} = i\kappa\ip{\Phi_N^\vep}{\comm{\frac{1}{N}\sum_{1\leq i<j\leq N} V_{\vep,ij} - \sum_{i=1}^N V_i^\phi}{\wh{n_N}}\Phi_N^\vep}_{L_{\ux_N}^2},
\end{equation}
where we have introduced the notation
\begin{equation}
\label{eq:V_i^phi_def}
V_{\vep,ij} \coloneqq V_\vep(X_i-X_j) \quad \text{and} \quad V_i^\phi \coloneqq |\phi(X_i)|^2.
\end{equation}
Using the symmetry of $\Phi_N^\vep$ and $\wh{n_N}$ with respect to exchange of particle labels and the decomposition $\id_N=(p_1+q_1)(p_2+q_2)$, then examining which terms cancel, we see that
\begin{align}
\dot{\beta}_{N,\vep} &= \frac{i\kappa}{2}\ip{\Phi_N^\vep}{\comm{(N-1)V_{\vep,12} - NV_{1}^\phi - NV_2^\phi}{\wh{n_N}}\Phi_N^\vep}_{L_{\ux_N}^2} \nn\\
&=\mathrm{Term}_1 + \mathrm{Term}_2 + \mathrm{Term}_3,
\end{align}
where
\begin{align}
\mathrm{Term}_1 &\coloneqq 2\Re{{i\kappa}\ip{\Phi_N^\vep}{p_1p_2\comm{(N-1)V_{\vep,12}-NV_1^\phi-NV_2^\phi}{\wh{n_N}}q_1p_2\Phi_N^\vep}_{L_{\ux_N}^2}}, \\
\mathrm{Term}_2 &\coloneqq 2\Re{i\kappa\ip{\Phi_N^\vep}{q_1p_2\comm{(N-1)V_{\vep,12}-NV_1^\phi-NV_2^\phi}{\wh{n_N}}q_1q_2\Phi_N^\vep}_{L_{\ux_N}^2}}, \\
\mathrm{Term}_3 &\coloneqq \Re{i\kappa\ip{\Phi_N^\vep}{p_1p_2\comm{(N-1)V_{\vep,12}-NV_1^\phi-NV_2^\phi}{\wh{n_N}}q_1q_2\Phi_N^\vep}_{L_{\ux_N}^2}}.
\end{align}
We proceed to estimate $\mathrm{Term}_1$, $\mathrm{Term}_2$, and $\mathrm{Term}_3$ individually. In the sequel, we drop the subscript $N$, as the number of particles is fixed. For convenience, we also introduce the notation
\begin{equation}
\label{eq:V_(ep,j)_def}
V_\vep^\phi(x) \coloneqq (V_\vep\ast|\phi|^2)(x) \enspace \text{and} \enspace V_{\vep,j}^\phi \coloneqq (V_\vep\ast|\phi|^2)(X_j), \qquad \forall j\in\{1,\ldots,N\}.
\end{equation}
Note that by Young's inequality and $\|V_\vep\|_{L^1}=1$, we have the operator norm estimate
\begin{equation}
\label{eq:V_ep_phi_Linf}
\|V_{\vep,j}^\phi\|_{L_{\ux_N}^2\rightarrow L_{\ux_N}^2} \leq \|\phi\|_{L_x^\infty}^2, \qquad \forall \varepsilon>0, \ j\in\{1,\ldots,N\}.
\end{equation}

\begin{description}[leftmargin=*]
\item[Estimate for $\mathrm{Term}_1$]
We first observe that since $q_1$ commutes with $V_2^\phi,\wh{n}$ and $p_1,q_1$ are orthogonal,
\begin{equation}
\label{eq:T1_p1q1_zero}
\ip{\Phi^\vep}{p_1p_2\comm{NV_2^\phi}{\wh{n}}q_1p_2\Phi^\vep}_{L_{\ux_N}^2} = \ip{\Phi^\vep}{\underbrace{p_1q_1}_{=0}p_2\comm{NV_2^\phi}{\wh{n}}p_2\Phi^\vep}_{L_{\ux_N}^2} =0.
\end{equation}
Since $p_2V_{\vep,12}p_2 = V_{\vep,1}^\phi p_2$, it follows that
\begin{align}
\left|\mathrm{Term}_1\right| &\lesssim \left|\ip{\Phi^\vep}{p_1p_2\comm{(N-1)V_{\vep,1}^\phi-NV_1^\phi}{\wh{n}}q_1p_2\Phi^\vep}_{L_{\ux_N}^2}\right| \nn\\
&=\left|\ip{\Phi^\vep}{p_1p_2 \paren*{(N-1)V_{\vep,1}^\phi-NV_1^\phi} (\wh{n}-\wh{(\tau_{-1}n)})q_1p_2\Phi^\vep}_{L_{\ux_N}^2}\right|,
\end{align}
where the ultimate equality follows from an application of \cref{lem:shift}. Define the function
\begin{equation}
\label{eq:mu_def}
\mu:\Z\rightarrow\R, \qquad \mu(k)\coloneqq N\paren*{n(k)-(\tau_{-1}n)(k)}, \qquad \forall k\in\Z,
\end{equation}
and observe that
\begin{equation}
\label{eq:mu_bnd}
\mu(k) = \frac{\sqrt{N}}{\sqrt{k}+1_{\geq 0}(k-1)\sqrt{k-1}}1_{\geq 0}(k) \leq n^{-1}(k), \qquad \forall k\in\Z.
\end{equation}
So by the triangle inequality,
\begin{align}
\left|\mathrm{Term}_1\right| &\lesssim \frac{1}{N}\left|\ip{\Phi^\vep}{p_1p_2V_{\vep,1}^\phi \wh{\mu}q_1p_2\Phi^\vep}_{L_{\ux_N}^2}\right| + \left|\ip{\Phi^\vep}{p_1p_2(V_{\vep,1}^\phi - V_1^\phi)\wh{\mu}q_1p_2\Phi^\vep}_{L_{\ux_N}^2}\right| \nn\\
&\leq \frac{1}{N}\|V_{\vep,1}^\phi\wh{\mu}q_1\Phi^\vep\|_{L_{\ux_N}^2} + \|(V_{\vep,1}^\phi-V_1^\phi)\wh{\mu}q_1\Phi^\vep\|_{L_{\ux_N}^2}, \label{eq:beta_ep_T1_op_app}
\end{align}
where the ultimate inequality follows from Cauchy-Schwarz and $\|\Phi^\vep\|_{L_{\ux_N}^2}=1$. By translation invariance of Lebesgue measure and $\int_{\R}dy V_\vep(y)=1$, for any $x\in\R$,
\begin{align}
\left|(V_\vep\ast |\phi|^2)(x) - |\phi(x)|^2\right| = \left|\int_{\R}dy V_\vep(y)\paren*{|\phi(x-y)|^2-|\phi(x)|^2}\right| \leq \int_{\R}dy V_\vep(y)|y|^{1/2} \| |\phi|^2\|_{\dot{C}_x^{1/2}} \leq \vep^{1/2}\|\phi\|_{C_x^{1/2}}^2
\end{align}
where the ultimate inequality follows from dilation invariance of Lebesgue measure. Hence,
\begin{equation}
\label{eq:op_norm_conv}
\|(V_\vep\ast|\phi|^2)-|\phi|^2\|_{L_x^\infty}\leq \vep^{1/2}\|\phi\|_{C_x^{1/2}}^2 \Longrightarrow \| V_{\vep,1}^\phi - V_1^\phi\|_{L_{\ux_N}^2\rightarrow L_{\ux_N}^2} \leq \vep^{1/2}\|\phi\|_{C_x^{1/2}}^2.
\end{equation}
Using the preceding operator norm estimate together with \eqref{eq:V_ep_phi_Linf}, we obtain that
\begin{align}
\eqref{eq:beta_ep_T1_op_app} &\leq \paren*{\frac{\|\phi\|_{L_x^\infty}^2}{N}+\vep^{1/2}\|\phi\|_{C_x^{1/2}}^2} \|\wh{\mu}q_1\Phi^\vep\|_{L_{\ux_N}^2} \lesssim \frac{\|\phi\|_{L_x^\infty}^2}{N}+\vep^{1/2}\|\phi\|_{C_x^{1/2}}^2,
\end{align}
where the ultimate inequality follows from the bound \eqref{eq:mu_bnd} for $\mu$ and an application of \cref{lem:mN_est}\ref{item:mN_q1} together with recalling that $\wh{n}^2 = \wh{m}$. Thus, we conclude that
\begin{equation}
\label{eq:beta_ep_T_1_fin}
\left|\mathrm{Term}_1\right| \lesssim \frac{\|\phi\|_{L_x^\infty}^2}{N}+\vep^{1/2}\|\phi\|_{C_x^{1/2}}^2.
\end{equation}
\item[Estimate for $\mathrm{Term}_2$]
Arguing similarly as in \eqref{eq:T1_p1q1_zero}, we see that
\begin{equation}
\ip{\Phi^\vep}{q_1p_2\comm{V_{1}^\phi}{\wh{n}}q_1q_2\Phi^\vep}_{L_{\ux_N}^2} = 0.
\end{equation}
Therefore,
\begin{align}
2\left|\mathrm{Term}_2\right| &= \left|\ip{\Phi^\vep}{q_1p_2\comm{(N-1)V_{\vep,12}-NV_{2}^\phi}{\wh{n}}q_1q_2\Phi^\vep}_{L_{\ux_N}^2}\right| \nn\\
&= \left|\ip{\Phi^\vep}{q_1p_2\paren*{\frac{(N-1)}{N}V_{\vep,12}-V_{2}^\phi}\wh{\mu}q_1q_2\Phi^\vep}_{L_{\ux_N}^2}\right| \nn\\
&\leq \underbrace{\left|\ip{\Phi^\vep}{q_1p_2V_{\vep,12}\wh{\mu}q_1q_2\Phi^\vep}_{L_{\ux_N}^2}\right|}_{\eqqcolon\mathrm{Term}_{2,1}} + \underbrace{\left|\ip{\Phi^\vep}{q_1p_2V_{2}^\phi\wh{\mu}q_1q_2\Phi^\vep}_{L_{\ux_N}^2}\right|}_{\eqqcolon\mathrm{Term}_{2,2}},
\end{align}
where to obtain the penultimate equality have used \cref{lem:shift} and introduced the notation $\mu$ from \eqref{eq:mu_def} and to obtain the ultimate equality we have used the triangle inequality.

We first consider $\mathrm{Term}_{2,2}$. By Cauchy-Schwarz together with the estimate \eqref{eq:V_ep_phi_Linf},
\begin{equation}
\mathrm{Term}_{2,2} \leq \|q_1\Phi^\vep\|_{L_{\ux_N}^2} \|p_2 V_{2}^\phi\wh{\mu}q_1q_2\Phi^\vep\|_{L_{\ux_N}^2} \leq \|q_1\Phi^\vep\|_{L_{\ux_N}^2} \|\phi\|_{L_x^\infty}^2 \|\wh{\mu}q_1q_2\Phi^\vep\|_{L_{\ux_N}^2}.
\end{equation}
By \cref{rem:alpha_N} and \cref{lem:mN_est}\ref{item:mN_q1q2}, respectively, together with the $\mu$ bound \eqref{eq:mu_bnd}, we have that
\begin{equation}
\label{eq:mu_q1q2}
\|q_1\Phi^\vep\|_{L_{\ux_N}^2} \leq \sqrt{\alpha_\vep} \leq \sqrt{\beta_\vep} \quad \text{and} \quad \|\wh{\mu}q_1q_2\Phi^\vep\|_{L_{\ux_N}^2}\lesssim \sqrt{\beta_\vep}.
\end{equation}
Therefore,
\begin{equation}
\label{eq:beta_ep_T_22_fin}
\mathrm{Term}_{2,2} \lesssim \|\phi\|_{L_x^\infty}^2\beta_\vep.
\end{equation}

We now consider $\mathrm{Term}_{2,1}$. It follows from the identities \eqref{eq:intro_sgn} and $\delta \ast V_\vep=V_\vep$ that
\begin{equation}
V_\vep =  \frac{1}{2}\nabla(\sgn\ast V_\vep).
\end{equation}
We introduce the notation $X_{\vep,12}\coloneqq \frac{1}{2}(\sgn\ast V_\vep)(X_1-X_2)$. Note, $\|V_\vep\|_{L^1}=\|\sgn\|_{L^\infty}=1$, so that
\begin{equation}
\label{eq:X_ep_opnorm}
\|X_{\vep,12}\|_{L_{\ux_N}^2\rightarrow L_{\ux_N}^2} \leq \frac{1}{2}.
\end{equation}
Integrating by parts and applying the product rule and triangle inequality,
\begin{align}
\mathrm{Term}_{2,1} \leq \left|\ip{\nabla_1 q_1p_2\Phi^\vep}{X_{\vep,12}\wh{\mu}q_1q_2\Phi^\vep}_{L_{\ux_N}^2}\right| + \left|\ip{\Phi^\vep}{q_1p_2X_{\vep,12}\nabla_1\wh{\mu}q_1q_2\Phi^\vep}_{L_{\ux_N}^2}\right| \eqqcolon \mathrm{Term}_{2,1,1} +\mathrm{Term}_{2,1,2}.
\end{align}
By Cauchy-Schwarz and the estimate \eqref{eq:X_ep_opnorm},
\begin{align}
\mathrm{Term}_{2,1,1} &\leq \|\nabla_1 q_1p_2\Phi^\vep\|_{L_{\ux_N}^2} \|\wh{\mu}q_1q_2\Phi^\vep\|_{L_{\ux_N}^2},
\end{align}
so by application of the second estimate of \eqref{eq:mu_q1q2} and $\|p_2\|_{L_{\ux_N}^2\rightarrow L_{\ux_N}^2}=1$,
\begin{equation}
\label{eq:beta_ep_T_211_fin}
\mathrm{Term}_{2,1,1} \lesssim \|\nabla_1q_1\Phi^\vep\|_{L_{\ux_N}^2}\sqrt{\beta_\vep}.
\end{equation}
Next, we write $\id=p_1+q_1$ and use the triangle inequality to obtain
\begin{equation}
\mathrm{Term}_{2,1,2} \leq \left|\ip{p_2q_1\Phi^\vep}{X_{\vep,12}p_1\nabla_1\wh{\mu}q_1q_2\Phi^\vep}_{L_{\ux_N}^2}\right| + \left|\ip{p_2q_1\Phi^\vep}{X_{\vep,12}q_1\nabla_1\wh{\mu}q_1q_2\Phi^\vep}_{L_{\ux_N}^2}\right|.
\end{equation}
By \cref{lem:shift}, we have the operator identity
\begin{equation}
p_1\nabla_1\wh{\mu}q_1 = p_1\wh{(\tau_1\mu)}\nabla_1 q_1.
\end{equation}
Hence,
\begin{align}
\left|\ip{p_2q_1\Phi^\vep}{X_{\vep,12}p_1\nabla_1\wh{\mu}q_1q_2\Phi^\vep}_{L_{\ux_N}^2}\right| &\leq \|X_{\vep,12}p_2q_1\Phi^\vep\|_{L_{\ux_N}^2} \|p_1\wh{(\tau_1\mu)}\nabla_1q_1q_2\Phi^\vep\|_{L_{\ux_N}^2} \nn\\
&\leq \|q_1\Phi^\vep\|_{L_{\ux_N}^2} \|\wh{(\tau_1\mu)}\nabla_1q_1q_2\Phi^\vep\|_{L_{\ux_N}^2}.
\end{align}
By \cref{rem:alpha_N}, $\|q_1\Phi^\vep\|_{L_{\ux_N}^2}\leq \sqrt{\beta_\vep}$. Now using the $\mu$ bound \eqref{eq:mu_bnd}, we have that
\begin{equation}
(\tau_1\mu)(k) \lesssim n^{-1}(k+1) \lesssim n^{-1}(k), \qquad \forall k\in\Z.
\end{equation}
Combining this estimate with the symmetry of $\Phi^\vep$ under permutation of particle labels, we find that
\begin{align}
\|\wh{(\tau_1\mu)}\nabla_1q_1q_2\Phi^\vep\|_{L_{\ux_N}^2} &\lesssim \sqrt{\ip{\nabla_1q_1\Phi^\vep}{\wh{n}^{-2}\nabla_1 q_1q_2\Phi^\vep}_{L_{\ux_N}^2}} \nn\\
&=\sqrt{\frac{1}{N-1}\sum_{i=2}^N \ip{\nabla_1q_1\Phi^\vep}{q_i\wh{n}^{-2}\nabla_1q_1\Phi^\vep}_{L_{\ux_N}^2}}.
\end{align}
Since the projector $q_1$ commutes with $\wh{n}^{-2}$ and $\wh{n}^{-2}\geq 0$, we have that
\begin{equation}
\ip{\nabla_1 q_1\Phi^\vep}{q_1\wh{n}^{-2}\nabla_1 q_1}_{L_{\ux_N}^2} = \ip{q_1\nabla_1 q_1\Phi^\vep}{\wh{n}^{-2}q_1\nabla_1 q_1\Phi^\vep}_{L_{\ux_N}^2}\geq 0,
\end{equation}
so that by \cref{rem:alpha_N} and the identity $n^2=m$,
\begin{align}
\label{eq:sym_q2}
\sqrt{\frac{1}{N-1}\sum_{i=2}^N \ip{\nabla_1q_1\Phi^\vep}{q_i\wh{n}^{-2}\nabla_1q_1\Phi^\vep}_{L_{\ux_N}^2}} &\lesssim \sqrt{\frac{1}{N}\sum_{i=1}^N\ip{\nabla_1 q_1\Phi^\vep}{q_i\wh{n}^{-2}\nabla_1q_1\Phi^\vep}_{L_{\ux_N}^2}} \nn\\
&=\sqrt{\ip{\nabla_1 q_1\Phi^\vep}{\wh{n}^{-2}\wh{n}^{2}\nabla_1q_1\Phi^\vep}_{L_{\ux_N}^2}} \nn\\
&= \|\nabla_1q_1\Phi^\vep\|_{L_{\ux_N}^2}.
\end{align}
After a little bookkeeping, we find that
\begin{equation}
\left|\ip{p_2q_1\Phi^\vep}{X_{\vep,12}p_1\nabla_1\wh{\mu}q_1q_2\Phi^\vep}_{L_{\ux_N}^2}\right| \lesssim \sqrt{\beta_\vep}\|\nabla_1q_1\Phi^\vep\|_{L_{\ux_N}^2}.
\end{equation}
Again by \cref{lem:shift}, we have the operator identity
\begin{equation}
q_1\nabla_1\wh{\mu}q_1 = q_1\wh{\mu}\nabla_1q_1,
\end{equation}
and proceeding similarly as immediately above, we find that
\begin{equation}
\left|\ip{p_2q_1\Phi^\vep}{X_{\vep,12}q_1\nabla_1\wh{\mu}q_1q_2\Phi^\vep}_{L_{\ux_N}^2}\right| \lesssim \sqrt{\beta_\vep}\|\nabla_1q_1\Phi^\vep\|_{L_{\ux_N}^2},
\end{equation}
and therefore
\begin{equation}
\label{eq:beta_ep_T_212_fin}
\mathrm{Term}_{2,1,2} \lesssim \|\nabla_1 q_1\Phi^\vep\|_{L_{\ux_N}^2}\sqrt{\beta_\vep}.
\end{equation}
Together the estimate \eqref{eq:beta_ep_T_211_fin} for $\mathrm{Term}_{2,1,1}$, we obtain that
\begin{equation}
\label{eq:beta_ep_T_21_fin}
\mathrm{Term}_{2,1} \lesssim \|\nabla_1 q_1\Phi^\vep\|_{L_{\ux_N}^2}\sqrt{\beta_\vep}.
\end{equation}

Collecting the estimates \eqref{eq:beta_ep_T_21_fin} for $\mathrm{Term}_{2,1}$ and \eqref{eq:beta_ep_T_22_fin} for $\mathrm{Term}_{2,2}$, we conclude that
\begin{equation}
\label{eq:beta_ep_T_2_fin}
\mathrm{Term}_{2} \lesssim \|\phi\|_{L_x^\infty}^2\beta_\vep + \|\nabla_1 q_1\Phi^\vep\|_{L_{\ux_N}^2}\sqrt{\beta_\vep}.
\end{equation}

\item[Estimate for $\mathrm{Term}_3$]
We now consider $\mathrm{Term}_3$, which is the most difficult portion of the analysis. We first note that by arguing similarly as in \eqref{eq:T1_p1q1_zero}, we see that
\begin{equation}
p_1p_2\comm{V_{1}^\phi}{\wh{n}}q_1q_2 = 0 = p_1p_2\comm{V_{2}^\phi}{\wh{n}}q_1q_2,
\end{equation}
where the reader will recall the notation $V_{j}^\phi$ introduced in \eqref{eq:V_i^phi_def}. Therefore,
\begin{align}
\left|\mathrm{Term}_3\right| &\lesssim \left|\ip{\Phi^\vep}{p_1p_2\comm{(N-1)V_{\vep,12}}{\wh{n}}q_1q_2\Phi^\vep}_{L_{\ux_N}^2}\right| \nn\\
&=\frac{N-1}{N}\left|\ip{\Phi^\vep}{p_1p_2 NV_{\vep,12}\paren*{\wh{n}-\wh{(\tau_{-2}n}}q_1q_2\Phi^{\vep}}_{L_{\ux_N}^2}\right|,
\end{align}
where the ultimate equality follows from unpacking the commutator and applying \cref{lem:shift}. Analogously to the function $\mu$ defined in \eqref{eq:mu_def}, we define the function
\begin{equation}
\label{eq:nu_def}
\nu:\Z\rightarrow\R, \qquad \nu(k) \coloneqq N\paren*{n(k)-(\tau_{-2}n)(k)}, \qquad \forall k\in\Z.
\end{equation}
It is a straightforward computation from the definition of $n$ in \cref{def:mn_N} that
\begin{equation}
\nu(k) = \frac{2\sqrt{N}}{\sqrt{k}+1_{\geq 2}(k)\sqrt{k-2}}1_{\geq 0}(k), \qquad \forall k\in \Z,
\end{equation}
which implies that
\begin{equation}
\label{eq:nu_bnd}
\nu(k) \lesssim n^{-1}(k), \qquad \forall k\in\Z.
\end{equation}

We now introduce an approximation of the pair potential $V_\ep$ as follows. Define $V_\sigma(x) \coloneqq N^\sigma \tl{V}(N^\sigma x)$, where $\sigma\in(0,1)$ is a parameter to be specified momentarily and $\tl{V}$ is a standard mollifier. We convolve $V_\vep$ with $V_\sigma$ to define
\begin{equation}
\label{eq:V_ep_beta_def}
V_{\vep,\sigma} \coloneqq V_\vep \ast V_\sigma \quad \text{and} \quad V_{\vep,\sigma,ij}\coloneqq V_{\vep,\sigma}(X_i-X_j), \enspace \forall 1\leq i<j\leq N.
\end{equation}
By the triangle inequality,
\begin{equation}
\label{eq:T_3_12_def}
\left|\ip{\Phi^\vep}{p_1p_2V_{\vep,12}\wh{\nu}q_1q_2\Phi^\vep}_{L_{\ux_N}^2}\right| \leq \underbrace{\left|\ip{\Phi^\vep}{p_1p_2(V_{\vep,12}-V_{\vep,\sigma,12})\wh{\nu}q_1q_2\Phi^\vep}_{L_{\ux_N}^2}\right|}_{\eqqcolon \mathrm{Term}_{3,1}} + \underbrace{\left|\ip{\Phi^\vep}{p_1p_2V_{\vep,\sigma,12}\wh{\nu}q_1q_2\Phi^\vep}_{L_{\ux_N}^2}\right|}_{\eqqcolon\mathrm{Term}_{3,2}}.
\end{equation}
Observe that by moving $p_1p_2$ over to the first entry of the inner product, writing out the convolution implicit in $V_{\vep,\sigma,12}$, and using the Fubini-Tonelli theorem, we have that
\begin{align}
&\ip{\Phi^\vep}{p_1p_2 V_{\vep,\sigma,12} \wh{\nu}q_1q_2\Phi^\vep}_{L_{\ux_N}^2} \nn\\
&=\int_{\R}dy V_\sigma(y)\int_{\R^2}d\ux_{1;2}V_\vep(x_1-x_2-y)\int_{\R^{N-2}}d\ux_{3;N} \paren*{\ol{(p_1p_2\Phi^\vep)}(\wh{\nu}q_1q_2\Phi^\vep)}(x_1,x_2,\ux_{3;N}) \nn\\
&=\int_{\R}dy V_\sigma(y)\int_{\R^2}d\ux_{1;2}V_\vep(x_1-x_2-y)\int_{\R^{N-2}}d\ux_{3;N} \left(\paren*{\ol{(p_1p_2\Phi^\vep)}(\wh{\nu}q_1q_2\Phi^\vep)}(x_1,x_2,\ux_{3;N})\right. \nn\\
&\hspace{95mm}- \left.\paren*{\ol{(p_1p_2\Phi^\vep)}(\wh{\nu}q_1q_2\Phi^\vep)}(x_1,x_2+y,\ux_{3;N})\right) \nn\\
&\ph +\int_{\R}dyV_\sigma(y)\int_{\R^2}d\ux_{1;2}V_\vep(x_1-x_2-y)\int_{\R^{N-2}}d\ux_{3;N} \paren*{\ol{(p_1p_2\Phi^\vep)} (\wh{\nu}q_1q_2\Phi^\vep)}(x_1,x_2+y,\ux_{3;N}).
\end{align}
By translation invariance of Lebesgue measure applied in the $x_2$-coordinate, we have that for any $y\in\R$,
\begin{align}
&\int_{\R^2}d\ux_{1;2}V_\vep(x_1-x_2-y)\int_{\R^{N-2}}d\ux_{3;N} \paren*{\ol{(p_1p_2\Phi^\vep)} (\wh{\nu}q_1q_2\Phi^\vep)}(x_1,x_2+y,\ux_{3;N}) \nn\\
&=\int_{\R^2}d\ux_{1;2}V_{\vep}(x_1-x_2)\int_{\R^{N-2}}d\ux_{3;N} \paren*{\ol{(p_1p_2\Phi^\vep)} (\wh{\nu}q_1q_2\Phi^\vep)}(x_1,x_2,\ux_{3;N}) \nn\\
&=\ip{\Phi^\vep}{p_1p_2V_{\vep,12}\wh{\nu}q_1q_2\Phi^\vep}_{L_{\ux_N}^2},
\end{align}
where the ultimate equality follows from using the Fubini-Tonelli theorem and the self-adjointness of $p_1p_2$. Since $\int_{\R}dyV_\sigma(y)=1$, we conclude that
\begin{equation}
\begin{split}
&\int_{\R}dyV_\sigma(y)\int_{\R^2}d\ux_{1;2}V_\vep(x_1-x_2-y)\int_{\R^{N-2}}d\ux_{3;N} \paren*{\ol{(p_1p_2\Phi^\vep)} (\wh{\nu}q_1q_2\Phi^\vep)}(x_1,x_2+y,\ux_{3;N})\\
&=\ip{\Phi^\vep}{p_1p_2V_{\vep,12}\wh{\nu}q_1q_2\Phi^\vep}_{L_{\ux_N}^2}.
\end{split}
\end{equation}
Next, we have by definition of the H\"older norm in the $x_2$-coordinate that
\begin{align}
&\sup_{x_2\in\R}\left|\paren*{\ol{(p_1p_2\Phi^\vep)}(\wh{\nu}q_1q_2\Phi^\vep)}(x_1,x_2,\ux_{3;N})-\paren*{\ol{(p_1p_2\Phi^\vep)}(\wh{\nu}q_1q_2\Phi^\vep)}(x_1,x_2+y,\ux_{3;N})\right| \nn\\
&\leq \|(p_1p_2\Phi^\vep)(x_1,\cdot,\ux_{3;N})\|_{C_{x_2}^{1/2}} \|(\wh{\nu}q_1q_2\Phi^\vep)(x_1,\cdot,\ux_{3;N})\|_{C_{x_2}^{1/2}} |y|^{1/2},
\end{align}
for every $y\in\R$ and almost every $(x_1,\ux_{3;N})\in\R^{N-1}$. So by the Fubini-Tonelli theorem, followed by using the translation and dilation invariance of Lebesgue measure and then Cauchy-Schwarz, we find that
\begin{align}
&\int_{\R}dy V_\sigma(y)\int_{\R^2}d\ux_{1;2}V_\vep(x_1-x_2-y)\int_{\R^{N-2}}d\ux_{3;N} \left|\paren*{\ol{(p_1p_2\Phi^\vep)}(\wh{\nu}q_1q_2\Phi^\vep)}(x_1,x_2,\ux_{3;N})\right. \nn\\
&\hspace{90mm}- \left.\paren*{\ol{(p_1p_2\Phi^\vep)}(\wh{\nu}q_1q_2\Phi^\vep)}(x_1,x_2+y,\ux_{3;N})\right| \nn\\
&\leq \int_{\R^{N-1}}dx_1d\ux_{3;N}\Bigg(\|(p_1p_2\Phi^\vep)(x_1,\cdot,\ux_{3;N})\|_{C_{x_2}^{1/2}} \|(\wh{\nu}q_1q_2\Phi^\vep)(x_1,\cdot,\ux_{3;N})\|_{C_{x_2}^{1/2}} \nn\\
&\hspace{50mm} \times\underbrace{\paren*{\int_{\R}dy|y|^{1/2} V_\sigma(y)\int_{\R}dx_2  V_\vep(x_1-x_2-y)}}_{\lesssim N^{-\sigma/2}}\Bigg) \nn\\
&\lesssim N^{-\sigma/2}\|p_1p_2\Phi^\vep\|_{L_{\ux_{2;N}}^2C_{x_1}^{1/2}} \|\wh{\nu}q_1q_2\Phi^\vep\|_{L_{\ux_{2;N}}^2C_{x_1}^{1/2}},
\end{align}
where in the ultimate inequality we use the symmetry of $\Phi_\vep$ to swap $x_1$ and $x_2$ in order to ease the burden of notation. By Fubini-Tonelli, Cauchy-Schwarz, and the normalization $\|\phi\|_{L_x^2}=1$, we have the estimate
\begin{equation}
\label{eq:p1p2_hold}
\|p_1p_2\Phi^\vep\|_{L_{\ux_{2;N}}^2C_{x_1}^{1/2}} \leq \|\phi\|_{C_x^{1/2}}\|p_2\Phi^\vep\|_{L_{\ux_N}^2} \leq \|\phi\|_{C_x^{1/2}},
\end{equation}
where the ultimate inequality follows from the normalization $\|\Phi^{\vep}\|_{L_{\ux_N}^2}=1$. By \cref{lem:H1_hold} and the $H^{1/2+}\subset L^\infty$ Sobolev embedding,
\begin{equation}
\|\wh{\nu}q_1q_2\Phi^\vep\|_{L_{\ux_{2;N}}^2C_{x_1}^{1/2}} \lesssim \|\wh{\nu}q_1q_2\Phi^\vep\|_{L_{\ux_{2;N}}^2H_{x_1}^1} \lesssim \|\wh{\nu}q_1q_2\Phi^\vep\|_{L_{\ux_N}^2} + \|\nabla_1\wh{\nu}q_1q_2\Phi^\vep\|_{L_{\ux_N}^2},
\end{equation}
where the ultimate inequality follows from splitting the $H_{x}^1$ norm and Fubini-Tonelli. Using the $\nu$ estimate \eqref{eq:nu_bnd}, \cref{lem:mN_est}\ref{item:mN_q1q2}, and the identity $\wh{m}=\wh{n}^2$, we see that
\begin{align}
\label{eq:nu_q1q2}
\|\wh{\nu}q_1q_2\Phi^\vep\|_{L_{\ux_N}^2} \lesssim \sqrt{\ip{\Phi^\vep}{\wh{n}^{-2}\wh{m}^{2}\Phi^\vep}_{L_{\ux_N}^2}} =\sqrt{\ip{\Phi^\vep}{\wh{m}\Phi^\vep}_{L_{\ux_N}^2}} =\sqrt{\alpha_\vep} \leq \sqrt{\beta_\vep}.
\end{align}
Next, inserting the decomposition $\nabla_1=p_1\nabla_1+q_1\nabla_1$ and applying the triangle inequality,
\begin{equation}
\label{eq:grad_nu_q1q2}
\|\nabla_1\wh{\nu}q_1q_2\Phi^\vep\|_{L_{\ux_N}^2} \leq \|p_1\nabla_1\wh{\nu}q_1q_2\Phi^\vep\|_{L_{\ux_N}^2} + \|q_1\nabla_1\wh{\nu}q_1q_2\Phi^\vep\|_{L_{\ux_N}^2}.
\end{equation}
Since $p_1\nabla_1 = -(\ket*{\phi}\bra*{\nabla\phi})_1$,
\begin{equation}
\|p_1\nabla_1\wh{\nu}q_1q_2\Phi^\vep\|_{L_{\ux_N}^2} \leq \|\nabla\phi\|_{L_x^2} \|\wh{\nu}q_1q_2\Phi^\vep\|_{L_{\ux_N}^2} \lesssim \|\nabla\phi\|_{L_x^2}\sqrt{\beta_\vep},
\end{equation}
where the ultimate inequality follows from the estimate \eqref{eq:nu_q1q2}. By \cref{lem:shift} followed by using the $\nu$ estimate \eqref{eq:nu_bnd},
\begin{equation}
\|q_1\nabla_1\wh{\nu}q_1q_2\Phi^\vep\|_{L_{\ux_N}^2} = \|q_1\wh{\nu}\nabla_1q_1q_2\Phi^\vep\|_{L_{\ux_N}^2} \lesssim \sqrt{\ip{\nabla_1q_1\Phi^\vep}{q_2\wh{n}^{-2}\nabla_1q_1\Phi^\vep}_{L_{\ux_N}^2}},
\end{equation}
and arguing as for the estimate \eqref{eq:sym_q2}, we find that the right-hand side is $\lesssim \|\nabla_1q_1\Phi^\vep\|_{L_{\ux_N}^2}$. Therefore,
\begin{equation}
\label{eq:nu_q1q2_hold}
\|\wh{\nu}q_1q_2\Phi^\vep\|_{L_{\ux_{2;N}}^2C_{x_1}^{1/2}} \lesssim \paren*{1+\|\nabla\phi\|_{L_x^2}}\sqrt{\beta_\vep} + \|\nabla_1 q_1\Phi^\vep\|_{L_{\ux_N}^2} \lesssim \|\phi\|_{H_x^1}\sqrt{\beta_\vep} + \|\nabla_1 q_1\Phi^\vep\|_{L_{\ux_N}^2}.
\end{equation}
Collecting the estimates \eqref{eq:p1p2_hold}, \eqref{eq:nu_q1q2_hold} and applying Young's inequality for products, we see that
\begin{align}
N^{-\sigma/2}\|p_1p_2\Phi^\vep\|_{L_{\ux_{2;N}}^2C_{x_1}^{1/2}} \|\wh{\nu}q_1q_2\Phi^\vep\|_{L_{\ux_{2;N}}^2C_{x_1}^{1/2}} &\lesssim N^{-\sigma} + \|\phi\|_{C_x^{1/2}}^2\|\phi\|_{H_x^1}^2\beta_\vep + \|\phi\|_{C_x^{1/2}}^2\|\nabla_1q_1\Phi^\vep\|_{L_{\ux_N}^2}^2.
\end{align}

After a little bookkeeping, we conclude that
\begin{equation}
\label{eq:beta_ep_T_31_fin}
\left|\mathrm{Term}_{3,1}\right| \lesssim N^{-\sigma} + \|\phi\|_{C_x^{1/2}}^2\|\phi\|_{H_x^1}^2\beta_\vep + \|\phi\|_{C_x^{1/2}}^2\|\nabla_1q_1\Phi^\vep\|_{L_{\ux_N}^2}^2,
\end{equation}
leaving us with $\mathrm{Term}_{3,2}$.

For $\mathrm{Term}_{3,2}$, we borrow an idea from \cite{KP2010} and introduce a partition of unity as follows. Let $\chi^{(1)},\chi^{(2)}:\Z\rightarrow [0,\infty)$ be the two functions respectively defined by
\begin{equation}
\label{eq:chi(1)(2)_def}
\chi^{(1)}(k) \coloneqq 1_{\leq N^{1-\delta}}(k), \quad \chi^{(2)}(k) \coloneqq 1-\chi^{(1)}(k) = 1_{>N^{1-\delta}}(k), \qquad \forall k\in\Z.
\end{equation}
where $\delta\in(0,1)$ will be optimized at the end. Trivially, we have that $\chi^{(j)}\in\{0,1\}^{\Z}$, so that $(\chi^{(j)}(k))^2 = \chi^{(j)}(k)$, and $\chi^{(1)}(k)+\chi^{(2)}(k)=1$. We insert this decomposition into the expression for $\mathrm{Term}_{3,2}$ and use the triangle inequality to obtain
\begin{equation}
\label{eq:T_32j_def}
\left|\mathrm{Term}_{3,2}\right| \leq \underbrace{\left|\ip{\Phi^\vep}{p_1p_2 V_{\vep,\sigma,12}\wh{\nu}\wh{\chi^{(1)}}q_1q_2\Phi^\vep}_{L_{\ux_N}^2}\right|}_{\eqqcolon \mathrm{Term}_{3,2,1}} + \underbrace{\left|\ip{\Phi^\vep}{p_1p_2V_{\vep,\sigma,12}\wh{\nu}\wh{\chi^{(2)}}q_1q_2\Phi^\vep}_{L_{\ux_N}^2}\right|}_{\eqqcolon \mathrm{Term}_{3,2,2}}.
\end{equation}
We consider $\mathrm{Term}_{3,2,1}$ and $\mathrm{Term}_{3,2,2}$ separately.

For $\mathrm{Term}_{3,2,1}$, we want to use the fact that the operator norm of $p_1p_2 V_{\vep,\sigma,12}q_1q_2$ is much smaller on the bosonic subspace $L_{sym}^2(\R^N)$ than on the full space $L^2(\R^N)$. Accordingly, we symmetrize the expression $p_2 V_{\vep,\sigma,12}q_2$ to write
\begin{align}
\mathrm{Term}_{3,2,1} &= \frac{1}{N-1}\left|\ip{\Phi^\vep}{\sum_{i=2}^{N} p_1p_i V_{\vep,\sigma,1i}q_iq_1 \wh{\chi^{(1)}}\wh{\nu}q_1\Phi^\vep}_{L_{\ux_N}^2}\right| \nn\\
&\leq \frac{1}{N-1}\|\sum_{i=2}^N \wh{\chi^{(1)}}q_iq_1 V_{\vep,\sigma,1i}p_ip_1\Phi^\vep\|_{L_{\ux_N}^2} \|\wh{\nu}q_1\Phi^\vep\|_{L_{\ux_N}^2}.
\end{align}
where the ultimate line follows from Cauchy-Schwarz. We claim that $\|\wh{\nu}q_1\Phi^\vep\|_{L_{\ux_N}^2}\lesssim 1$. Indeed, by the $\nu$ bound \eqref{eq:nu_bnd} and \cref{lem:mN_est}\ref{item:mN_q1},
\begin{equation}
\label{eq:nu_q1_L2}
\|\wh{\nu}q_1\Phi^\vep\|_{L_{\ux_N}^2} = \sqrt{\ip{\Phi^\vep}{\wh{\nu}^2 q_1\Phi^\vep}_{L_{\ux_N}^2}} \lesssim \sqrt{\ip{\Phi^\vep}{\wh{n}^{-2}\wh{m}\Phi^\vep}_{L_{\ux_N}^2}} = 1,
\end{equation}
since $\wh{n}^2=\wh{m}$ and $\|\Phi^\vep\|_{L_{\ux_N}^2}=1$. Now expanding the $L_{\ux_N}^2$ norm and using that $\wh{\chi^{(1)}}^2 = \wh{\chi^{(1)}}$, we see that
\begin{align}
\|\sum_{i=2}^N \wh{\chi^{(1)}}q_iq_1 V_{\vep,\sigma,1i}p_ip_1\Phi^\vep\|_{L_{\ux_N}^2} &= \sqrt{\sum_{i,j=2}^N \ip{\Phi^\vep}{p_1p_i V_{\vep,\sigma,1i}q_1q_i\wh{\chi^{(1)}}q_1q_j V_{\vep,\sigma,1j}p_jp_1\Phi^\vep}_{L_{\ux_N}^2}} \nn\\
&\leq\underbrace{\sqrt{\sum_{i=2}^N \ip{\Phi^\vep}{p_1p_iV_{\vep,\sigma,1i}q_1q_i\wh{\chi^{(1)}}q_1q_i V_{\vep,\sigma,1i}p_ip_1\Phi^\vep}_{L_{\ux_N}^2}}}_{\eqqcolon \sqrt{B}} \nn\\
&\ph + \underbrace{\sqrt{\sum_{2\leq i\neq j\leq N} \ip{\Phi^\vep}{p_1p_iV_{\vep,\sigma,1i}q_1q_i\wh{\chi^{(1)}}q_1q_j V_{\vep,\sigma,1j}p_jp_1\Phi^\vep}_{L_{\ux_N}^2}}}_{\eqqcolon \sqrt{A}}, \label{eq:chi(1)_AB}
\end{align}
where the ultimate inequality follows from the embedding $\ell^{1/2}\subset \ell^1$. Therefore,
\begin{equation}
\label{eq:T_321_AB}
\mathrm{Term}_{3,2,1} \lesssim \frac{1}{N-1}\paren*{\sqrt{B}+\sqrt{A}}.
\end{equation}

We first consider $B$, which is the easy term. Since $\|q_1q_i\wh{\chi^{(1)}}q_1q_i\|_{L_{\ux_N}^2\rightarrow L_{\ux_N}^2} \leq 1$,
\begin{equation}
B \leq \sum_{i=2}^N \|V_{\vep,\sigma,1i}p_1p_i\Phi^\vep\|_{L_{\ux_N}^2}^2 = \sum_{i=2}^N \ip{\Phi^\vep}{p_1p_i V_{\vep,\sigma,1i}^2p_1p_i\Phi^\vep}_{L_{\ux_N}^2}.
\end{equation}
Now by examination of the integral kernel of $p_1p_i V_{\vep,\sigma,1i}^2p_1p_i$,
\begin{align}
p_1p_i V_{\vep,\sigma,1i}^2 p_1p_i &= \paren*{\int_{\R^2}dy_1dy_i V_{\vep,\sigma}^2(y_1-y_i) |\phi(y_1)|^2 |\phi(y_i)|^2}p_1p_i = \||\phi|^2 (V_{\vep,\sigma}^2\ast|\phi|^2)\|_{L_x^1}p_1p_i,
\end{align}
and by Cauchy-Schwarz followed by Young's inequality,
\begin{align}
\||\phi|^2 (V_{\vep,\sigma}^2\ast|\phi|^2)\|_{L_x^1} \leq \|\phi\|_{L_x^4}^2 \|V_{\vep,\sigma}^2\ast |\phi|^2\|_{L_x^2} \leq \underbrace{\|V_{\vep,\sigma}\|_{L^2}^2}_{\leq N^{\sigma}} \|\phi\|_{L_x^4}^4.
\end{align}
It then follows from $\|\Phi^\vep\|_{L_{\ux_N}^2}=1$ that
\begin{equation}
\label{eq:B_chi1_fin}
B\leq (N-1)N^{\sigma}\|\phi\|_{L_x^4}^4.
\end{equation}

We proceed to consider $A$. We first make a further decomposition of $A$ by using that $(\chi^{(1)})^2 = \chi^{(1)}$ and then applying \cref{lem:shift} in order to obtain
\begin{align}
A &= \sum_{2\leq i\neq j\leq N} \ip{\Phi^\vep}{p_1p_i V_{\vep,\sigma,1i}q_1q_i \wh{\chi^{(1)}}\wh{\chi^{(1)}} q_jq_1 V_{\vep,\sigma,1j}p_jp_1\Phi^\vep }_{L_{\ux_N}^2} \nn\\
&=\underbrace{\sum_{2\leq i\neq j\leq N} \ip{\Phi^\vep}{p_1p_iq_j \wh{(\tau_2\chi^{(1)})}V_{\vep,\sigma,1,i}V_{\vep,\sigma,1j}\wh{(\tau_2\chi^{(1)})}q_ip_jp_1\Phi^\vep}_{L_{\ux_N}^2}}_{\eqqcolon A_1} \nn\\
&\ph \underbrace{-\sum_{2\leq i\neq j\leq N} \ip{\Phi^\vep}{p_1p_iq_j\wh{(\tau_2\chi^{(1)})}V_{\vep,\sigma,1i}p_1 V_{\vep,\sigma,1j} \wh{(\tau_2\chi^{(1)})}q_ip_jp_1\Phi^\vep}_{L_{\ux_N}^2}}_{\eqqcolon A_2}, \label{eq:chi(1)_A_split}
\end{align}
where the ultimate equality follows from writing $q_1=\id -p_1$.

For $A_1$, we have by the triangle inequality and self-adjointness of $\wh{(\tau_2\chi^{(1)})}q_j$ that
\begin{align}
|A_1| &\leq \sum_{2\leq i\neq j\leq N}\left|\ip{\wh{(\tau_2\chi^{(1)})}q_j\Phi^\vep}{p_1p_iV_{\vep,\sigma,1i}V_{\vep,\sigma,1j}p_jp_1\wh{(\tau_2\chi^{(1)})}q_i\Phi^\vep}_{L_{\ux_N}^2}\right|.
\label{eq:A1_chi(1)_sym}
\end{align}
Using that $V_{\vep,\sigma}\geq 0$ and commutativity of point-wise multiplication operators, we can write
\begin{equation}
V_{\vep,\sigma,1i}V_{\vep,\sigma,1j} = (V_{\vep,\sigma,1i}V_{\vep,\sigma,1j})^{1/2}(V_{\vep,\sigma,1i}V_{\vep,\sigma,1j})^{1/2}
\end{equation}
and then use Cauchy-Schwarz to obtain
\begin{equation}
\begin{split}
\left|\ip{\wh{(\tau_2\chi^{(1)})}q_j\Phi^\vep}{p_1p_iV_{\vep,\sigma,1i}V_{\vep,\sigma,1j}p_jp_1\wh{(\tau_2\chi^{(1)})}q_i\Phi^\vep}_{L_{\ux_N}^2}\right| &\leq \|(V_{\vep,\sigma,1i}V_{\vep,\sigma,1j})^{1/2}p_1p_i\wh{(\tau_2\chi^{(1)})}q_j\Phi^\vep\|_{L_{\ux_N}^2} \\
&\hspace{10mm}\times \|(V_{\vep,\sigma,1i}V_{\vep,\sigma,1j})^{1/2}p_jp_1\wh{(\tau_2\chi^{(1)})} q_i\Phi^\vep\|_{L_{\ux_N}^2}.
\end{split}
\end{equation}
From Young's inequality for products and the symmetry of $\Phi^\vep$ under permutation of particle labels, we then find that
\begin{equation}
\label{eq:A1_chi(1)_op}
\eqref{eq:A1_chi(1)_sym} \leq \sum_{2\leq i\neq j\leq N} \ip{\Phi^\vep}{\wh{(\tau_2\chi^{(1)})}q_jp_1p_iV_{\vep,\sigma,1i}V_{\vep,\sigma,1j}p_1p_iq_j\wh{(\tau_2\chi^{(1)})}\Phi^\vep}_{L_{\ux_N}^2}.
\end{equation}
Next, by computation of its integral kernel, we see that
\begin{equation}
p_iV_{\vep,\sigma,1i}V_{\vep,\sigma,1j}p_i = p_i (V_{\vep,\sigma}\ast |\phi|^2)_1 V_{\vep,\sigma,1j},
\end{equation}
and
\begin{equation}
(p_1(V_{\vep,\sigma}\ast|\phi|^2)_1 V_{\vep,\sigma,1j}p_1) = p_1\paren*{V_{\vep,\sigma}\ast (|\phi|^2(V_{\vep,\sigma}\ast|\phi|^2))}_j.
\end{equation}
By Young's inequality with $\|V_{\vep,\sigma}\|_{L^1}=1$, followed by H\"older's inequality, and then another application of Young's, we have that
\begin{equation}
\|\paren*{V_{\vep,\sigma}\ast (|\phi|^2(V_{\vep,\sigma}\ast|\phi|^2))}\|_{L_x^\infty} \leq \|\phi\|_{L_x^\infty}^2\|V_{\vep,\sigma}\ast|\phi|^2\|_{L_x^\infty} \leq \|\phi\|_{L_x^\infty}^4,
\end{equation}
which implies that
\begin{equation}
\label{eq:p1pi_V_opnorm}
\|p_1p_i V_{\vep,\sigma,1i}V_{\vep,\sigma,1j}p_1p_i\|_{L_{\ux_N}^2\rightarrow L_{\ux_N}^2} \leq \|\phi\|_{L_x^\infty}^4.
\end{equation}
Applying this last estimate to the right-hand side of \eqref{eq:A1_chi(1)_op} and the symmetry of $\Phi^\vep$, we obtain that
\begin{equation}
\label{eq:A1_chi(1)_n^2_app}
|A_1| \lesssim \|\phi\|_{L_x^\infty}^4 \sum_{2\leq i\neq j\leq N} \|\wh{(\tau_2\chi^{(1)})}q_j\Phi^\vep\|_{L_{\ux_N}^2}^2 \leq N^2\|\phi\|_{L_x^\infty}^4 \|\wh{(\tau_2\chi^{(1)})}q_1\Phi^{\vep}\|_{L_{\ux_N}^2}^2 \leq N^2\|\phi\|_{L_x^\infty}^4 \|\wh{(\tau_2\chi^{(1)})}\wh{n}\Phi^\vep\|_{L_{\ux_N}^2}^2,
\end{equation}
where the ultimate inequality follows by application of \cref{lem:mN_est}\ref{item:mN_q1} to the factor $\|\wh{(\tau_2\chi^{(1)})}q_1\Phi^{\vep}\|_{L_{\ux_N}^2}$. In order to estimate the last expression, we claim that
\begin{equation}
\label{eq:chi(1)_n^2_bnd}
(\tau_2\chi^{(1)})(k)n(k) \leq N^{-\delta/2}, \qquad \forall k\in\{0,\ldots,N\}.
\end{equation}
Indeed, recalling from \eqref{eq:chi(1)(2)_def} that $\chi^{(1)} = 1_{\leq N^{1-\delta}}$, where $\delta\in (0,1)$, we see that
\begin{equation}
(\tau_2\chi^{(1)})(k)n(k) = 1_{\leq N^{1-\delta}}(k+2)1_{\geq 0}(k)\sqrt{\frac{(k+2)-2}{N}} \leq 1_{\leq N^{1-\delta}}(k)\sqrt{\frac{N^{1-\delta}}{N}-\frac{2}{N}},
\end{equation}
from which the claim follows. Applying this estimate to the right-hand side of \eqref{eq:A1_chi(1)_n^2_app} leads to the conclusion
\begin{equation}
\label{eq:A1_chi(1)_fin}
|A_1| \lesssim N^{2-\delta}\|\phi\|_{L_x^\infty}^4.
\end{equation}
Now using the identity
\begin{equation}
\label{eq:p1_V_opnorm}
p_1V_{\vep,\sigma,1i}p_1V_{\vep,\sigma,1j}p_1 = p_1(V_{\vep,\sigma}\ast|\phi|^2)_i (V_{\vep,\sigma}\ast|\phi|^2)_j,
\end{equation}
which follows from examination of the integral kernel, and arguing similarly as for $A_1$, we find that
\begin{align}
|A_2| \leq \|V_{\vep,\sigma}\ast|\phi|^2\|_{L_x^\infty}^2 \sum_{2\leq i\neq j\leq N} \|q_j\wh{(\tau_2\chi^{(1)})}\Phi^\vep\|_{L_{\ux_N}^2} \|q_i\wh{(\tau_2\chi^{(1)})}\Phi^\vep\|_{L_{\ux_N}^2} \lesssim N^{2-\delta}\|\phi\|_{L_x^\infty}^4. \label{eq:A2_chi(1)_fin}
\end{align}
Thus, we conclude from \eqref{eq:A1_chi(1)_fin} and \eqref{eq:A2_chi(1)_fin} that
\begin{equation}
\label{eq:A_chi(1)_fin}
|A| \lesssim N^{2-\delta}\|\phi\|_{L_x^\infty}^4.
\end{equation}

To conclude the estimate for $\mathrm{Term}_{3,2,1}$ defined in \eqref{eq:T_32j_def} above, we insert the estimate \eqref{eq:B_chi1_fin} for $B$ and the estimate \eqref{eq:A_chi(1)_fin} for $A$ into the right-hand side of \eqref{eq:T_321_AB}, obtaining
\begin{equation}
\label{eq:T_321_fin}
\mathrm{Term}_{3,2,1} \lesssim \frac{1}{N-1}\paren*{\sqrt{(N-1)N^{\sigma}\|\phi\|_{L_x^4}^4} + \sqrt{N^{2-\delta}\|\phi\|_{L_x^\infty}^4}} \lesssim \frac{\|\phi\|_{L_x^4}^2}{N^{(1-\sigma)/2}} + \frac{\|\phi\|_{L_x^\infty}^2}{N^{\delta/2}}. 
\end{equation}

It remains for us to estimate $\mathrm{Term}_{3,2,2}$, which we recall from \eqref{eq:T_32j_def} is defined by
\begin{equation}
\mathrm{Term}_{3,2,2} = \left|\ip{\Phi^\vep}{p_1p_2V_{\vep,\sigma,12}\wh{\nu}\wh{\chi^{(2)}}q_1q_2\Phi^\vep}_{L_{\ux_N}^2}\right|.
\end{equation}
Writing $\wh{\nu}=\wh{\nu}^{1/2}\wh{\nu}^{1/2}$ and using the same symmetrization trick as above, we find that
\begin{align}
\mathrm{Term}_{3,2,2} &=\frac{1}{N-1}\left|\ip{\Phi^\vep}{\sum_{i=2}^N p_1p_i V_{\vep,\sigma,1i}q_iq_1\wh{\chi^{(2)}}\wh{\nu}^{1/2}\wh{\nu}^{1/2}\Phi^\vep}_{L_{\ux_N}^2}\right| \nn\\
&\leq \frac{1}{N-1}\|\wh{\nu}^{1/2}q_1\Phi^{\vep}\|_{L_{\ux_N}^2} \sqrt{\sum_{i,j=2}^N \ip{\Phi^\vep}{p_1p_iV_{\vep,\sigma,1i}q_1q_i\wh{\chi^{(2)}}\wh{\nu}q_1q_jV_{\vep,\sigma,1j}p_jp_1\Phi^\vep}_{L_{\ux_N}^2}}, \label{eq:sum_split}
\end{align}
where the ultimate inequality follows by Cauchy-Schwarz and expanding the $L_{\ux_N}^2$ norm of the second factor. By the $\nu$ estimate \eqref{eq:nu_bnd} together with \cref{lem:mN_est}\ref{item:mN_q1},
\begin{equation}
\label{eq:nu_1/2_q1}
\|\wh{\nu}^{1/2}q_1\Phi^\vep\|_{L_{\ux_N}^2} = \sqrt{\ip{\Phi^\vep}{\wh{\nu}q_1\Phi^\vep}_{L_{\ux_N}^2}} \lesssim \sqrt{\ip{\Phi^\vep}{\wh{n}^{-1}q_1\Phi^\vep}_{L_{\ux_N}^2}} \lesssim \sqrt{\beta_\vep}.
\end{equation}
Thus, splitting the sum $\sum_{i,j}=\sum_{i} + \sum_{i\neq j}$ in the second factor of \eqref{eq:sum_split} and applying the embedding $\ell^{1/2}\subset\ell^1$, we obtain that
\begin{equation}
\label{eq:T_322_AB}
\mathrm{Term}_{3,2,2} \leq \frac{\sqrt{\beta_\vep}}{N-1} \paren*{\sqrt{A}+\sqrt{B}},
\end{equation}
where
\begin{align}
B &\coloneqq \sum_{i=2}^N \ip{\Phi^\vep}{p_1p_iV_{\vep,\sigma,1i}q_1q_i\wh{\chi^{(2)}}\wh{\nu}V_{\vep,\sigma,1i}p_ip_1\Phi^\vep}_{L_{\ux_N}^2}, \label{eq:chi(2)_B_def}\\
A &\coloneqq \sum_{2\leq i\neq j\leq N} \ip{\Phi^\vep}{p_1p_iV_{\vep,\sigma,1i}q_1q_i \wh{\chi^{(2)}} \wh{\nu}q_j V_{\vep,\sigma,1j}p_jp_1\Phi^\vep}_{L_{\ux_N}^2}. \label{eq:chi(2)_A_def}
\end{align}
Note that in contrast to the inequality \eqref{eq:T_321_AB} for $\mathrm{Term}_{3,2,1}$, we have a factor of $\sqrt{\beta_\vep}$ in the right-hand side of inequality \eqref{eq:T_322_AB}.

We first dispense with the easy case $B$. We recall from \eqref{eq:chi(1)(2)_def} that $\chi^{(2)}=1_{>N^{1-\delta}}$, which together with the $\nu$ bound \eqref{eq:nu_bnd} implies the estimate
\begin{equation}
\label{eq:chi(2)_nu_bnd}
\chi^{(2)}(k)\nu(k) \lesssim 1_{>N^{1-\delta}}(k)n^{-1}(k) = 1_{>N^{1-\delta}}(k) \sqrt{\frac{N}{k}} < N^{\delta/2}, \qquad \forall k\in\Z.
\end{equation}
Therefore, we have the $L_{\ux_N}^2$ operator norm estimate
\begin{equation}
\|q_1q_i \wh{\chi^{(2)}}\wh{\nu}\|_{L_{\ux_N}^2\rightarrow L_{\ux_N}^2} \lesssim N^{\delta/2}, \qquad \forall i\in\{1,\ldots,N\},
\end{equation}
which implies that
\begin{equation}
B\lesssim N^{\delta/2} \sum_{i=2}^N \|V_{\vep,\sigma,1i}p_1p_i\Phi^\vep\|_{L_{\ux_N}^2}^2 = (N-1)N^{\delta/2}\|V_{\vep,\sigma,12}p_1p_2\Phi^\vep\|_{L_{\ux_N}^2}^2,
\end{equation}
where the ultimate identity follows from the symmetry of $\Phi^\vep$. Since by Cauchy-Schwarz and Young's inequality,
\begin{equation}
p_1p_2V_{\vep,\sigma,12}^2p_1p_2 = \||\phi|^2(V_{\vep,\sigma}^2\ast |\phi|^2)\|_{L_x^1} p_1p_2 \lesssim N^{\sigma} \|\phi\|_{L_x^4}^4p_1p_2,
\end{equation}
where we also use $\|V_{\vep,\sigma}\|_{L^2}^2\lesssim N^\sigma$, we conclude that
\begin{equation}
\label{eq:chi(2)_B_fin}
B\lesssim N^{1+\frac{\delta}{2}+\sigma}\|\phi\|_{L_x^4}^4.
\end{equation}

For the hard case $A$, we again use \cref{lem:shift} as in \eqref{eq:chi(1)_A_split} to write $A=A_1+A_2$, where
\begin{align}
A_1 &\coloneqq \sum_{2\leq i\neq j\leq N} \ip{\Phi^\vep}{p_1p_iq_j\wh{(\tau_2\chi^{(2)})}\wh{(\tau_2\nu)}^{1/2}V_{\vep,\sigma,1i}V_{\vep,\sigma,1j}\wh{(\tau_2\chi^{(2)})}\wh{(\tau_2\nu)}^{1/2}q_ip_jp_1\Phi^\vep}_{L_{\ux_N}^2}, \\
A_2 &\coloneqq -\sum_{2\leq i\neq j\leq N} \ip{\Phi^\vep}{p_1p_iq_j\wh{(\tau_2\chi^{(2)})}\wh{(\tau_2\nu)}^{1/2}V_{\vep,\sigma,1i}p_1 V_{\vep,\sigma,1j}\wh{(\tau_2\chi^{(2)})}^{1/2}\wh{(\tau_2\nu)}^{1/2}q_ip_jp_1\Phi^\vep}_{L_{\ux_N}^2}.
\end{align}
For $A_1$, we use that $V_{\vep,\sigma}\geq 0$ to apply Cauchy-Schwarz and exploit the symmetry of $\Phi_{\vep}$ under exchange of particle labels in order to obtain
\begin{equation}
|A_1| \leq \sum_{2\leq i\neq j\leq N} \left|\ip{\Phi^\vep}{q_j\wh{(\tau_2\chi^{(2)})}\wh{(\tau_2\nu)}^{1/2}p_1p_i V_{\vep,\sigma,1i}V_{\vep,\sigma,1j}p_ip_1\wh{(\tau_2\chi^{(2)})}\wh{(\tau_2\nu)}^{1/2}q_j\Phi^\vep}_{L_{\ux_N}^2}\right|.
\end{equation}
Using the $L_{\ux_N}^2$ operator norm estimate \eqref{eq:p1pi_V_opnorm}, we conclude that
\begin{align}
|A_1| \lesssim \|\phi\|_{L_x^\infty}^4\sum_{2\leq i\neq j\leq N} \underbrace{\|\wh{(\tau_2\chi^{(2)})}\wh{(\tau_2\nu)}^{1/2}q_1\Phi^\vep\|_{L_{\ux_N}^2}^2}_{\leq\ip{\Phi^\vep}{\wh{(\tau_2\nu)}q_1\Phi^\vep}_{L_{\ux_N}^2}} \lesssim N^2\|\phi\|_{L_x^\infty}^4 \ip{\Phi^\vep}{\wh{n}\Phi^\vep}_{L_{\ux_N}^2} =N^2\|\phi\|_{L_x^\infty}^4\beta_\vep, \label{eq:chi(2)_A1_fin}
\end{align}
where the penultimate inequality follows from the $\nu$ estimate \eqref{eq:nu_bnd} together with \cref{lem:mN_est}\ref{item:mN_q1} and the ultimate equality is by definition of $\beta_\vep$ (recall \eqref{eq:beta_ep_def}). Next, using the operator identity \eqref{eq:p1_V_opnorm} and arguing similarly as for $A_2$ in the case of $\chi^{(1)}$, we also obtain the estimate
\begin{equation}
\label{eq:chi(2)_A2_fin}
|A_2| \lesssim N^2\|\phi\|_{L_x^\infty}^4\beta_\vep,
\end{equation}
leading us to conclude that
\begin{equation}
\label{eq:chi(2)_A_fin}
|A| \lesssim N^2\|\phi\|_{L_x^\infty}^4\beta_\vep.
\end{equation}

Inserting the estimates \eqref{eq:chi(2)_B_fin} for $B$ and \eqref{eq:chi(2)_A_fin} for $A$ into the right-hand side of \eqref{eq:T_322_AB}, we find from the normalization $\|\phi\|_{L_x^2}=1$ and Young's inequality for products that
\begin{align}
\mathrm{Term}_{3,2,2} \lesssim \frac{\sqrt{\beta_\vep}}{N-1}\paren*{N\|\phi\|_{L_x^\infty}^2\sqrt{\beta_\vep} + N^{\frac{1+\sigma}{2}+\frac{\delta}{4}}\|\phi\|_{L_x^4}^2} &\lesssim \|\phi\|_{L_x^\infty}^2\beta_\vep + N^{\frac{2(\sigma-1)+\delta}{2}}. \label{eq:T_322_fin}
\end{align}

Collecting the estimates \eqref{eq:T_321_fin} for $\mathrm{Term}_{3,2,1}$ and \eqref{eq:T_322_fin} for $\mathrm{Term}_{3,2,2}$, we find that
\begin{equation}
\label{eq:T_32_fin}
|\mathrm{Term}_{3,2}| \lesssim N^{\frac{\sigma-1}{2}}\|\phi\|_{L_x^4}^2 + N^{-\frac{\delta}{2}}\|\phi\|_{L_x^\infty}^2 + \|\phi\|_{L_x^\infty}^2\beta_\vep + N^{\frac{2(\sigma-1)+\delta}{2}}.
\end{equation}

Now inserting the estimates \eqref{eq:beta_ep_T_31_fin} for $\mathrm{Term}_{3,1}$ and \eqref{eq:T_32_fin} for $\mathrm{Term}_{3,2}$ into the right-hand side of \eqref{eq:T_3_12_def}, we conclude that
\begin{align}
|\mathrm{Term}_3| &\lesssim  N^{-\sigma} + \|\phi\|_{C_x^{1/2}}^2\|\phi\|_{H_x^1}^2\beta_\vep + \|\phi\|_{C_x^{1/2}}^2\|\nabla_1q_1\Phi^\vep\|_{L_{\ux_N}^2}^2 + N^{\frac{\sigma-1}{2}}\|\phi\|_{L_x^4}^2 + N^{-\frac{\delta}{2}}\|\phi\|_{L_x^\infty}^2 +  N^{\frac{2(\sigma-1)+\delta}{2}}, \label{eq:beta_ep_T3_fin}
\end{align}
where we implicitly use the $\|\phi\|_{H_x^1}^2\geq 1$ by the unit mass normalization.
\end{description}

We are now prepared to conclude the proof of the proposition. After a bookkeeping of the estimates \eqref{eq:beta_ep_T_1_fin} for $\mathrm{Term}_1$, \eqref{eq:beta_ep_T_2_fin} for $\mathrm{Term}_2$, and \eqref{eq:beta_ep_T3_fin} for $\mathrm{Term}_3$, we find that
\begin{equation}
\label{eq:RHS_aux}
\begin{split}
\dot{\beta}_\vep &\lesssim \frac{\|\phi\|_{L_x^\infty}^2}{N} + \vep^{1/2}\|\phi\|_{C_x^{1/2}}^2 + \|\phi\|_{L_x^\infty}^2\beta_\vep + \|\nabla_1 q_1\Phi^\vep\|_{L_{\ux_N}^2}\sqrt{\beta_\vep} + \frac{1}{N^\sigma} + \|\phi\|_{C_x^{1/2}}^2\|\phi\|_{H_x^1}^2\beta_\vep\\
&\ph  + \|\phi\|_{C_x^{1/2}}^2\|\nabla_1q_1\Phi^\vep\|_{L_{\ux_N}^2}^2 + \frac{\|\phi\|_{L_x^4}^2}{N^{(1-\sigma)/2}} + \frac{\|\phi\|_{L_x^\infty}^2}{N^{\delta/2}} +  N^{\frac{2(\sigma-1)+\delta}{2}}.
\end{split}
\end{equation}
The desired conclusion now follows from Young's inequality for products, $\|\phi\|_{L_x^2}=1$, and some algebra.
\end{proof}

\subsection{Auxiliary control}
\label{ssec:MR_aux}
We now estimate the auxiliary quantity $\|\nabla_1q_1\Phi_N^\vep\|_{L_{\ux_N}^2}$ appearing in the estimate of \cref{prop:beta_ep_ineq} in terms of $\beta_{N,\vep}$, $N$, and $(E_{N,\vep}^{\Phi_N^\vep}-E^\phi)$. Here, $E_{N,\vep}^{\Phi_N^\vep}=\ip{\Phi_N^\vep}{H_{N,\vep}\Phi_N^\vep}$ is the regularized microscopic energy per particle.
\begin{restatable}[Control of $\|\nabla_1 q_1\Phi_N^\vep\|_{L^2}^2$]{prop}{gradcon}
\label{prop:grad_q1}
For $\kappa\in\{\pm 1\}$, we have the estimate
\begin{equation}
\|\nabla_1 q_1(t)\Phi_N^\vep(t)\|_{L^2(\R^N)}^2 \lesssim E_{N,\vep}^{\Phi_N^\vep}-E^\phi + \vep^{1/2}\|\phi(t)\|_{{C}^{1/2}(\R)}^2+ \|\phi(t)\|_{H^2(\R)}\beta_{N,\vep}(t) + \frac{\|\phi(t)\|_{H^2(\R)}}{\sqrt{N}},
\end{equation}
for every $t\in\R$, uniformly in $\vep>0$ and $N\in\N$.
\end{restatable}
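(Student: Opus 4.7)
The plan is to exploit the conservation in time of both $E_{N,\vep}^{\Phi_N^\vep}$ (since $H_{N,\vep}$ is self-adjoint) and $E^\phi$ (since the NLS conserves energy), so that the difference $E_{N,\vep}^{\Phi_N^\vep}-E^\phi$ appearing on the right is a fixed constant, while the left-hand side $\|\nabla_1 q_1(t)\Phi_N^\vep(t)\|_{L^2}^2$ is genuinely time-dependent. The starting point is a purely algebraic identity coming from writing $q_1=\id-p_1$: expanding $\|\nabla_1(\id-p_1)\Phi_N^\vep\|_{L^2}^2$, integrating by parts in $x_1$, and using the elementary identity $\|\nabla_1 p_1\Phi_N^\vep\|_{L^2}^2 = \|\nabla\phi\|_{L^2}^2(1-\alpha_{N,\vep})$ yields the decomposition
\begin{equation*}
\|\nabla_1 q_1\Phi_N^\vep\|_{L^2}^2 = \paren*{\|\nabla_1\Phi_N^\vep\|_{L^2}^2 - \|\nabla\phi\|_{L^2}^2} + \|\nabla\phi\|_{L^2}^2\,\alpha_{N,\vep} + 2\Re\ip{q_1\Phi_N^\vep}{\Delta_1 p_1\Phi_N^\vep}.
\end{equation*}

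The middle piece is absorbed into the $\|\phi\|_{H^2}\beta_{N,\vep}$ term via the interpolation inequality $\|\phi\|_{H^1}^2\leq\|\phi\|_{L^2}\|\phi\|_{H^2}=\|\phi\|_{H^2}$ and the trivial bound $\alpha_{N,\vep}\leq\beta_{N,\vep}$. The third piece is handled by observing that $\Delta_1 p_1 = \ket*{\Delta\phi}\bra*{\phi}_1$ is rank-one in the first variable, so Cauchy--Schwarz combined with $\|q_1\Phi_N^\vep\|_{L^2}=\sqrt{\alpha_{N,\vep}}\leq\sqrt{\beta_{N,\vep}}$ yields a bound of $2\|\phi\|_{H^2}\sqrt{\beta_{N,\vep}}$, which by Young's inequality is then split into $\|\phi\|_{H^2}\beta_{N,\vep}$ and $\|\phi\|_{H^2}/\sqrt{N}$ contributions. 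For the first piece $\|\nabla_1\Phi_N^\vep\|_{L^2}^2-\|\nabla\phi\|_{L^2}^2$, I would invoke the energy identities $\|\nabla_1\Phi_N^\vep\|_{L^2}^2=E_{N,\vep}^{\Phi_N^\vep}-\frac{\kappa(N-1)}{2N}\ip{\Phi_N^\vep}{V_{\vep,12}\Phi_N^\vep}$ and $\|\nabla\phi\|_{L^2}^2=E^\phi-\frac{\kappa}{2}\|\phi\|_{L^4}^4$ to rewrite it as $(E_{N,\vep}^{\Phi_N^\vep}-E^\phi)-\frac{\kappa}{2}\paren*{\frac{N-1}{N}\ip{\Phi_N^\vep}{V_{\vep,12}\Phi_N^\vep}-\|\phi\|_{L^4}^4}$. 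The potential-energy discrepancy is then expanded by inserting the resolution $\id=(p_1+q_1)(p_2+q_2)$: the diagonal $p_1p_2 V_{\vep,12} p_1p_2$ piece factors as $\paren*{\int(V_\vep\ast|\phi|^2)|\phi|^2\,dx}\ip{\Phi_N^\vep}{p_1p_2\Phi_N^\vep}$, which by the mollification estimate \eqref{eq:op_norm_conv} differs from $\|\phi\|_{L^4}^4$ by $O(\vep^{1/2}\|\phi\|_{C^{1/2}}^2)$ plus an $O(\beta_{N,\vep})$ correction coming from $\ip{\Phi_N^\vep}{p_1p_2\Phi_N^\vep}=1-O(\alpha_{N,\vep})$. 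The $(N-1)/N$ factor produces an additional $O(1/N)$ error controlled via an a priori bound on $\ip{\Phi_N^\vep}{V_{\vep,12}\Phi_N^\vep}$ supplied by energy conservation.

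The main obstacle is the treatment of the remaining cross terms in the $(p_1+q_1)(p_2+q_2)$ expansion of $\ip{\Phi_N^\vep}{V_{\vep,12}\Phi_N^\vep}$, namely those containing at least one $q_i$. Since the na\"ive operator bound $\|V_{\vep,12}\|_{L^2\to L^2}\sim\vep^{-1}$ blows up as $\vep\rightarrow 0^+$, direct Cauchy--Schwarz estimates are useless; however, every such cross term carries a $p_i$ adjacent to $V_{\vep,12}$, so the sandwich $p_iV_{\vep,12}p_j$ and its relatives can be reduced to multiplication by the bounded convolution $V_\vep\ast|\phi|^2$, which is controlled uniformly in $\vep$ by $\|\phi\|_{L^\infty}^2$. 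Combined with Cauchy--Schwarz pairings against factors of $\sqrt{\beta_{N,\vep}}$ supplied by \cref{lem:mN_est}, this yields contributions of the advertised shape; the bookkeeping mirrors that of $\mathrm{Term}_2$ and $\mathrm{Term}_3$ in the proof of \cref{prop:beta_ep_ineq}. Assembling everything and applying Young's inequality to convert remaining $\sqrt{\beta_{N,\vep}}$-type quantities into $\beta_{N,\vep}$ and $1/\sqrt{N}$ pieces closes the estimate.
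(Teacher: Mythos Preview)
Your overall strategy is reasonable and actually simpler than the paper's (you decompose via $\id=p_1+q_1$ rather than $\id=p_1p_2+(\id-p_1p_2)$), but there are two genuine gaps.

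\textbf{First gap: the cross term.} Your estimate $\left|2\Re\ip{q_1\Phi_N^\vep}{\Delta_1 p_1\Phi_N^\vep}\right|\leq 2\|\phi\|_{H^2}\sqrt{\beta_{N,\vep}}$ is correct, but you cannot ``split $\sqrt{\beta_{N,\vep}}$ into $\beta_{N,\vep}$ and $1/\sqrt{N}$ via Young's inequality''; there is simply no such inequality ($\sqrt{\beta}\leq\beta+N^{-1/2}$ fails whenever $N^{-1/2}\ll\beta\ll 1$). The paper gains the missing factor by inserting $\wh{n}^{1/2}\wh{n}^{-1/2}$ and using the shift \cref{lem:shift}: one factor $\|\wh{n}^{-1/2}q_1\Phi_N^\vep\|\leq\sqrt{\beta_{N,\vep}}$ comes from \cref{lem:mN_est}, while the other factor $\|\wh{(\tau_1 n)}^{1/2}\Phi_N^\vep\|\lesssim\sqrt{\beta_{N,\vep}}+N^{-1/4}$ comes from $(\tau_1 n)(k)\leq n(k)+N^{-1/2}$. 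This is what produces $\|\phi\|_{H^2}(\beta_{N,\vep}+N^{-1/2})$ rather than $\|\phi\|_{H^2}\sqrt{\beta_{N,\vep}}$.

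\textbf{Second gap: the fully projected potential term.} Your claim that ``every such cross term carries a $p_i$ adjacent to $V_{\vep,12}$'' is false: the diagonal piece $\ip{q_1q_2\Phi_N^\vep}{V_{\vep,12}q_1q_2\Phi_N^\vep}$ in the $(p_1+q_1)(p_2+q_2)$ expansion has no $p$'s at all. In the attractive case $\kappa=-1$ this term enters with the bad sign $+\tfrac{N-1}{2N}\|V_{\vep,12}^{1/2}q_1q_2\Phi_N^\vep\|^2$ and cannot be dropped. The paper handles its analogue ($\mathrm{Term}_6$ there) by the uniform-in-$\vep$ Sobolev-type bound $\|V_{\vep,12}^{1/2}f\|^2\leq\|\nabla_1 f\|\|f\|$ (from $V_\vep=\tfrac12\nabla(\sgn\ast V_\vep)$ and integration by parts), then uses Young's inequality with parameters $\kappa_1,\kappa_2$ to absorb the resulting $\|\nabla_1(\cdot)\|^2$ back into the left-hand side. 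Your sketch contains no mechanism for this absorption, and the reference to $\mathrm{Term}_2,\mathrm{Term}_3$ of \cref{prop:beta_ep_ineq} does not help, since those arguments always have a $p_i$ available.
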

\begin{proof}
As before, we drop the subscript $N$, as the number of particles is fixed throughout the proof. We introduce two parameters $\kappa_1\in (0,1)$ and $\kappa_2>0$, the precise values of which we shall specify momentarily. Using the decomposition $\id = p_1p_2 + (\id-p_1p_2)$ and the normalizations $\|\Phi^\vep\|_{L_{\ux_N}^2}=1=\|\phi\|_{L_x^2}$, we arrive at the identity
\begin{equation}
(1-\kappa_1)\|\nabla_1(\id - p_1p_2)\Phi^\vep\|_{L_{\ux_N}^2}^2 = E_{\vep}^{\Phi^\vep} - E^{\phi} + \sum_{i=1}^6\mathrm{Term}_i,
\end{equation}
where
\begin{align}
\mathrm{Term}_1 &\coloneqq - \|\nabla_1p_1p_2\Phi^\vep\|_{L_{\ux_N}^2}^2+\|\nabla\phi\|_{L_x^2}^2, \\
\mathrm{Term}_2 &\coloneqq -\kappa_2\ip{\Phi^\vep}{p_1p_2\Phi^\vep}_{L_{\ux_N}^2} + \kappa_2, \\
\mathrm{Term}_3 &\coloneqq -\frac{\kappa(N-1)}{2N}\ip{\Phi^\vep}{p_1p_2 V_{\vep,12}p_1p_2\Phi^\vep}_{L_{\ux_N}^2} + \frac{\kappa}{2}\|\phi\|_{L_x^4}^4, \\
\mathrm{Term}_4 &\coloneqq -2\Re{\ip{\nabla_1(\id-p_1p_2)\Phi^\vep}{\nabla_1p_1p_2\Phi^\vep}_{L_{\ux_N}^2}}, \\
\mathrm{Term}_5 &\coloneqq -\frac{\kappa(N-1)}{N}\Re{\ip{\Phi^\vep}{(\id-p_1p_2)V_{\vep,12}p_1p_2\Phi^\vep}_{L_{\ux_N}^2}},\\
\mathrm{Term}_6 &\coloneqq -\frac{\kappa(N-1)}{2N} \|V_{\vep,12}^{1/2}(\id-p_1p_2)\Phi^\vep\|_{L_{\ux_N}^2}^2 - \kappa_1 \|\nabla_1(\id-p_1p_2)\Phi^\vep\|_{L_{\ux_N}^2}^2 -\kappa_2\|(\id-p_1p_2)\Phi^\vep\|_{L_{\ux_N}^2}^2.
\end{align}
We keep the term $E_\vep^{\Phi^\vep}-E^{\phi}$. We want to obtain upper bounds for the moduli of $\mathrm{Term}_1,\ldots,\mathrm{Term}_5$, and we want to show that $\mathrm{Term}_6 \leq 0$ provided that we appropriately choose $\kappa_1,\kappa_2$ depending on $\kappa$.

\begin{description}[leftmargin=*]
\item[Estimate for $\mathrm{Term}_1$]
Since $\nabla_1p_1 = (\ket*{\nabla\phi}\bra*{\phi})_1$, it follows from $1=\|\Phi^\vep\|_{L_{\ux_N}^2}$ that
\begin{align}
\mathrm{Term}_1 = \|\nabla\phi\|_{L_x^2}^2\paren*{1-\ip{\Phi^\vep}{p_1p_2\Phi^\vep}_{L_{\ux_N}^2}} = \ip{\Phi^\vep}{(\id-p_1p_2)\Phi^\vep}_{L_{\ux_N}^2}.
\end{align}
Since $\id-p_1p_2 = q_1p_2+q_2p_1+q_1q_2$, it follows from \cref{rem:alpha_N} and the triangle inequality that
\begin{equation}
\label{eq:3alpha_bnd}
\ip{\Phi^\vep}{(\id-p_1p_2)\Phi^\vep}_{L_{\ux_N}^2} \leq 3\alpha_\vep \lesssim\beta_\vep,
\end{equation}
leading us to conclude that
\begin{equation}
\mathrm{Term}_1 \lesssim \|\nabla\phi\|_{L_x^2}^2\beta_\vep. \label{eq:con_T_1_fin}
\end{equation}
\item[Estimate for $\mathrm{Term}_2$]
Using the identity $\kappa_2\|\Phi^\vep\|_{L_{\ux_N}^2}^2=\kappa_2$ and the estimate \eqref{eq:3alpha_bnd}, we find that
\begin{equation}
\mathrm{Term}_2 =\kappa_2\ip{\Phi^\vep}{(\id-p_1p_2)\Phi^\vep}_{L_{\ux_N}^2} \lesssim \kappa_2\beta_\vep. \label{eq:con_T_2_fin}
\end{equation}
\item[Estimate for $\mathrm{Term}_3$]
First, observe that
\begin{equation}
p_1p_2 V_{12}p_1p_2 = \|\phi\|_{L_x^4}^4p_1p_2 \quad \text{and} \quad p_1p_2 V_{\vep,12}p_1p_2 = \||\phi|^2 (V_\vep \ast |\phi|^2)\|_{L_x^1} p_1p_2.
\end{equation}
So by the triangle inequality,
\begin{align}
\label{eq:T3_fs}
\left|\mathrm{Term}_3\right| &\leq \frac{1}{2}\left|\ip{\Phi^\vep}{p_1p_2(V_{\vep,12}-V_{12})p_1p_2\Phi^\vep}_{L_{\ux_N}^2}\right| +  \frac{\|\phi\|_{L_x^4}^4}{2}\left|-\frac{(N-1)}{N}\ip{\Phi^\vep}{p_1p_2\Phi^\vep}_{L_{\ux_N}^2} + 1\right|.
\end{align}
Since $\|\Phi^\vep\|_{L_{\ux_N}^2}^2=1$, the second term in the right-hand side equals
\begin{equation}
\frac{\|\phi\|_{L_x^4}^4}{2}\left|\frac{1}{N}\ip{\Phi^\vep}{p_1p_2\Phi^\vep}_{L_{\ux_N}^2} + \ip{\Phi^\vep}{(\id-p_1p_2)\Phi^\vep}_{L_{\ux_N}^2}\right| \lesssim \|\phi\|_{L_x^4}^4\paren*{\frac{1}{N}+\beta_\vep},
\end{equation}
where the ultimate inequality follows from the triangle inequality, $\ip{\Phi^\vep}{p_1p_2\Phi^\vep}\leq \|\Phi^\vep\|_{L_{\ux_N}^2}^2 = 1$, and the estimate \eqref{eq:3alpha_bnd}. Again using that $\|\Phi^\vep\|_{L_{\ux_N}^2}=1$, we see that the first term in the right-hand side of \eqref{eq:T3_fs} is bounded by
\begin{equation}
\frac{1}{2} \| |\phi|^2\paren*{(V_\vep\ast|\phi|^2)-|\phi|^2}\|_{L_x^1} \lesssim \|\phi\|_{C_x^{1/2}}^2 \vep^{1/2},
\end{equation}
which follows from the estimate \eqref{eq:op_norm_conv} and $\|\phi\|_{L_x^2}=1$. Therefore,
\begin{align}
\label{eq:con_T_3_fin}
\mathrm{Term}_3 \lesssim \vep^{1/2} \|\phi\|_{C_x^{1/2}}^2 + \|\phi\|_{L_x^4}^4\paren*{\frac{1}{N}+\beta_\vep}.
\end{align}

\item[Estimate for $\mathrm{Term}_4$]
By using the decomposition $\id-p_1p_2 = q_1p_2+q_2p_1+q_1q_2$, the triangle inequality, and the fact that $\comm{q_2}{\nabla_1}=0=q_2p_2$, we see that
\begin{align}
\left|\mathrm{Term}_4\right| &\lesssim \left|\ip{\nabla_1q_1p_2\Phi^\vep}{\nabla_1p_1p_2\Phi^\vep}_{L_{\ux_N}^2} + \underbrace{\ip{\nabla_1 q_2p_1\Phi^\vep}{\nabla_1 p_1p_2\Phi^\vep}_{L_{\ux_N}^2}}_{=0} + \underbrace{\ip{\nabla_1q_1q_2\Phi^\vep}{\nabla_1p_1p_2\Phi^\vep}_{L_{\ux_N}^2}}_{=0}\right| \nn\\
&=\left|\ip{\wh{n}^{-1/2}q_1\Phi^\vep}{\wh{n}^{1/2}(-\Delta_1) p_1p_2\Phi^\vep}_{L_{\ux_N}^2}\right|,
\end{align}
where the ultimate equality follows from integration by parts and writing $\id=\wh{n}^{-1/2}\wh{n}^{1/2}$. The reader will recall the definitions of $n$ and $\wh{n}$ from \cref{def:mn_N}. By Cauchy-Schwarz and $q_1^2=q_1$,
\begin{align}
\left|\ip{\wh{n}^{-1/2}q_1\Phi^\vep}{\wh{n}^{1/2}(-\Delta_1) p_1p_2\Phi^\vep}_{L_{\ux_N}^2}\right| &\leq \|\wh{n}^{-1/2}q_1\Phi^\vep\|_{L_{\ux_N}^2} \|q_1\wh{n}^{1/2}(-\Delta_1)p_1p_2\Phi^\vep\|_{L_{\ux_N}^2} \nn\\
&\leq \sqrt{\beta_\vep}\|q_1\wh{n}^{1/2}(-\Delta_1)p_1p_2\Phi^\vep\|_{L_{\ux_N}^2},
\end{align}
where the ultimate line follows from applying \cref{lem:mN_est}\ref{item:mN_q1} to the first factor in the right-hand side of the first line. By \cref{lem:shift}, we have the operator identity
\begin{equation}
q_1\wh{n}^{1/2}(-\D_1)p_1 = q_1(-\D_1)\wh{(\tau_1 n)}^{1/2}p_1 = q_1(-\D_1)p_1\wh{(\tau_1n)}^{1/2}.
\end{equation}
So writing $q_1 = \id - p_1$ and using the triangle inequality together with the operator norm estimates
\begin{equation}
\|(-\D_1)p_1\|_{L_{\ux_N}^2\rightarrow L_{\ux_N}^2} \leq \|\D\phi\|_{L_x^2} \quad \text{and} \quad \|p_1(-\D_1)p_1\|_{L_{\ux_N}^2\rightarrow L_{\ux_N}^2} \leq \|\nabla\phi\|_{L_x^2}^2,\footnote{This is the only place in this work where the $H^2$ regularity assumption is strictly needed.}
\end{equation}
we find that
\begin{align}
\|q_1\wh{n}^{1/2}(-\D_1)p_1p_2\Phi^\vep\|_{L_{\ux_N}^2} &\leq \|(-\D_1)p_1\wh{(\tau_1 n)}^{1/2}p_2\Phi^\vep\|_{L_{\ux_N}^2} + \|p_1(-\D_1)p_1\wh{(\tau_1 n)}^{1/2}p_2\Phi^\vep\|_{L_{\ux_N}^2} \nn\\
&\leq \paren*{\|\D\phi\|_{L_x^2}+\|\nabla\phi\|_{L_x^2}^2} \|\wh{(\tau_1n)}^{1/2}\Phi^\vep\|_{L_{\ux_N}^2},
\end{align}
where we eliminate $p_2$ using $\|p_2\|_{L_{\ux_N}^2\rightarrow L_{\ux_N}^2}=1$. Using the embedding $\ell^{1/2}\subset\ell^1$, we see that
\begin{equation}
\label{eq:tau1_n_bnd}
(\tau_1 n)(k) = \sqrt{\frac{k+1}{N}}1_{\geq 0}(k+1) \leq \sqrt{\frac{k}{N}}1_{\geq 0}(k) + \frac{1}{\sqrt{N}} = n(k) + \frac{1}{\sqrt{N}}, \qquad \forall k\in\Z.
\end{equation}
By another application of $\ell^{1/2}\subset \ell^1$ together with $\|\Phi^\vep\|_{L_{\ux_N}^2}=1$,
\begin{equation}
\label{eq:tau1_n_L2}
\|\wh{(\tau_1 n)}^{1/2}\Phi^\vep\|_{L_{\ux_N}^2} \leq \sqrt{\beta_\vep} + N^{-1/4}.
\end{equation}
Using Young's inequality for products and interpolation of $H^s$ spaces with $\|\phi\|_{L_x^2}=1$, we obtain that
\begin{align}
\left|\mathrm{Term}_4\right| &\lesssim \paren*{\|\D\phi\|_{L_x^2} + \|\nabla\phi\|_{L_x^2}^2} \sqrt{\beta_\vep} \paren*{\sqrt{\beta_\vep} + N^{-1/4}} \lesssim \|\phi\|_{H_x^2}\paren*{\beta_\vep + N^{-1/2}}. \label{eq:con_T_4_fin}
\end{align}

\item[Estimate for $\mathrm{Term}_5$]
Using the decomposition $\id-p_1p_2 = p_1q_2+p_2q_1+q_1q_2$ together with the triangle inequality and the symmetry of $\Phi^\vep$ under exchange of particle labels, we have that
\begin{align}
\left|\mathrm{Term}_5\right| &\lesssim \left|\ip{\Phi^\vep}{p_1p_2V_{\vep,12}q_1p_2\Phi^\vep}_{L_{\ux_N}^2} + \ip{\Phi^\vep}{p_1p_2V_{\vep,12}q_2p_1\Phi^\vep}_{L_{\ux_N}^2} + \ip{\Phi^\vep}{p_1p_2 V_{\vep,12}q_1q_2\Phi^\vep}_{L_{\ux_N}^2}\right| \nn\\
&\lesssim \underbrace{\left|\ip{\Phi^\vep}{p_1p_2V_{\vep,12}q_1p_2\Phi^\vep}_{L_{\ux_N}^2}\right|}_{\eqqcolon \mathrm{Term}_{5,1}} + \underbrace{\left|\ip{\Phi^\vep}{p_1p_2V_{\vep,12}q_1q_2\Phi^\vep}_{L_{\ux_N}^2}\right|}_{\eqqcolon \mathrm{Term}_{5,2}},
\end{align}

For $\mathrm{Term}_{5,1}$, we note from an examination of its integral kernel that
\begin{equation}
p_1p_2 V_{\vep,12}q_1p_2 = p_1p_2 V_{\vep,1}^\phi q_1,
\end{equation}
where we use the notation $V_{\vep,1}^\phi$ introduced in \eqref{eq:V_(ep,j)_def}. Now writing $\id = \wh{n}^{-1/2}\wh{n}^{1/2}$, we find that
\begin{align}
\mathrm{Term}_{5,1} &= \left|\ip{\Phi^\vep}{p_1p_2 \wh{(\tau_1 n)}^{1/2} V_{\vep,1}^\phi \wh{n}^{-1/2}q_1\Phi^\vep}_{L_{\ux_N}^2}\right| \nn\\
&\leq \|p_1p_2\wh{(\tau_1 n)}^{1/2}\Phi^\vep\|_{L_{\ux_N}^2} \|V_{\vep,1}^\phi \wh{n}^{-1/2} q_1\Phi^\vep\|_{L_{\ux_N}^2}, \label{eq:T_51_app}
\end{align}
where the penultimate line follows from an application of \cref{lem:shift} and the ultimate line follows from Cauchy-Schwarz. Applying the operator norm identity $\|p_j\|_{L^2\rightarrow L^2}= 1$ together with the estimate \eqref{eq:tau1_n_L2} to the first factor in \eqref{eq:T_51_app}, we obtain that
\begin{equation}
\mathrm{Term}_{5,1} \lesssim \paren*{\sqrt{\beta_\vep}+N^{-1/4}}\|V_{\vep,1}^\phi \wh{n}^{-1/2} q_1\Phi^\vep\|_{L_{\ux_N}^2}.
\end{equation}
Now since  $\|V_{\vep,1}^\phi\|_{L_{\ux_N}^2\rightarrow L_{\ux_N}^2} \leq \|\phi\|_{L_x^\infty}^2$, we find that
\begin{equation}
\|V_{\vep,1}^\phi \wh{n}^{-1/2}q_1\Phi^\vep\|_{L_{\ux_N}^2} \leq \|\phi\|_{L_x^\infty}^2 \|\wh{n}^{-1/2} q_1\Phi^\vep\|_{L_{\ux_N}^2} \leq \|\phi\|_{L_x^\infty}^2 \sqrt{\beta_\vep}.
\end{equation}
where the ultimate equality follows from \cref{lem:mN_est}\ref{item:mN_q1} and the trivial fact that $\wh{n}^2 = \wh{m}$. Using the embedding $\ell^{1/2}\subset\ell^1$, we conclude that
\begin{equation}
\label{eq:con_T_51_fin}
\mathrm{Term}_{5,1} \lesssim \|\phi\|_{L_x^\infty}^2\sqrt{\beta_\vep}\paren*{\sqrt{\beta_\vep} + N^{-1/4}} \lesssim \|\phi\|_{L_x^\infty}^2\paren*{\beta_\vep+N^{-1/2}}.
\end{equation}

For $\mathrm{Term}_{5,2}$, we use, as in the proof of \cref{prop:beta_ep_ineq}, the distributional identity \eqref{eq:intro_sgn} to write $V_{\vep,12}=(\nabla_1 X_{\vep,12})$, where $X_{\vep,12} \coloneqq \frac{1}{2}(V_\vep\ast \sgn)(X_1-X_2)$. Using \cref{lem:shift}, we find that
\begin{align}
\mathrm{Term}_{5,2} &= \left|\ip{\Phi^\vep}{p_1p_2 (\nabla_1 X_{\vep,12})q_1 q_2\Phi^\vep}_{L_{\ux_N}^2}\right| \nn\\
&=\left|\ip{\Phi^\vep}{p_1p_2 (\nabla_1 X_{\vep,12}) \wh{n}\wh{n}^{-1} q_1q_2\Phi^\vep}_{L_{\ux_N}^2}\right| \nn\\
&=\left|\ip{\wh{(\tau_2 n)} p_1p_2\Phi^\vep}{(\nabla_1 X_{\vep,12}) \wh{n}^{-1} q_1q_2\Phi^\vep}_{L_{\ux_N}^2}\right|.
\end{align}
Now integrating by parts and then applying the product rule and triangle inequality, we obtain that
\begin{align}
\left|\ip{\wh{(\tau_2 n)} p_1p_2\Phi^\vep}{(\nabla_1 X_{\vep,12}) \wh{n}^{-1} q_1q_2\Phi^\vep}_{L_{\ux_N}^2}\right| &\leq \left|\ip{\nabla_1 \wh{(\tau_2 n)} p_1p_2\Phi^\vep}{X_{\vep,12}\wh{n}^{-1}q_1q_2\Phi^\vep}_{L_{\ux_N}^2}\right| \nn\\
&\ph +\left|\ip{\wh{(\tau_2 n)}p_1p_2\Phi^\vep}{X_{\vep,12} \nabla_1 \wh{n}^{-1}q_1q_2\Phi^\vep}_{L_{\ux_N}^2}\right| \nn\\
&\eqqcolon \mathrm{Term}_{5,2,1} + \mathrm{Term}_{5,2,2}.
\end{align}

We first dispense with the easy case $\mathrm{Term}_{5,2,1}$. By Cauchy-Schwarz and using the operator norm estimates
\begin{equation}
\label{eq:grad_p1_X_ep}
\|\nabla_1 p_1\|_{L_{\ux_N}^2\rightarrow L_{\ux_N}^2}\leq \|\nabla\phi\|_{L_x^2} \quad \text{and} \quad \|X_{\vep,12}\|_{L_{\ux_N}^2\rightarrow L_{\ux_N}^2} \leq \frac{1}{2},
\end{equation}
we obtain that
\begin{equation}
\mathrm{Term}_{5,2,1} \leq \|\nabla\phi\|_{L_x^2} \|\wh{(\tau_2 n)}\Phi^\vep\|_{L_{\ux_N}^2} \|\wh{n}^{-1}q_1q_2\Phi^\vep\|_{L_{\ux_N}^2}.
\end{equation}
By arguing similarly as for the estimates \eqref{eq:tau1_n_bnd} and \eqref{eq:tau1_n_L2}, we find that
\begin{equation}
\label{eq:tau2_n_L2}
\|\wh{(\tau_2 n)}\Phi^\vep\|_{L_{\ux_N}^2} \lesssim \sqrt{\beta_\vep} + \frac{1}{\sqrt{N}},
\end{equation}
and by applying \cref{lem:mN_est}\ref{item:mN_q1q2}, we have that
\begin{equation}
\|\wh{n}^{-1} q_1 q_2\Phi^\vep\|_{L_{\ux_N}^2} \lesssim \sqrt{\beta_\vep}.
\end{equation}
Thus, we conclude that
\begin{equation}
\label{eq:con_T_521_fin}
\mathrm{Term}_{5,2,1} \lesssim \|\nabla\phi\|_{L_x^2}\paren*{\beta_\vep + \frac{1}{N}}.
\end{equation}

For the hard case $\mathrm{Term}_{5,2,2}$, we first use Cauchy-Schwarz and \eqref{eq:grad_p1_X_ep} to obtain
\begin{align}
\mathrm{Term}_{5,2,2} &\leq  \|\wh{(\tau_2 n)}p_1p_2\Phi^\vep\|_{L_{\ux_N}^2}\|\nabla_1 \wh{n}^{-1}q_1q_2\Phi^\vep\|_{L_{\ux_N}^2} \nn\\
&\lesssim \paren*{\sqrt{\beta_\vep}+N^{-1/2}}\|\nabla_1 \wh{n}^{-1}q_1q_2\Phi^\vep\|_{L_{\ux_N}^2},
\end{align}
where the second line follows from applying the estimate \eqref{eq:tau2_n_L2} to the first factor in the right-hand side of the first line. For the remaining factor $\|\nabla_1 \wh{n}^{-1}q_1q_2\Phi^\vep\|_{L_{\ux_N}^2}$, we write $\id=p_1+q_1$ and use the triangle inequality to obtain
\begin{equation}
\|\nabla_1 \wh{n}^{-1}q_1q_2\Phi^\vep\|_{L_{\ux_N}^2} \leq \|p_1\nabla_1\wh{n}^{-1}q_1q_2\Phi^\vep\|_{L_{\ux_N}^2} + \|q_1\nabla_1\wh{n}^{-1}q_1q_2\Phi^\vep\|_{L_{\ux_N}^2}.
\end{equation}
Since $\|p_1\nabla_1\|_{L_{\ux_N}^2\rightarrow L_{\ux_N}^2}\leq \|\nabla\phi\|_{L_x^2}$, it follows that
\begin{equation}
\|p_1 \nabla_1\wh{n}^{-1}q_1q_2\Phi^\vep\|_{L_{\ux_N}^2} \leq \|\nabla\phi\|_{L_x^2}\|\wh{n}^{-1}q_1q_2\Phi^\vep\|_{L_{\ux_N}^2} \lesssim \|\nabla\phi\|_{L_x^2}\sqrt{\beta_\vep},
\end{equation}
where the ultimate inequality follows from applying \cref{lem:mN_est}\ref{item:mN_q1q2} and $\wh{n}^2=\wh{m}$. Next, observe that by \cref{lem:shift}, $q_1\nabla_1\wh{n}^{-1}q_1=q_1\wh{n}^{-1}\nabla_1q_1$, which implies that
\begin{equation}
\|q_1\nabla_1 \wh{n}^{-1}q_1q_2\Phi^\vep\|_{L_{\ux_N}^2} \leq \|\wh{n}^{-1}\nabla_1 q_1q_2\Phi^\vep\|_{L_{\ux_N}^2} = \sqrt{\ip{\nabla_1 q_1\Phi^\vep}{q_2\wh{n}^{-2}\nabla_1 q_1\Phi^\vep}_{L_{\ux_N}^2}},
\end{equation}
where the ultimate equality follows from the fact that $q_2$ commutes with $\wh{n}^{-2}\nabla_1 q_1$ and $q_2^2=q_2$. By the symmetry of $\Phi^\vep$ with respect to permutation of particle labels and the operator identity
\begin{equation}
\frac{1}{N-1}\sum_{i=2}^N q_i\wh{n}^{-2} \leq \paren*{\frac{N}{N-1}}\wh{m}\wh{n}^{-2}\lesssim \id,
\end{equation}
which follows from \cref{rem:alpha_N}, we see that
\begin{align}
\ip{\nabla_1 q_1\Phi^\vep}{q_2\wh{n}^{-2}\nabla_1 q_1\Phi^\vep}_{L_{\ux_N}^2} &=\frac{1}{N-1}\sum_{i=2}^N \ip{\nabla_1 q_1\Phi^\vep}{q_i \wh{n}^{-2}\nabla_1q_1\Phi^\vep}_{L_{\ux_N}^2} \lesssim \|\nabla_1 q_1\Phi^\vep\|_{L_{\ux_N}^2}^2.
\end{align}
Hence,
\begin{equation}
\|q_1\nabla_1 \wh{n}^{-1}q_1q_2\Phi^\vep\|_{L_{\ux_N}^2} \lesssim \|\nabla_1 q_1\Phi^\vep\|_{L_{\ux_N}^2}.
\end{equation}
We therefore conclude from another application of Young's inequality that
\begin{equation}
\label{eq:con_T_522_fin}
\mathrm{Term}_{5,2,2} \lesssim \|\nabla\phi\|_{L_x^2}\paren*{\beta_\vep+N^{-1}} + \paren*{\sqrt{\beta_\vep}+N^{-1/2}}\|\nabla_1 q_1\Phi^\vep\|_{L_{\ux_N}^2}.
\end{equation}

Collecting the estimate \eqref{eq:con_T_521_fin} for $\mathrm{Term}_{5,2,1}$ and the estimate \eqref{eq:con_T_522_fin} for $\mathrm{Term}_{5,2,2}$, we find that
\begin{equation}
\label{eq:con_T_52_fin}
\mathrm{Term}_{5,2} \lesssim \|\nabla\phi\|_{L_x^2}\paren*{\beta_\vep + N^{-1}} + \paren*{\sqrt{\beta_\vep}+N^{-1/2}} \|\nabla_1 q_1\Phi^\vep\|_{L_{\ux_N}^2}.
\end{equation}
Together with the estimate \eqref{eq:con_T_51_fin} for $\mathrm{Term}_{5,1}$, we conclude that
\begin{equation}
\label{eq:con_T_5_fin}
\begin{split}
\left|\mathrm{Term}_5\right| &\lesssim \|\phi\|_{L_x^\infty}^2\paren*{\beta_\vep+N^{-1/2}} + \|\nabla\phi\|_{L_x^2}\paren*{\beta_\vep + N^{-1}} + \paren*{\sqrt{\beta_\vep}+N^{-1/2}}\|\nabla_1 q_1\Phi^\vep\|_{L_{\ux_N}^2}.
\end{split}
\end{equation}

\item[Estimate for $\mathrm{Term}_6$]
We want to show that $\mathrm{Term}_6\leq 0$. We assume here that $\kappa=-1$; otherwise, it is trivial that $\mathrm{Term}_6\leq 0$ and we can take $\kappa_2=0$. Integrating by parts and using Cauchy-Schwarz,
\begin{equation}
\begin{split}
\|V_{\vep,12}^{1/2}(\id-p_1p_2)\Phi^\vep\|_{L_{\ux_N}^2}^2 &\leq \|\nabla_1(\id-p_1p_2)\Phi^\vep\|_{L_{\ux_N}^2} \|(\id-p_1p_2)\Phi^\vep\|_{L_{\ux_N}^2},
\end{split}
\end{equation}
and by Young's inequality for products,
\begin{equation}
\frac{(N-1)}{2N} \|\nabla_1(\id-p_1p_2)\Phi^\vep\|_{L_{\ux_N}^2} \|(\id-p_1p_2)\Phi^\vep\|_{L_{\ux_N}^2} \leq \kappa_1 \|\nabla_1(\id-p_1p_2)\Phi^\vep\|_{L_{\ux_N}^2}^2 + \frac{(N-1)^2}{4N^2\kappa_1}\|(\id-p_1p_2)\Phi^\vep\|_{L_{\ux_N}^2}^2.
\end{equation}
We choose $\kappa_2 > 1/(2\kappa_1)$. Then,
\begin{align}
\mathrm{Term}_6 &= \frac{(N-1)}{2N}\|V_{\vep,12}^{1/2}(\id-p_1p_2)\Phi^\vep\|_{L_{\ux_N}^2}^2 - \kappa_1 \|\nabla_1(\id-p_1p_2)\Phi^\vep\|_{L_{\ux_N}^2}^2 -\kappa_2\|(\id-p_1p_2)\Phi^\vep\|_{L_{\ux_N}^2}^2 \nn\\
&\leq \paren*{\frac{(N-1)^2}{4N^2\kappa_1}-\kappa_2}\|(\id-p_1p_2)\Phi^\vep\|_{L_{\ux_N}^2}^2 \nn\\
&\leq0,
\end{align}
as desired.
\end{description}

Having estimated the terms $\mathrm{Term}_1,\ldots,\mathrm{Term}_6$, we can now complete the proof of the proposition. Combining estimate \eqref{eq:con_T_1_fin} for $\mathrm{Term}_1$, \eqref{eq:con_T_2_fin} for $\mathrm{Term}_2$, \eqref{eq:con_T_3_fin} for $\mathrm{Term}_3$, \eqref{eq:con_T_4_fin} for $\mathrm{Term}_4$, and \eqref{eq:con_T_5_fin} for $\mathrm{Term}_5$, we see that there exists an absolute constant $C>0$ such that
\begin{equation}
\label{eq:con_lb_pre}
\begin{split}
(1-\kappa_1)\|\nabla_1(\id-p_1p_2)\Phi^\vep\|_{L_{\ux_N}^2}^2 &\leq \paren*{E_\vep^{\Phi^\vep} - E^{\phi}} + C\paren*{\vep^{1/2}\|\phi\|_{C_x^{1/2}}^2 + \paren*{\sqrt{\beta_\vep}+N^{-1/2}}\|\nabla_1q_1\Phi^\vep\|_{L_{\ux_N}^2}}\\
&\ph + C\paren*{\paren*{\|\phi\|_{L_x^\infty}^2 + \|\phi\|_{H_x^2}}N^{-1/2} + \paren*{\|\nabla\phi\|_{L_x^2}+\|\phi\|_{L_x^4}^4}N^{-1}} \\
&\ph +C\beta_\vep\paren*{\|\nabla\phi\|_{L_x^2}^2 + \kappa_21_{\{-1\}}(\kappa) + \|\phi\|_{H_x^2} + \|\phi\|_{L_x^\infty}^2 + \|\nabla\phi\|_{L_x^2} + \|\phi\|_{L_x^4}^4}.
\end{split}
\end{equation}
Note that by using Sobolev embedding, the interpolation property of $H^s$ norms, and the normalization $\|\phi\|_{L_x^2}=1$, we can simplify the right-hand side of \eqref{eq:con_lb_pre} to
\begin{equation}
\label{eq:con_lb}
\begin{split}
(1-\kappa_1)\|\nabla_1(\id-p_1p_2)\Phi^\vep\|_{L_{\ux_N}^2}^2 &\leq \paren*{E_\vep^{\Phi^\vep} - E^{\phi}} + C\|\phi\|_{H_x^2}\paren*{N^{-1/2} + \beta_\vep}\\
&\ph+ C\paren*{\vep^{1/2}\|\phi\|_{C_x^{1/2}}^2 + \paren*{\sqrt{\beta_\vep}+N^{-1/2}}\|\nabla_1q_1\Phi^\vep\|_{L_{\ux_N}^2}},
\end{split}
\end{equation}
for some larger absolute constant $C>0$. To close the proof of the lemma, we want to obtain a lower bound for the left-hand side of \eqref{eq:con_lb} in terms $\|\nabla_1 q_1\Phi^\vep\|_{L_{\ux_N}^2}^2$. To this end, we note that
\begin{equation}
\id - p_1p_2 = p_1+q_1-p_1p_2 = p_1q_2+q_1,
\end{equation}
so that by the triangle inequality and the fact that $q_2$ commutes with $\nabla_1$,
\begin{equation}
\|\nabla_1 q_1\Phi^\vep\|_{L_{\ux_N}^2} \leq \|\nabla_1(\id-p_1p_2)\Phi^\vep\|_{L_{\ux_N}^2} + \|\nabla_1 p_1q_2\Phi^\vep\|_{L_{\ux_N}^2}.
\end{equation}
Since $\|\nabla_1p_1\|_{L_{\ux_N}^2\rightarrow L_{\ux_N}^2} \leq \|\nabla\phi\|_{L_x^2}$, it follows that
\begin{equation}
\|\nabla_1 p_1 q_2\Phi^\vep\|_{L_{\ux_N}^2} \leq \|\nabla\phi\|_{L_x^2}\|q_2\Phi^\vep\|_{L_{\ux_N}^2} \leq \|\nabla\phi\|_{L_x^2}\sqrt{\beta_\vep},
\end{equation}
where the ultimate inequality follows from \cref{rem:alpha_N} and $\alpha_\vep\leq \beta_\vep$. Therefore,
\begin{align}
\|\nabla_1(\id-p_1p_2)\Phi^\vep\|_{L_{\ux_N}^2}^2 \geq \paren*{\|\nabla_1q_1\Phi^\vep\|_{L_{\ux_N}^2} - \|\nabla\phi\|_{L_x^2}\sqrt{\beta_\vep}}^2 \geq \frac{3\|\nabla_1q_1\Phi^\vep\|_{L_{\ux_N}^2}^2}{4} - 15\|\nabla\phi\|_{L_x^2}^2\beta_\vep,
\end{align}
where the ultimate inequality follows from application of Young's inequality for products. Inserting the preceding lower bound into the inequality \eqref{eq:con_lb} and rearranging, we find that
\begin{equation}
\begin{split}
\frac{3}{4}\|\nabla_1 q_1\Phi\|_{L_{\ux_N}^2}^2 &\leq \frac{E_\vep^{\Phi^\vep}-E^\phi}{1-\kappa_1} +  \frac{C}{1-\kappa_1}\paren*{\vep^{1/2} \|\phi\|_{C_x^{1/2}}^2 + \paren*{\sqrt{\beta_\vep}+N^{-1/2}}\|\nabla_1 q_1\Phi^\vep\|_{L_{\ux_N}^2}}\\
&\ph+\frac{C\|\phi\|_{H_x^2}}{1-\kappa_1}\paren*{N^{-1/2} + \beta_\vep} + 15\|\nabla\phi\|_{L_x^2}^2\beta_\vep.
\end{split}
\end{equation}
By Young's inequality for products,
\begin{equation}
\frac{C}{1-\kappa_1}\|\nabla_1 q_1\Phi^\vep\|_{L_{\ux_N}^2}\paren*{\sqrt{\beta_\vep}+N^{-1/2}} \leq \frac{4{C}^2}{(1-\kappa_1)^2}\paren*{\beta_\vep+\frac{1}{N}} + \frac{1}{4}\|\nabla_1 q_1\Phi^\vep\|_{L_{\ux_N}^2}^2,
\end{equation}
The desired conclusion now follows after some algebra.
\end{proof}

\subsection{Proof of \cref{prop:beta_evol}}
\label{ssec:MR_beta_evol}
We now use the results of the previous subsections to send $\vep\rightarrow 0^+$ and obtain an inequality for $\beta_N$, thereby proving \cref{prop:beta_evol}.

\begin{proof}[Proof of \cref{prop:beta_evol}]
Applying \cref{prop:grad_q1} to factors $\|\nabla_1q_1\Phi_N^\vep\|_{L_{\ux_N}^2}$ appearing in the right-hand side of the inequality given by \cref{prop:beta_ep_ineq} and using the majorization $\|\phi\|_{H_x^1}^2\leq \|\phi\|_{H_x^2}$ together with a bit of algebra, we obtain the point-wise estimate
\begin{equation}
\label{eq:dot_beta_op}
\begin{split}
\dot{\beta}_{N,\vep} &\lesssim \frac{\|\phi\|_{L_x^\infty}^2}{N} + \vep^{1/2}\|\phi\|_{C_x^{1/2}}^2 + \frac{(1+\|\phi\|_{C_x^{1/2}}^2)\|\phi\|_{H_x^2}}{\sqrt{N}} + \frac{1}{N^\sigma} + \frac{\|\phi\|_{L_x^4}^4}{N^{(1-\sigma)/2}} + \frac{\|\phi\|_{L_x^\infty}^2}{N^{\delta/2}} + N^{\frac{2(\sigma-1)+\delta}{2}}\\
&\ph+ \paren*{1+\|\phi\|_{C_x^{1/2}}^2}\|\phi\|_{H_x^2}\beta_{N,\vep} +\paren*{1+\|\phi\|_{C_x^{1/2}}^2}\paren*{E_{N,\vep}^{\Phi_N^\vep}-E^\phi + \vep^{1/2}\|\phi\|_{C_x^{1/2}}^2}.
\end{split}
\end{equation}
We now optimize the choice of $\delta,\sigma\in (0,1)$ by requiring that
\begin{equation}
1-\sigma = \delta \quad \text{and} \quad \sigma = \frac{1-\sigma}{2},
\end{equation}
which, after some algebra, implies that $(\delta,\sigma)=(2/3,1/3)$. Inserting this choice of $(\delta,\sigma)$ into the right-hand side of inequality \eqref{eq:dot_beta_op} and using Sobolev embedding, the interpolation property of the $H^s$ norm, and the higher conservation laws of the NLS, we obtain
\begin{equation}
\begin{split}
\dot{\beta}_{N,\vep} &\lesssim \frac{\|\phi_0\|_{H^2}^2}{\sqrt{N}} + \frac{\|\phi_0\|_{H^1}^2}{N^{1/3}} + \|\phi_0\|_{H^2}^2\beta_{N,\vep} + \|\phi_0\|_{H^1}^2\paren*{E_{N,\vep}^{\Phi_{N}^\vep}-E^\phi + \vep^{1/2}\|\phi_0\|_{H^1}^2}.
\end{split}
\end{equation}

Integrating both sides of the preceding inequality over the interval $[0,t]$ and applying the fundamental theorem of calculus, we obtain that
\begin{equation}
\begin{split}
\beta_{N,\vep}(t) &\leq \beta_{N,\vep}(0) + C\|\phi_0\|_{H^2}^2\int_0^t ds \beta_{N,\vep}(s) \\
&\ph + Ct\paren*{\frac{\|\phi_0\|_{H^2}^2}{\sqrt{N}} + \frac{\|\phi_0\|_{H^1}^2}{N^{1/3}} + \|\phi_0\|_{H^1}^2\paren*{E_{N,\vep}^{\Phi_N^\vep}-E^\phi + \vep^{1/2}\|\phi_0\|_{H^1}^2}},
\end{split}
\end{equation}
where $C>0$ is an absolute constant. So applying the Gronwall-Bellman inequality, specifically \cite[Theorem 1.3.1]{Pachpatte1998}, we find that
\begin{equation}
\label{eq:beta_ep_gron}
\begin{split}
\beta_{N,\vep}(t) &\leq \paren*{\beta_{N,\vep}(0)+ Ct\paren*{\frac{\|\phi_0\|_{H^2}^2}{\sqrt{N}} + \frac{\|\phi_0\|_{H^1}^2}{N^{1/3}} + \|\phi_0\|_{H^1}^2\paren*{E_{N,\vep}^{\Phi_N^\vep}-E^\phi + \vep^{1/2}\|\phi_0\|_{H^1}^2}} } e^{Ct\|\phi_0\|_{H^2}^2}.
\end{split}
\end{equation}
We now send $\vep\rightarrow 0^+$ in both sides of inequality \eqref{eq:beta_ep_gron}. By \cref{lem:beta_ep_conv}, we have that $\beta_{N,\vep}(t)\rightarrow\beta_N(t)$ uniformly on compact intervals of time. Recalling the definition of $E_{N,\vep}^{\Phi_N^\vep}$, we see that
\begin{align}
E_{N,\vep}^{\Phi_N^\vep} &= \|\nabla_1\Phi_{N,0}\|_{L^2(\R^N)}^2 + \frac{\kappa(N-1)}{2N}\ip{\Phi_{N,0}}{V_{\vep,12}\Phi_{N,0}}_{L^2(\R^N)}.
\end{align}
It is straightforward to show that $V_{\vep,12}\Phi_{N,0} \rightarrow V_{12}\Phi_{N,0}$ in $H^{-1}(\R^N)$ as $\vep\rightarrow 0^+$. Therefore,
\begin{align}
\lim_{\vep\rightarrow 0^+} E_{N,\vep}^{\Phi_N^\vep}  &= E_N^{\Phi},
\end{align}
which completes the proof of the proposition.
\end{proof}

\section{Proofs of \Cref{thm:mainH} and \Cref{thm:mainL}}
\label{sec:proof_main}
In this last section, we show how \cref{prop:beta_evol} implies \cref{thm:mainH} and \cref{thm:mainL}. We first recall two technical lemmas from \cite{KP2010}.

\begin{lemma}[{\cite[Lemma 2.1]{KP2010}}]
\label{lem:EkE1}
Let $k\in\N$, and let $\{\gamma^{(j)}\}_{j=1}^k$ be a sequence of nonnegative, trace-class operators on $L_{sym}^2(\R^j)$, for $j\in\{1,\ldots,k\}$, with unit trace and such that
\begin{equation}
\Tr_{j+1}\gamma^{(j+1)} = \gamma^{(j)}, \qquad \forall j\in\{1,\ldots,k-1\}.
\end{equation}
Let $\varphi\in L^2(\R)$ satisfy $\|\varphi\|_{L^2}=1$. Then
\begin{equation}
1-\ip{\varphi^{\otimes k}}{\gamma^{(k)}\varphi^{\otimes k}} \leq k\paren*{1-\ip{\varphi}{\gamma^{(1)}\varphi}}.
\end{equation}
\end{lemma}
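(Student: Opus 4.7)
My plan is to reformulate both sides of the asserted inequality as traces involving the rank-one projector $p \coloneqq \ket*{\varphi}\bra*{\varphi}$ and then telescope. Writing $q\coloneqq \id - p$ and $p_j, q_j$ for the corresponding operators acting in the $j$-th variable (as in \eqref{eq:pj_qj_def}), the definition of $p$ gives the elementary identities
\begin{equation}
\ip{\varphi^{\otimes k}}{\gamma^{(k)}\varphi^{\otimes k}} = \Tr_{1,\ldots,k}\paren*{\gamma^{(k)} p^{\otimes k}} \quad\text{and}\quad \ip{\varphi}{\gamma^{(1)}\varphi} = \Tr_{1}\paren*{\gamma^{(1)} p},
\end{equation}
so, using $\Tr_{1,\ldots,k}\gamma^{(k)}=1$ and $\Tr_1\gamma^{(1)}=1$, the inequality becomes
\begin{equation}
\Tr_{1,\ldots,k}\paren*{\gamma^{(k)} (\id_k - p^{\otimes k})} \leq k \, \Tr_{1}\paren*{\gamma^{(1)} q}.
\end{equation}

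Next I would telescope the operator $\id_k - p^{\otimes k}$ using that $\id = p + q$:
\begin{equation}
\id_k - p^{\otimes k} = \sum_{j=1}^k p_1\cdots p_{j-1} q_j \otimes \id^{\otimes k-j},
\end{equation}
with the convention that the empty product is $\id$. Because the factors $p_1,\ldots,p_{j-1},q_j$ are mutually commuting orthogonal projectors on $L^2(\R^k)$, each summand is itself an orthogonal projector dominated by $q_j$, i.e.\ $0 \leq p_1\cdots p_{j-1} q_j \leq q_j$. Since $\gamma^{(k)}\geq 0$, this gives the operator inequality $\Tr(\gamma^{(k)} p_1\cdots p_{j-1} q_j) \leq \Tr(\gamma^{(k)} q_j)$ termwise.

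It then remains to identify $\Tr_{1,\ldots,k}(\gamma^{(k)} q_j)$ with $\Tr_1(\gamma^{(1)} q)$ for each $j\in\{1,\ldots,k\}$. Since $q_j$ acts trivially in the coordinates $x_{j+1},\ldots,x_k$, iterating the partial-trace hypothesis $\Tr_{j+1}\gamma^{(j+1)}=\gamma^{(j)}$ yields
\begin{equation}
\Tr_{1,\ldots,k}\paren*{\gamma^{(k)} q_j} = \Tr_{1,\ldots,j}\paren*{\gamma^{(j)}(\id^{\otimes j-1}\otimes q)},
\end{equation}
and the bosonic symmetry of $\gamma^{(j)}$ (it lives on $L_{sym}^2(\R^j)$) together with another application of the partial-trace hypothesis reduces this to $\Tr_1(\gamma^{(1)} q) = 1 - \ip{\varphi}{\gamma^{(1)}\varphi}$. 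Summing over $j\in\{1,\ldots,k\}$ then produces the factor of $k$ and closes the estimate. No step looks particularly delicate; the only place where I would need to be careful is the commutativity check justifying that $p_1\cdots p_{j-1} q_j$ is a projector, but this is immediate from the tensor factorization.
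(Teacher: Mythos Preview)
Your argument is correct and is essentially the standard telescoping proof. The paper itself does not give a proof of this lemma: it merely recalls the statement from \cite{KP2010} and uses it as a black box, so there is no paper proof to compare against. One minor notational slip: since $p_1,\ldots,p_{j-1},q_j$ are already defined as operators on $L^2(\R^k)$ (with identities in the remaining slots), the extra factor $\otimes\,\id^{\otimes k-j}$ in your telescoping identity is redundant; but the intended identity $\id_k - p^{\otimes k} = \sum_{j=1}^k p_1\cdots p_{j-1}q_j$ is of course right.
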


\begin{lemma}[{\cite[Lemma 2.3]{KP2010}}]
\label{lem:EkRk}
Let $k\in\N$, and let $\gamma^{(k)}$ be a nonnegative self-adjoint trace-class operator on $L_{sym}^2(\R^k)$ with unit trace (i.e. a density matrix). Let $\varphi\in L^2(\R)$ with $\|\varphi\|_{L^2}=1$. Then
\begin{align}
1-\ip{\varphi^{\otimes k}}{\gamma^{(k)}\varphi^{\otimes k}} &\leq \Tr_{1,\ldots,k}\left|\gamma^{(k)}-\ket*{\varphi^{\otimes k}}\bra*{\varphi^{\otimes k}}\right| \leq \sqrt{8\paren*{1-\ip{\varphi^{\otimes k}}{\gamma^{(k)}\varphi^{\otimes k}}}}.
\end{align}
\end{lemma}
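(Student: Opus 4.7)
The plan is to set $P \coloneqq \ket*{\varphi^{\otimes k}}\bra*{\varphi^{\otimes k}}$ and $A \coloneqq \gamma^{(k)} - P$, and to analyze both inequalities through properties of the self-adjoint trace-class operator $A$. Since $\Tr P = \Tr \gamma^{(k)} = 1$, we have $\Tr A = 0$, and
\begin{equation}
1 - \ip{\varphi^{\otimes k}}{\gamma^{(k)}\varphi^{\otimes k}} = \Tr(P^2) - \Tr(P\gamma^{(k)}) = -\Tr(PA),
\end{equation}
so both bounds reduce to comparing $-\Tr(PA)$ with $\Tr\abs{A}$.

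For the lower bound, the trace-norm/operator-norm duality immediately gives $|\Tr(PA)| \leq \|P\|_{\mathrm{op}} \Tr\abs{A} = \Tr\abs{A}$, since $P$ is a rank-one orthogonal projector and hence has unit operator norm. This yields $1 - \ip{\varphi^{\otimes k}}{\gamma^{(k)}\varphi^{\otimes k}} \leq \Tr\abs{A}$.

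For the upper bound, the crucial observation is that the rank-one structure of $P$ forces the negative part of $A$ to have rank at most one. Indeed, on any two-dimensional subspace $W \subset L_{sym}^2(\R^k)$, one can find a unit vector $v \in W$ orthogonal to $\varphi^{\otimes k}$, for which $\ip{v}{Av} = \ip{v}{\gamma^{(k)}v} \geq 0$; thus the quadratic form associated with $A$ is not negative definite on any two-dimensional subspace, which is equivalent to saying that $A$ admits at most a single strictly negative eigenvalue, call it $\lambda_-$. Writing $A = A_+ - A_-$ for the Jordan decomposition and using $\Tr A_+ = \Tr A_-$, we obtain
\begin{equation}
\Tr\abs{A} = 2\Tr A_- = 2|\lambda_-| \leq 2\|A\|_{\mathrm{op}} \leq 2\|A\|_{\mathrm{HS}}.
\end{equation}

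It remains to estimate the Hilbert-Schmidt norm. Expanding
\begin{equation}
\|A\|_{\mathrm{HS}}^2 = \Tr(\gamma^{(k)2}) - 2\ip{\varphi^{\otimes k}}{\gamma^{(k)}\varphi^{\otimes k}} + 1
\end{equation}
and using that the eigenvalues of the density matrix $\gamma^{(k)}$ lie in $[0,1]$ (so $\Tr(\gamma^{(k)2}) \leq \Tr \gamma^{(k)} = 1$), we find $\|A\|_{\mathrm{HS}}^2 \leq 2(1 - \ip{\varphi^{\otimes k}}{\gamma^{(k)}\varphi^{\otimes k}})$, which combines with the previous step to yield $\Tr\abs{A} \leq \sqrt{8(1-\ip{\varphi^{\otimes k}}{\gamma^{(k)}\varphi^{\otimes k}})}$. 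The only nontrivial ingredient is the rank-one spectral argument above; everything else is routine manipulation of the trace, Hilbert-Schmidt, and operator norms.
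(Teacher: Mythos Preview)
Your proof is correct. The paper does not supply its own proof of this lemma; it simply cites \cite[Lemma 2.3]{KP2010} and moves on, so there is nothing in the paper to compare against directly. Your argument is essentially the standard one: the lower bound is trace-norm/operator-norm duality, and the upper bound hinges on the observation that $A=\gamma^{(k)}-P$ has at most one negative eigenvalue because $P$ has rank one, followed by the chain $\Tr|A|=2|\lambda_-|\le 2\|A\|_{\mathrm{HS}}$ and the elementary bound $\Tr(\gamma^{(k)2})\le 1$. All steps are sound.
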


\cref{thm:mainH} is now an immediate consequence of \Cref{lem:EkE1,lem:EkRk} together with \cref{prop:beta_evol} and a little bookkeeping. Therefore, we omit the details. Now following the outline discussed in \cref{ssec:intro_RM}, we upgrade \cref{thm:mainH} to an estimate whose right-hand side only requires the $L^2(\R)$ norm of $\phi_0$, thereby proving \cref{thm:mainL}

\begin{proof}[Proof of \cref{thm:mainL}]
Fix $\eta >0$, and suppose that $\Phi_{N,0} = (P_{\leq (\log N)^\eta}\phi_0)^{\otimes N}/\|P_{\leq (\log N)^\eta}\phi_0\|_{L^2(\R)}^N$. Let $\rho: [1,\infty] \rightarrow [0,\infty)$ be a rate function such that $\rho(r)\rightarrow \infty$ as $r\rightarrow\infty$. We will choose $\rho$ momentarily. Let $\phi_{N}$ be the NLS solution with initial datum $\phi_{N,0}$ as in \eqref{eq:nls_moll}. By the aforementioned boundedness of $H^k$ norms for the 1D cubic NLS and Plancherel's theorem,
\begin{equation}
\label{eq:phi_N_H2}
\|\phi_{N}\|_{L_t^\infty H_x^s(\R\times \R)} \lesssim \|\phi_{N,0}\|_{H^s(\R)} \lesssim_s \rho(N)^s\frac{\|P_{\leq \rho(N)}\phi_0\|_{L^2(\R)}}{\|P_{\leq \rho(N)}\phi_0\|_{L^2(\R)}} = \rho(N)^s.
\end{equation}
So applying \cref{thm:mainH} with $\phi$ replaced by $\phi_N$, we find that
\begin{equation}
\label{eq:H2_pre_rho}
\begin{split}
&\sup_{-t\leq s\leq t} \Tr_{1,\ldots,k}\left|\gamma_N^{(k)}(s)-\ket*{\phi_N^{\otimes k}}\bra*{\phi_N^{\otimes k}}(s)\right| \\
&\lesssim_k \paren*{\beta_{N}(\Phi_{N,0},\phi_{N,0})^{1/2} + t^{1/2}\rho(N)\paren*{\frac{1}{N^{1/6}} + \frac{\rho(N)}{N^{1/4}} + |E_N^{\Phi}-E^{\phi_N}|^{1/2}}}e^{C\rho(N)^4 t} .
\end{split}
\end{equation}
Now,
\begin{equation}
\begin{split}
E_N^{\Phi_N} - E^{\phi_N} &=  \frac{\|\nabla P_{\leq (\log N)^\eta}\phi_0\|_{L^2(\R)}^2}{\|P_{\leq (\log N)^\eta}\phi_0\|_{L^2(\R)}^2} - \frac{\|\nabla P_{\leq \rho(N)}\phi_0\|_{L^2(\R)}^2}{\|P_{\leq \rho(N)}\phi_0\|_{L^2(\R)}^2}\\
&\ph + \frac{\kappa}{2}\paren*{\frac{(N-1)\|P_{\leq (\log N)^\eta}\phi_0\|_{L^4(\R)}^4}{N\|P_{\leq (\log N)^\eta}\phi_0\|_{L^2(\R)}^4} - \frac{\|P_{\leq \rho(N)}\phi_0\|_{L^4(\R)}^4}{\|P_{\leq \rho(N)}\phi_0\|_{L^2(\R)}^4}}.
\end{split}
\end{equation}
To cancel the kinetic energy, we choose $\rho(N) = (\log N)^\eta$, which also implies that $\beta_N(\Phi_{N,0},\phi_{N,0})=0$. Moreover, Bernstein's lemma implies
\begin{equation}
|E_N^{\Phi_N}-E^{\phi_N}| \lesssim \frac{(\log N)^\eta}{N},
\end{equation}
provided $N$ is sufficiently large. After some algebra, we see that the right-hand side of inequality \eqref{eq:H2_pre_rho} is $\lesssim$
\begin{equation}
\label{eq:H2_post_rho}
\begin{split}
t^{1/2}\paren*{\frac{(\log N)^{\eta}}{N^{1/6}} + \frac{(\log N)^{2\eta}}{N^{1/4}} + \frac{(\log N)^{3\eta/2}}{N^{1/2}}}e^{Ct(\log N)^{4\eta}} \lesssim \frac{t^{1/2}(\log N)^\eta e^{Ct(\log N)^{4\eta}}}{N^{1/6}}.
\end{split}
\end{equation}
If $0<\eta<1/4$, then $(\ln x)^{4\eta} \ll \ln x$ as $x\rightarrow\infty$. Since $\ln x \ll x$ as $x\rightarrow\infty$, it follows that the expression \eqref{eq:H2_post_rho} tends to zero as $N\rightarrow\infty$, locally uniformly in $t$.

Next, we use \cref{prop:nls_gwp} to obtain that
\begin{align}
\|\phi_N-\phi\|_{L_s^\infty L_x^2([-t,t]\times\R)} &\lesssim \|\phi_{N,0}-\phi_0\|_{L^2(\R)} e^{Ct^{1/2}(\|\phi_N\|_{S^0([-t,t]\times\R)}^2 + \|\phi\|_{S^0([-t,t]\times\R)}^2)},
\end{align}
for some absolute constant $C>0$. Applying the growth bound \eqref{eq:Strich_bbd} to each of the Strichartz norms in the exponent, we find that
\begin{equation}
\label{eq:phiN_dep}
\|\phi_N-\phi\|_{L_s^\infty L_x^2([-t,t]\times\R)} \lesssim \|\phi_{N,0}-\phi_0\|_{L^2(\R)} e^{C't^{5/2}} \lesssim \|P_{>(\log N)^{\eta}}\phi_0\|_{L^2(\R)}e^{C't^{5/2}},
\end{equation}
where $C'\geq C$. By using H\"older's inequality and the estimate \eqref{eq:phiN_dep}, we obtain that
\begin{equation}
\label{eq:phiN_err}
\sup_{-t\leq s\leq t} \Tr\left|\ket*{\phi_N}\bra*{\phi_N}(s)-\ket*{\phi}\bra*{\phi}(s)\right| \lesssim \|P_{>(\log N)^{\eta}}\phi_0\|_{L^2(\R)} e^{C't^{5/2}},
\end{equation}
which evidently tends to zero as $N\rightarrow\infty$. \cref{lem:EkE1} and \cref{lem:EkRk} then yield an estimate for the analogous $k$-particle density matrices.

Finally, we conclude the proof of \cref{thm:mainL} by applying the triangle inequality as in \eqref{eq:TI_intro} and then using the bounds \eqref{eq:H2_post_rho} and \eqref{eq:phiN_err} for the first and second terms on the right-hand side, respectively.
\end{proof}

\bibliographystyle{siam}
\bibliography{GPHam}

\begin{thebibliography}{10}

\bibitem{ABGT2004}
{\sc R.~Adami, C.~Bardos, F.~Golse, and A.~Teta}, {\em Towards a rigorous
  derivation of the cubic {NLSE} in dimension one}, Asymptotic Analysis, 40
  (2004), pp.~93--108.

\bibitem{AGT2007}
{\sc R.~Adami, F.~Golse, and A.~Teta}, {\em Rigorous derivation of the cubic
  {NLS} in dimension one}, Journal of Statistical Physics, 127 (2007),
  pp.~1193--1220.

\bibitem{AmBre2012}
{\sc Z.~Ammari and S.~Breteaux}, {\em Propagation of chaos for many-boson
  systems in one dimension with a point pair-interaction}, Asymptot. Anal., 76
  (2012), pp.~123--170.

\bibitem{BGM2000}
{\sc C.~Bardos, F.~Golse, and N.~J. Mauser}, {\em Weak coupling limit of the
  {$N$}-particle {S}chr\"{o}dinger equation}, vol.~7, 2000, pp.~275--293.
\newblock Cathleen Morawetz: a great mathematician.

\bibitem{Bethe1931}
{\sc H.~Bethe}, {\em Zur theorie der metalle}, Zeitschrift f{\"u}r Physik, 71
  (1931), pp.~205--226.

\bibitem{Bossmann2019}
{\sc L.~Bo{\ss}mann}, {\em Derivation of the 1d nonlinear {S}chr\"{o}dinger
  equation from the 3d quantum many-body dynamics of strongly confined bosons},
  J. Math. Phys., 60 (2019), pp.~031902, 30.

\bibitem{BPPS2020}
{\sc L.~Bo{\ss}mann, N.~Pavlovi{\'{c}}, P.~Pickl, and A.~Soffer}, {\em Higher
  order corrections to the mean-field description of the dynamics of
  interacting {B}osons}, J. Stat. Phys., 178 (2020), pp.~1362--1396.

\bibitem{BT2019}
{\sc L.~Bo{\ss}mann and S.~Teufel}, {\em Derivation of the 1d
  {G}ross-{P}itaevskii equation from the 3d quantum many-body dynamics of
  strongly confined bosons}, Ann. Henri Poincar\'{e}, 20 (2019),
  pp.~1003--1049.

\bibitem{CHPS2015}
{\sc T.~Chen, C.~Hainzl, N.~Pavlovi\'{c}, and R.~Seiringer}, {\em Unconditional
  uniqueness for the cubic {G}ross-{P}itaevskii hierarchy via quantum de
  {F}inetti}, Comm. Pure Appl. Math., 68 (2015), pp.~1845--1884.

\bibitem{CP2014}
{\sc T.~Chen and N.~Pavlovi\'{c}}, {\em Derivation of the cubic {NLS} and
  {G}ross-{P}itaevskii hierarchy from manybody dynamics in {$d=3$} based on
  spacetime norms}, Ann. Henri Poincar\'{e}, 15 (2014), pp.~543--588.

\bibitem{CH2016_1D}
{\sc X.~Chen and J.~Holmer}, {\em Focusing quantum many-body dynamics: the
  rigorous derivation of the 1{D} focusing cubic nonlinear {S}chr\"{o}dinger
  equation}, Arch. Ration. Mech. Anal., 221 (2016), pp.~631--676.

\bibitem{CH2019}
\leavevmode\vrule height 2pt depth -1.6pt width 23pt, {\em The derivation of
  the {$\Bbb T^3$} energy-critical {NLS} from quantum many-body dynamics},
  Invent. Math., 217 (2019), pp.~433--547.

\bibitem{Chong2016}
{\sc J.~J.~W. Chong}, {\em Dynamics of large boson systems with attractive
  interaction and a derivation of the cubic focusing nls in {$\mathbb{R}^3$}},
  arXiv preprint arXiv:1608.01615,  (2016).

\bibitem{DHR2001}
{\sc S.~{Dettmer}, D.~{Hellweg}, P.~{Ryytty}, J.~J. {Arlt}, W.~{Ertmer},
  K.~{Sengstock}, D.~S. {Petrov}, G.~V. {Shlyapnikov}, H.~{Kreutzmann},
  L.~{Santos}, and M.~{Lewenstein}}, {\em {Observation of Phase Fluctuations in
  Elongated Bose-Einstein Condensates}}, \prl, 87 (2001), p.~160406.

\bibitem{PGS2004}
{\sc {D.S. Petrov}, {D.M. Gangardt}, and {G.V. Shlyapnikov}}, {\em
  Low-dimensional trapped gases}, J. Phys. IV France, 116 (2004), pp.~5--44.

\bibitem{DLO2001}
{\sc V.~{Dunjko}, V.~{Lorent}, and M.~{Olshanii}}, {\em {Bosons in Cigar-Shaped
  Traps: Thomas-Fermi Regime, Tonks-Girardeau Regime, and In Between}}, \prl,
  86 (2001), pp.~5413--5416.

\bibitem{ESY2006}
{\sc L.~Erd\"{o}s, B.~Schlein, and H.-T. Yau}, {\em Derivation of the
  {G}ross-{P}itaevskii hierarchy for the dynamics of {B}ose-{E}instein
  condensate}, Comm. Pure Appl. Math., 59 (2006), pp.~1659--1741.

\bibitem{ESY2007}
\leavevmode\vrule height 2pt depth -1.6pt width 23pt, {\em Derivation of the
  cubic non-linear {S}chr\"{o}dinger equation from quantum dynamics of
  many-body systems}, Invent. Math., 167 (2007), pp.~515--614.

\bibitem{ESY2009}
\leavevmode\vrule height 2pt depth -1.6pt width 23pt, {\em Rigorous derivation
  of the {G}ross-{P}itaevskii equation with a large interaction potential}, J.
  Amer. Math. Soc., 22 (2009), pp.~1099--1156.

\bibitem{ESY2010}
\leavevmode\vrule height 2pt depth -1.6pt width 23pt, {\em Derivation of the
  {G}ross-{P}itaevskii equation for the dynamics of {B}ose-{E}instein
  condensate}, Ann. of Math. (2), 172 (2010), pp.~291--370.

\bibitem{EY2001}
{\sc L.~Erd\"{o}s and H.-T. Yau}, {\em Derivation of the nonlinear
  {S}chr\"{o}dinger equation from a many body {C}oulomb system}, Adv. Theor.
  Math. Phys., 5 (2001), pp.~1169--1205.

\bibitem{ETSAWB2006}
{\sc J.~{Esteve}, J.~B. {Trebbia}, T.~{Schumm}, A.~{Aspect}, C.~I. {Westbrook},
  and I.~{Bouchoule}}, {\em {Observations of Density Fluctuations in an
  Elongated Bose Gas: Ideal Gas and Quasicondensate Regimes}}, \prl, 96 (2006),
  p.~130403.

\bibitem{FT07}
{\sc L.~D. Faddeev and L.~A. Takhtajan}, {\em {Hamiltonian methods in the
  theory of solitons}}, Classics in Mathematics, Springer, Berlin, english~ed.,
  2007.

\bibitem{FTY2000}
{\sc J.~Fr\"{o}hlich, T.-P. Tsai, and H.-T. Yau}, {\em On a classical limit of
  quantum theory and the non-linear {H}artree equation}, no.~Special Volume,
  Part I, 2000, pp.~57--78.
\newblock GAFA 2000 (Tel Aviv, 1999).

\bibitem{Gaudin2014}
{\sc M.~Gaudin}, {\em The {B}ethe wavefunction}, Cambridge University Press,
  New York, 2014.
\newblock Translated from the 1983 French original by Jean-S\'{e}bastien Caux.

\bibitem{GV1979_I}
{\sc J.~Ginibre and G.~Velo}, {\em The classical field limit of scattering
  theory for nonrelativistic many-boson systems. {I}}, Comm. Math. Phys., 66
  (1979), pp.~37--76.

\bibitem{GV1979_II}
{\sc J.~Ginibre and G.~Velo}, {\em {The classical field limit of scattering
  theory for nonrelativistic many-boson systems. {II}}}, Comm. Math. Phys., 68
  (1979), pp.~45--68.

\bibitem{GM2013}
{\sc M.~Grillakis and M.~Machedon}, {\em Pair excitations and the mean field
  approximation of interacting bosons, {I}}, Comm. Math. Phys., 324 (2013),
  pp.~601--636.

\bibitem{GM2017}
\leavevmode\vrule height 2pt depth -1.6pt width 23pt, {\em Pair excitations and
  the mean field approximation of interacting bosons, {II}}, Comm. Partial
  Differential Equations, 42 (2017), pp.~24--67.

\bibitem{GMM2011}
{\sc M.~Grillakis, M.~Machedon, and D.~Margetis}, {\em Second-order corrections
  to mean field evolution of weakly interacting bosons. {II}}, Adv. Math., 228
  (2011), pp.~1788--1815.

\bibitem{GMM2010}
{\sc M.~G. Grillakis, M.~Machedon, and D.~Margetis}, {\em Second-order
  corrections to mean field evolution of weakly interacting bosons. {I}}, Comm.
  Math. Phys., 294 (2010), pp.~273--301.

\bibitem{HKV2020}
{\sc B.~Harrop-Griffiths, R.~Killip, and M.~Visan}, {\em Sharp well-posedness
  for the cubic {NLS} and {mKdV} in {$H^s(\mathbb{R})$}}, arXiv preprint
  arXiv:2003.05011,  (2020).

\bibitem{Hepp1974}
{\sc K.~Hepp}, {\em The classical limit for quantum mechanical correlation
  functions}, Comm. Math. Phys., 35 (1974), pp.~265--277.

\bibitem{JK2002}
{\sc A.~D. {Jackson} and G.~M. {Kavoulakis}}, {\em {Lieb Mode in a
  Quasi-One-Dimensional Bose-Einstein Condensate of Atoms}}, \prl, 89 (2002),
  p.~070403.

\bibitem{JLP2019}
{\sc M.~Jeblick, N.~Leopold, and P.~Pickl}, {\em Derivation of the time
  dependent {G}ross-{P}itaevskii equation in two dimensions}, Comm. Math.
  Phys., 372 (2019), pp.~1--69.

\bibitem{JP2018}
{\sc M.~Jeblick and P.~Pickl}, {\em Derivation of the time dependent two
  dimensional focusing {NLS} equation}, J. Stat. Phys., 172 (2018),
  pp.~1398--1426.

\bibitem{KVZ2018}
{\sc R.~Killip, M.~Vi\c{s}an, and X.~Zhang}, {\em Low regularity conservation
  laws for integrable {PDE}}, Geom. Funct. Anal., 28 (2018), pp.~1062--1090.

\bibitem{KSS2011}
{\sc K.~Kirkpatrick, B.~Schlein, and G.~Staffilani}, {\em Derivation of the
  two-dimensional nonlinear {S}chr\"{o}dinger equation from many body quantum
  dynamics}, Amer. J. Math., 133 (2011), pp.~91--130.

\bibitem{KM2008}
{\sc S.~Klainerman and M.~Machedon}, {\em On the uniqueness of solutions to the
  {G}ross-{P}itaevskii hierarchy}, Comm. Math. Phys., 279 (2008), pp.~169--185.

\bibitem{KP2010}
{\sc A.~Knowles and P.~Pickl}, {\em Mean-field dynamics: singular potentials
  and rate of convergence}, Comm. Math. Phys., 298 (2010), pp.~101--138.

\bibitem{KT2018}
{\sc H.~Koch and D.~Tataru}, {\em Conserved energies for the cubic nonlinear
  {S}chr\"{o}dinger equation in one dimension}, Duke Math. J., 167 (2018),
  pp.~3207--3313.

\bibitem{LL1963_II}
{\sc E.~H. Lieb}, {\em Exact analysis of an interacting {B}ose gas. {II}. {T}he
  excitation spectrum}, Phys. Rev. (2), 130 (1963), pp.~1616--1624.

\bibitem{LL1963_I}
{\sc E.~H. Lieb and W.~Liniger}, {\em Exact analysis of an interacting {B}ose
  gas. {I}. {T}he general solution and the ground state}, Phys. Rev. (2), 130
  (1963), pp.~1605--1616.

\bibitem{LSY2003}
{\sc E.~H. {Lieb}, R.~{Seiringer}, and J.~{Yngvason}}, {\em {One-Dimensional
  Bosons in Three-Dimensional Traps}}, \prl, 91 (2003), p.~150401.

\bibitem{LSY2004}
{\sc E.~H. Lieb, R.~Seiringer, and J.~Yngvason}, {\em One-dimensional behavior
  of dilute, trapped {B}ose gases}, Comm. Math. Phys., 244 (2004),
  pp.~347--393.

\bibitem{MNPRS2_2019}
{\sc D.~Mendelson, A.~R. Nahmod, N.~Pavlovi{\'c}, M.~Rosenzweig, and
  G.~Staffilani}, {\em Poisson commuting energies for a system of infinitely
  many bosons}, arXiv preprint arXiv:1910.06959,  (2019).

\bibitem{Mitrouskas2017}
{\sc D.~Mitrouskas}, {\em Derivation of mean field equations and their
  next-order corrections: bosons and fermions}, PhD thesis, LMU {M\"unchen},
  2017.

\bibitem{NN2019}
{\sc P.~T. Nam and M.~Napi\'{o}rkowski}, {\em Norm approximation for many-body
  quantum dynamics: focusing case in low dimensions}, Adv. Math., 350 (2019),
  pp.~547--587.

\bibitem{Olshanii1998}
{\sc M.~{Olshanii}}, {\em {Atomic Scattering in the Presence of an External
  Confinement and a Gas of Impenetrable Bosons}}, \prl, 81 (1998),
  pp.~938--941.

\bibitem{OD2003}
{\sc M.~{Olshanii} and V.~{Dunjko}}, {\em {Short-Distance Correlation
  Properties of the Lieb-Liniger System and Momentum Distributions of Trapped
  One-Dimensional Atomic Gases}}, \prl, 91 (2003), p.~090401.

\bibitem{Pachpatte1998}
{\sc B.~G. Pachpatte}, {\em Inequalities for differential and integral
  equations}, vol.~197 of Mathematics in Science and Engineering, Academic
  Press, Inc., San Diego, CA, 1998.

\bibitem{PSW2000}
{\sc D.~S. {Petrov}, G.~V. {Shlyapnikov}, and J.~T.~M. {Walraven}}, {\em
  {Regimes of Quantum Degeneracy in Trapped 1D Gases}}, \prl, 85 (2000),
  pp.~3745--3749.

\bibitem{Pickl2010}
{\sc P.~Pickl}, {\em Derivation of the time dependent {G}ross-{P}itaevskii
  equation without positivity condition on the interaction}, J. Stat. Phys.,
  140 (2010), pp.~76--89.

\bibitem{Pickl2011}
\leavevmode\vrule height 2pt depth -1.6pt width 23pt, {\em A simple derivation
  of mean field limits for quantum systems}, Lett. Math. Phys., 97 (2011),
  pp.~151--164.

\bibitem{Pickl2015}
\leavevmode\vrule height 2pt depth -1.6pt width 23pt, {\em Derivation of the
  time dependent {G}ross-{P}itaevskii equation with external fields}, Rev.
  Math. Phys., 27 (2015), pp.~1550003, 45.

\bibitem{RSII}
{\sc M.~Reed and B.~Simon}, {\em Methods of modern mathematical physics. {II}.
  {F}ourier analysis, self-adjointness}, Academic Press [Harcourt Brace
  Jovanovich, Publishers], New York-London, 1975.

\bibitem{RGTHBA2003}
{\sc S.~{Richard}, F.~{Gerbier}, J.~H. {Thywissen}, M.~{Hugbart}, P.~{Bouyer},
  and A.~{Aspect}}, {\em {Momentum Spectroscopy of 1D Phase Fluctuations in
  Bose-Einstein Condensates}}, \prl, 91 (2003), p.~010405.

\bibitem{RS2009}
{\sc I.~Rodnianski and B.~Schlein}, {\em {Quantum Fluctuations and Rate of
  Convergence Towards Mean Field Dynamics}}, Commun. Math. Phys., 291 (2009),
  pp.~31--61.

\bibitem{Rosenzweig2020_MFPV}
{\sc M.~Rosenzweig}, {\em Mean-field convergence of point vortices without
  regularity}, arXiv preprint arXiv:2004.04140,  (2020).

\bibitem{Rosenzweig2020_MFCou}
\leavevmode\vrule height 2pt depth -1.6pt width 23pt, {\em Mean-field
  convergence of systems of particles with {Coulomb} interactions in higher
  dimensions without regularity}, In preparation,  (2020).

\bibitem{Rougerie2020}
{\sc N.~Rougerie}, {\em Scaling limits of bosonic ground states, from many-body
  to nonlinear {Schr\"odinger}}, arXiv preprint arXiv:2002.02678,  (2020).

\bibitem{schlein_clay}
{\sc B.~Schlein}, {\em {Derivation of effective evolution equations from
  microscopic quantum dynamics}}, in Evol. equations, vol.~17 of Clay Math.
  Proc., Amer. Math. Soc., Providence, RI, 2013, pp.~511--572.

\bibitem{SY2008}
{\sc R.~Seiringer and J.~Yin}, {\em The {L}ieb-{L}iniger model as a limit of
  dilute bosons in three dimensions}, Comm. Math. Phys., 284 (2008),
  pp.~459--479.

\bibitem{Sohinger2015}
{\sc V.~Sohinger}, {\em A rigorous derivation of the defocusing cubic nonlinear
  {S}chr\"{o}dinger equation on {$\Bbb{T}^3$} from the dynamics of many-body
  quantum systems}, Ann. Inst. H. Poincar\'{e} Anal. Non Lin\'{e}aire, 32
  (2015), pp.~1337--1365.

\bibitem{Spohn80}
{\sc H.~Spohn}, {\em {Kinetic equations from {H}amiltonian dynamics:
  {M}arkovian limits}}, Rev. Mod. Phys., 52 (1980), pp.~569--615.

\bibitem{Tao2006}
{\sc T.~Tao}, {\em Nonlinear dispersive equations}, vol.~106 of CBMS Regional
  Conference Series in Mathematics, Published for the Conference Board of the
  Mathematical Sciences, Washington, DC; by the American Mathematical Society,
  Providence, RI, 2006.
\newblock Local and global analysis.

\bibitem{TOHPRP2004}
{\sc B.~L. {Tolra}, K.~M. {O'Hara}, J.~H. {Huckans}, W.~D. {Phillips}, S.~L.
  {Rolston}, and J.~V. {Porto}}, {\em {Observation of Reduced Three-Body
  Recombination in a Correlated 1D Degenerate Bose Gas}}, \prl, 92 (2004),
  p.~190401.

\bibitem{ZS72}
{\sc V.~E. Zakharov and A.~B. Shabat}, {\em {Exact theory of two-dimensional
  self-focusing and one-dimensional self-modulation of waves in nonlinear
  media}}, {\v{Z}}. {\`{E}}ksper. Teoret. Fiz., 61 (1971), pp.~118--134.

\end{thebibliography}
\end{document}